\def\doi{8(4:8)2012}
\theoremstyle{plain}
\def\eg{{\em e.g.}}
\begin{document}

\title[Off-line test selection with test purposes 
for non-det. timed automata]{Off-line Test Selection with Test Purposes 
for Non-Deterministic Timed Automata\rsuper*}
\thanks{This work was partly funded by the French project TESTEC (ANR-07-TLOG-022).}

\author[N.~Bertrand]{Nathalie Bertrand\rsuper a}	
\address{{\lsuper{a,b}}Inria Rennes - Bretagne Atlantique, Rennes, France}
\email{\{nathalie.bertrand, thierry.jeron\}@inria.fr}  

\author[T.~J\'eron]{Thierry J\'eron\rsuper b}	
\address{\vskip-6 pt}	

\author[A.~Stainer]{Am\'elie Stainer\rsuper c}	
\address{{\lsuper c}University of Rennes 1,  Rennes, France}	
\email{amelie.stainer@inria.fr}  

\author[M.~Krichen]{Moez Krichen\rsuper d}	
\address{{\lsuper d}University of Sfax, Tunisia}	
\email{moez.krichen@redcad.org}  



\keywords{Conformance testing, timed automata, partial observability, urgency,
approximate determinization, game, test purpose}
\subjclass{D.2.4, D.2.5, D.4.7, F.1.1}
\titlecomment{{\lsuper*}This article is based on the material of the TACAS conference paper~\cite{BertrandJeronStainerKrichen-TACASS2011}}


\begin{abstract} 
  \noindent This article proposes novel off-line test generation
  techniques from non-deter\-ministic timed automata with inputs and
  outputs (TAIOs) in the formal framework of the \tioco\, conformance
  theory.  In this context, a first problem is the determinization of
  TAIOs, which is necessary to foresee next enabled actions after an
  observable trace, but is in general impossible because not all timed
  automata are determinizable. This problem is solved thanks to an
  approximate determinization using a game approach.
%
   The algorithm performs an io-abstraction which preserves the
    \tioco\, conformance relation and thus guarantees the soundness of
    generated test cases.
    A second problem is the selection of test cases from a TAIO
    specification.  The selection here relies on a precise description
    of timed behaviors to be tested which is carried out by expressive
    test purposes modeled by a generalization of TAIOs.  Finally, an
    algorithm is described which generates test cases in the form of
    TAIOs equipped with verdicts, using a symbolic co-reachability
    analysis guided by the test purpose.  Properties of test cases are
    then analyzed with respect to the precision of the approximate
    determinization: when determinization is exact,
%
    which is the case on known determinizable classes, in addition to
    soundness, properties characterizing the adequacy of test cases
    verdicts are also guaranteed.
\end{abstract}

\maketitle

\section*{Introduction}
\label{sec-intro}


Conformance testing is the process of testing whether some
implementation of a software system 
behaves correctly with respect to its specification.
In this testing framework, 
implementations are considered as {\em black boxes}, \ie the source
code is unknown, only their interface with the environment is known
and used to interact with the tester.  In {\em formal model-based conformance testing},
models are used to  describe  testing artifacts
(specifications, implementations, test cases, ...). 
Moreover, conformance is formally defined as a relation 
between implementations and specifications 
which reflects what are the correct behaviors of the implementation
with respect to those of the specification.
Defining such a relation requires the hypothesis that the implementation
behaves as a model.
Test cases with verdicts, which will be executed against the implementation
in order to check conformance, are generated automatically from the 
specification. 
Test generation algorithms should then ensure
properties relating verdicts of executions of test cases 
with the conformance relation (\eg~soundness),
thus improving the quality of testing compared to manual writing of test cases.

For timed systems, model-based conformance testing has already been
explored in the last decade, with different models and conformance
relations (see \eg~\cite{DBLP:conf/formats/SchmaltzT08} for a
survey), and various test generation
algorithms (e.g.~\cite{BrionesBrinksma-FATES05,KrichenTripakis09,NielsenSkou-STTT03}).
In this context, a very popular model is {\em timed automata with
  inputs and outputs} (TAIOs), a variant of {\em timed automata}
(TAs)~\cite{AlurDill94}, in which the alphabet of observable actions is
partitioned into inputs and outputs.  
We consider here a very general model,
partially observable and non-deterministic TAIOs with
invariants for the modeling of urgency.
We resort to the \tioco\, conformance relation defined for TAIOs~\cite{Krichen-Tripakis-2004}, which is equivalent to the \rtioco\, relation~\cite{Larsen-Mikucionis-Nielsen-2004}.
This relation compares the observable behaviors of timed systems,
made of inputs, outputs and delays, restricting attention to what happens 
after specification traces.
Intuitively, an implementation conforms to a specification if after 
any observable trace of the specification,
outputs and delays observed on the implementation after this trace should 
be allowed  by the specification.

One of the main difficulties encountered in test generation for those
partially observable, non-deterministic TAIOs is determinization.
In fact determinization 
is required in order to foresee the next enabled actions during
execution, and thus to emit a correct verdict depending on whether 
actions observed on the implementation 
are allowed by the specification model after the current observable behavior.
Unfortunately, TAs (and thus TAIOs) are not determinizable in general~\cite{AlurDill94}: the class of deterministic TAs is a strict subclass of TAs. 
Two different approaches have been taken for test generation from timed models,
which induce different treatments of non-determinism. 
\begin{iteMize}{$\bullet$}
\item 
In {\em off-line test generation} test cases are first generated as
timed automata (or timed sequences, or timed transition systems) 
and subsequently
executed on the implementation.  One advantage is that test cases can
be stored and further used \eg~for regression testing and serve for
documentation.  However, due to the non-determinizability of TAIOs,
the approach has often been limited to deterministic or determinizable TAIOs 
(see \eg~\cite{KhoumsiJeronMarchand-FATES03,NielsenSkou-STTT03}).
A notable exception is~\cite{KrichenTripakis09} where the problem is solved by the
use of an over-approximate determinization with fixed resources 
(number of clocks and maximal constant): 
a deterministic automaton with those resources is built, 
which simulates the behaviors of the non-deterministic one.
Another one is 
~\cite{DavidLarsenLiNielsen-ICST09} where winning strategies of
 timed games are used as test cases.
\item In {\em on-line test generation}, test cases are generated
  during their execution.  After the current observed trace, enabled
  actions after this trace are computed from the specification model
  and, either an allowed input is sent to the implementation, or a
  received output or an observed delay is checked.  This technique can
  be applied to any TAIO, as possible observable actions are computed
  only along the current finite execution (the set of possible states
  of the specification model after a finite trace, and their enabled
  actions are finitely representable and computable), thus
  avoiding a complete determinization.  On-line test generation is of
  particular interest to rapidly discover errors, can be applied
    to large and non-deterministic systems, but may sometimes be
  impracticable due to a lack of reactivity (the time needed to
  compute successor states on-line may sometimes be incompatible with
  real-time constraints).
\end{iteMize}
Our feeling is that off-line test generation from timed models 
did not receive much attention because of the inherent difficulty 
of determinization. 
However, recent works on approximate determinization of timed automata~\cite{KrichenTripakis09,BertrandStainerJeronKrichen-FOSSACS2011} 
open the way to new research approaches and results in this domain.

\subsection*{Contribution} 
In this paper, we propose to generate test cases off-line for the whole class of
non-deterministic TAIOs, in the formal context of the \tioco\, conformance
theory.  The determinization problem is tackled thanks to an
approximate determinization with fixed resources in the spirit 
of~\cite{KrichenTripakis09}, using a game
approach allowing to more closely simulate the non-deterministic TAIO~\cite{BertrandStainerJeronKrichen-FOSSACS2011}.
Our approximate determinization method is more precise
than~\cite{KrichenTripakis09} (see~\cite{BertrandStainerJeronKrichen-FOSSACS2011,BertrandStainerJeronKrichen-RR2010} for details), preserves the richness of our model by dealing with partial
observability and urgency, and can be adapted to testing by  
a different treatment of inputs, outputs and delays.  
Determinization is exact for known classes of determinizable TAIOs
(\eg~event-clock TAs, TAs with integer resets, strongly non-Zeno TAs) if
resources are sufficient. 
In the general case, determinization may over-approximate outputs and delays
and under-approximate inputs.
More precisely, it produces a deterministic {\em io-abstraction} of the TAIO 
for a particular {\em io-refinement} relation
which generalizes the one of~\cite{David-Larsen-etal-HSCC10}.  
As a consequence, if test cases are
generated from the io-abstract deterministic TAIO and are sound
for this TAIO, they are guaranteed to be sound for
the original (io-refined) non-deterministic TAIO.

Behaviors of specifications to be tested are identified by means of test purposes. Test purposes are often used in testing practice, 
and are particularly useful when one wants to focus testing
on particular behaviors, \eg~corresponding to requirements 
or suspected behaviors of the implementation.
In this paper they are defined as {\em open timed automata with inputs and
  outputs} (OTAIOs), a model generalizing TAIOs, allowing to precisely
target some behaviors according to actions and clocks of the
specification as well as proper clocks.  
Then, 
in the same spirit as for the TGV tool in the untimed case~\cite{jard04a},  
test selection is performed by a construction relying 
on a co-reachability analysis. 
Produced test cases are in the form of TAIOs, 
while most approaches generate less elaborated 
test cases in the form of timed traces or trees.
In addition to soundness, 
when determinization is exact, 
we also prove an exhaustiveness property,
and two other properties on the adequacy of test case verdicts.
To our knowledge, this whole work constitutes the most general and
advanced off-line test selection approach for TAIOs.

This article is a long version
of~\cite{BertrandJeronStainerKrichen-TACASS2011}. In addition to the
proofs of key properties, it also contains much more details,
explanations, illustrations by examples, complexity considerations,
and a new result on exhaustiveness of the test generation method.

\subsection*{Outline}
The paper is structured as follows. In the next section we introduce
the model of OTAIOs, its semantics, some notations and operations on this model
and the model of TAIOs.
Section 2 recalls the \tioco\, conformance theory for TAIOs, including 
properties of test cases 
relating conformance and verdicts, and introduces an io-refinement
relation which preserves \tioco. Section 3 presents our game approach for
the approximate determinization compatible with the io-refinement.  In
Section 4 we detail the test selection mechanism using test purposes
and prove some properties on generated test cases.
Section 5 discusses some issues related to test case execution 
and test purposes and some related work.

\section{A model of  open timed automata with inputs/outputs}
Timed automata (TAs)~\cite{AlurDill94} is 
a usual model for time constrained systems.
In the context of model-based testing, TAs have been extended 
to timed automata with inputs and outputs (TAIOs) 
whose sets of actions are partitioned into 
inputs, outputs and unobservable actions.
In this section, we further extend TAIOs 
by partitioning the set of clocks into proper clocks (\emph{i.e.},
controlled by the automaton) and observed clocks (\emph{i.e}, owned by
some other automaton).  
The resulting model of {\em open timed
    automata with inputs/outputs} (OTAIOs for short), allows one to
  describe observer timed automata that can test clock values from
  other automata.
 While the sub-model of TAIOs (with only proper
clocks) is sufficient for most testing artifacts (specifications,
implementations, test cases) observed clocks of OTAIOs will be useful
to express test purposes whose aim is to focus on the timed behaviors
of the specification.  Like in the seminal paper for
TAs~\cite{AlurDill94}, we consider OTAIOs and TAIOs with location
invariants to model urgency.

\subsection{Timed automata with inputs/outputs}

We start by introducing notations and useful definitions concerning
TAIOs and OTAIOs.

Given $X$ a finite set of {\it clocks}, 
a {\it clock valuation} is a
mapping $v:X \rightarrow \setRnn$, where $\setRnn$ 
is the set of non-negative real numbers.  
$\bar{0}$ stands for the valuation assigning $0$ to all clocks.
If $v$ is a valuation over
$X$ and 
$t\in \setRnn$, 
 then $v+t$ denotes the valuation which assigns
to every clock $x\in X$ the value $v(x)+t$.
For $X' \subseteq X$ we write $v_{[X'\leftarrow0]}$ for the valuation
equal to $v$ on $X \setminus X'$ and 
assigning $0$ to all clocks of $X'$.
Given $M$ a non-negative integer, an {\it $M$-bounded guard} (or
simply guard) over $X$ is a finite conjunction of constraints of the
form $x \sim c$ where $x\in X,\; c\in [0,M]\cap\mathbb{N}$ and $\sim
\in \{<,\le,=,\ge,>\}$. 
Given $g$ a guard and $v$ a valuation, we write $v
\models g$ if $v$ satisfies $g$.  We sometimes abuse notations and write $g$ for
the set of valuations satisfying $g$.  Invariants are restricted
cases of guards: given $M \in \setN$, an {\em $M$-bounded invariant} over
$X$ is a finite conjunction of constraints of the form $x \lhd c$
where $x \in X, c \in [0,M] \cap \setN$ and $\lhd \in \{<,\leq\}$.  
We denote by $G_M(X)$ (resp.  $I_M(X)$)
the set of $M$-bounded guards (resp. invariants) over $X$.


In the sequel, we write $\sqcup$ for the disjoint union of sets,
  and use it, when appropriate, to insist on the fact that sets are
  disjoint.

\begin{defi}[OTAIO]
\label{def-ta}
An {\em open timed automaton with inputs and outputs} (OTAIO) is a tuple
$\A=(L^\pA,\ell_0^\pA,\Sigma_?^\pA,\Sigma_!^\pA,\Sigma_\tau^\pA,X_p^\pA,X_o^\pA,M^\pA,\Inv^\pA,E^\pA)$ 
such that: 
\begin{iteMize}{$\bullet$}
\item $L^\pA$ is a finite set of {\em locations}, 
with $\ell_0^\pA\in L^\pA$ the {\em initial location}, 
\item $\Sigma_?^\pA$, $\Sigma_!^\pA$ and $\Sigma_\tau^\pA$ 
are disjoint finite  {\em alphabets} of
{\em input actions} (noted $a?, b?,\ldots$), 
{\em output actions} (noted $a!, b!, \ldots$),
and {\em internal actions} (noted $\tau_1, \tau_2, \ldots$). 
We note $\Sigma_{obs}^\pA=\Sigma_?^\pA \sqcup \Sigma_!^\pA$ 
for the alphabet of observable actions,
and $\Sigma^\pA=\Sigma_?^\pA \sqcup \Sigma_!^\pA \sqcup \Sigma_\tau^\pA$ 
for the whole set of actions.
\item $X_p^\pA$ and $X_o^\pA$ are disjoint finite sets of {\em proper
    clocks} and {\em observed clocks}, respectively.  We note $X^\pA
  =X_p^\pA \sqcup X_o^\pA$ for the whole set of {\em clocks}.
\item $M^\pA\in \mathbb{N}$ is the {\em maximal constant} of $\A$, 
and we will refer to $(|X^\pA|,M^\pA)$ as the {\em resources} of $\A$,
\item $\Inv^\pA: L^\pA \ra I_{M^\pA}(X^\pA)$ is a mapping  which labels each location with an {\em $M$-bounded invariant}, 
\item $E^\pA\subseteq L^\pA\times G_{M^\pA}(X^\pA) \times \Sigma^\pA \times 2^{X_p^\pA} \times L^\pA$ is a finite set of {\em edges} where {\em guards} 
are defined on $X^\pA$,
but {\em resets} are restricted to proper clocks in $X_p^\pA$.
\end{iteMize}
\end{defi}\medskip

\noindent One of the reasons for introducing the OTAIO model is to have a
uniform model (syntax and semantics) that will be next specialized for
particular testing artifacts.  In particular, an OTAIO with an empty
set of observed clocks $X_o^\pA$ is a classical TAIO, and will be the
model for specifications, implementations and test cases.  The
partition of actions reflects their roles in the testing context: the
tester cannot observe internal actions, but controls inputs and
observes outputs (and delays).  The set of clocks is also partitioned
into {\em proper clocks}, \ie usual clocks controlled by the system
itself through resets, as opposed to {\em observed clocks} referring
to proper clocks of another OTAIO (\eg~ modeling the system's
environment).  These cannot be reset to avoid intrusiveness, but
synchronization with them in guards and invariants is allowed.
This partition of clocks will be useful for test purposes which
  can have, as observed clocks, some proper clocks of specifications,
  with the aim of selecting time constrained behaviors of
  specifications to be tested.

\begin{figure}[htbp]
\begin{center}
\scalebox{0.65}{
\begin{tikzpicture}[->,>=stealth',shorten >=1pt,auto,node distance=2cm,
                    semithick]

  \tikzstyle{every state}=[text=black]

  \node[state, fill=white] (A) {$\ell_0$};
  \node[state, fill=white] (B) [right of=A, node distance=3cm, yshift=1cm] {$\ell_1$};
  \node[state, fill=white] (C) [right of=B, node distance=4cm, yshift=0cm] {$\ell_2$};
  \node[state, fill=white] (D) [right of=C, node distance=3cm, yshift=0cm] {$\ell_3$};
  \node[state, fill=white] (E) [right of=D, node distance=3cm, yshift=0cm] {$\ell_4$};
  \node[state, fill=white] (F) [right of=A, node distance=3cm, yshift=-1cm] {$\ell_5$};
  \node[state, fill=white] (G) [right of=F, node distance=4cm, yshift=0cm] {$\ell_6$};
  \node[state, fill=white] (H) [right of=G, node distance=3cm, yshift=0cm] {$\ell_7$};
  \node[state, fill=white] (I) [right of=H, node distance=3cm, yshift=0cm] {$\ell_8$};
  \node[state, fill=white, color=white] (A') [left of=A, node distance=2cm, yshift=0cm] {};
  \node[fill=white] (Ae) [above of=A, node distance=.8cm, yshift=0cm] {$x \le 1$};
  \node[fill=white] (Ce) [below of=C, node distance=.8cm, yshift=0cm] {$x \le 1$};
  \node[fill=white] (De) [below of=D, node distance=.8cm, yshift=0cm] {$x \le 1$};
  \node[fill=white] (Ge) [below of=G, node distance=.8cm, yshift=0cm] {$x = 0$};
  \node[fill=white] (He) [below of=H, node distance=.8cm, yshift=0cm] {$x = 0$};
  
  \path (A') edge node [above] {} (A)
        (A) edge node [above,sloped] {$x=1, \tau$} (B)    
        (B) edge node [above] {$1<x<2, a?, \{x\}$} (C)   
        (C) edge node [above] {$x=0, b!$} (D)   
        (D) edge node [above] {$b!$} (E)   
        (C) edge [loop above] node [above] {$x=1, \tau, \{x\}$} (C)    
        (A) edge node [above,sloped] {$x=1, \tau, \{x\}$} (F)    
        (F) edge node [above] {$x<1, a?, \{x\}$} (G)   
        (G) edge node [above] {$b!$} (H)   
        (H) edge node [above] {$b!$} (I)   
;
\end{tikzpicture}
}
\caption{Specification $\A$}\label{ExSpec}
\end{center}
\end{figure}

\begin{exa}
Figure~\ref{ExSpec} represents a TAIO for a specification $\A$
that will serve as a running example in this paper.
Its clocks are $X=X_p^\pA=\{x\}$, its maximal constant is $M^\pA=2$, 
it has a single input $\Sigma_?^\pA=\{a\}$, 
a single output $\Sigma_!^\pA=\{b\}$ and 
one internal action $\Sigma_\tau^\pA=\{\tau\}$.
Informally, its behavior is as follows.
It may stay in the initial location $\ell_0$ while $x\leq 1$,
and at $x=1$, has the choice, either to go to $\ell_1$ with action $\tau$,
or go to $\ell_5$ with action $\tau$ while resetting $x$.
In $\ell_1$, it may receive $a$ and move to $\ell_2$ 
when $x$ is between $1$ and $2$, and reset $x$.
In $\ell_2$ it may stay while $x\leq 1$ and, 
either send $b$ and go to $\ell_3$ at $x=0$, 
or loop silently when $x=1$  while resetting $x$.
This means that $b$ can be sent at any integer delay after entering
$\ell_2$.
In $\ell_3$ it may stay while $x\leq 1$ and move to $\ell_4$ when sending $b$.
In $\ell_5$, one can move to $\ell_6$ before $x=1$ by receiving $a$ 
and resetting  $x$. Due to invariants $x=0$ in $\ell_6$ and 
$\ell_7$, the subsequent behavior consists in the immediate 
transmission of two $b$'s.
\end{exa}

\subsection{The semantics of OTAIOs}
Let
$\A=(L^\pA,\ell_0^\pA,\Sigma_?^\pA,\Sigma_!^\pA,\Sigma_\tau^\pA,X_p^\pA,X_o^\pA,M^\pA,\Inv^\pA,E^\pA)$ be an OTAIO.
The semantics of $\A$ is a {\it timed transition system}
$\mathcal{T}^{\pA}=(S^\pA,s_0^\pA,\Gamma^\pA,\rightarrow_\pA)$ 
where
\begin{iteMize}{$\bullet$}
\item 
$S^\pA=L^\pA \times \setRnn^{X^\pA}$ is the set of {\em states} \ie
pairs $(\ell,v)$ consisting in a location and a valuation of clocks;
\item $s_0^\pA=(\ell_0^\pA,\overline{0}) \in S^\pA$ is the {\em initial state}; 
\item 
$\Gamma^\pA=\setRnn \sqcup E^\pA \times 2^{X_o^\pA}$ is the set of transition {\em labels}
consisting in either a delay $\delta$ or a pair  $(e,X'_o)$ formed by 
an edge $e \in E$ and a set $X_o' \subseteq X_o^\pA$ of observed clocks;
\item 
the  transition relation 
$\rightarrow_\pA \subseteq S^\pA \times \Gamma^\pA \times S^\pA$ is the smallest set of the following moves:
\begin{iteMize}{$-$}
\item
{\em Discrete moves:} 
$(\ell,v) \relt{\pA}{(e,X'_o)} (\ell',v')$ 
whenever there exists $e=(\ell,g,a,X'_p,\ell') \in E^\pA$ 
such that $v \models g \wedge \Inv^\pA(\ell)$, 
$X_o' \subseteq X_o^\pA$ is an arbitrary subset of observed clocks, 
$v'=v_{[X'_p\sqcup X'_o\leftarrow 0]}$ and $v' \models \Inv^\pA(\ell')$.
Note that $X'_o$ is unconstrained as observed clocks are not controlled by $\A$ but by a peer OTAIO. 
\item
{\em Time elapse:}
$(\ell,v) \relt{\pA}{\delta} (\ell,v+\delta)$
for $\delta \in \setRnn$ if  
$v + \delta\models \Inv^\pA(\ell)$.
\end{iteMize}
\end{iteMize}\medskip
The semantics of OTAIOs generalizes the usual semantics of TAIOs.
The difference lies in the treatment of the additional observed clocks
as the evolution of those clocks is controlled by a peer OTAIO.
The observed clocks evolve at the same speed as the proper clocks, 
thus continuous moves are simply extended to proper and observed clocks.
For discrete moves however, resets of observed clocks are uncontrolled,
thus all possible resets have to be considered.

A  {\it partial run} of $\A$ is a finite sequence of subsequent moves
in $(S^\pA \times \Gamma^\pA)^*.S^\pA$.
For example 
$\rho=s_0 \relt{\pA}{\delta_1} s'_0 \relt{\pA}{(e_1,X_o^1)} s_1\cdots
s_{k-1}\relt{\pA}{\delta_k} s'_{k-1} \relt{\pA}{(e_k,X_o^k)} s_k$. 
The sum of 
delays in $\rho$ is noted  $time(\rho)$. 
A {\em run} is a partial run starting in $s_0^\pA$.
A state $s$ is {\em reachable} if there exists a run leading to $s$. 
A  state $s$ is {\em co-reachable} from a set $S'\subseteq S^\pA$ if there is 
a partial run from $s$  to a state in $S'$.
We note $\reach(\A)$ the set of reachable states and 
$\coreach(\A,S')$ the set of states co-reachable  from $S'$.

A (partial) {\em sequence} is a projection of a (partial) run where
states are forgotten, and discrete transitions are abstracted to
actions and proper resets which are grouped with observed resets.  
As an example, the
sequence corresponding to a run 
\[\rho=s_0 \relt{\pA}{\delta_1} s'_0
\relt{\pA}{(e_1,X_o^1)} s_1\cdots s_{k-1}\relt{\pA}{\delta_k} s'_{k-1}
\relt{\pA}{(e_k,X_o^k)} s_k\]
 is \[\mu = \delta_1.(a_1,X_p^1 \sqcup
X_o^1)\cdots \delta_k.(a_k,X_p^k \sqcup X_o^k)\] where 
$e_i=(\ell_i,g_i,a_i,X^i_p,\ell'_i)$ for all $i \in
[1,k]$.  We then note $s_0
\relt{\pA}{\mu} s_k$.  We write $s_0
 \relt{\pA}{\mu}$ 
if there exists
$s_k$ such that  $s_0
 \relt{\pA}{\mu} s_k$.  We note $\seq(\A)
\subseteq (\setRnn \sqcup (\Sigma^\pA \times 2^{X^\pA}))^*$
(respectively $\pseq(\A)$) the set of sequences (resp. partial
sequences) of $\A$. For a sequence $\mu$, $time(\mu)$
denotes the sum of delays in $\mu$.

For a (partial) sequence $\mu \in \pseq(\A)$, 
$Trace(\mu) \in (\setRnn \sqcup
\Sigma_{obs}^\pA)^*.\setRnn$ denotes the observable behavior obtained by erasing internal actions
and summing delays between observable ones. It is defined inductively as follows:
\begin{iteMize}{$\bullet$}
\item 
$Trace(\varepsilon) = 0$,
\item
$Trace(\delta_1\ldots \delta_k)=\Sigma_{i=1}^k \delta_i$,
\item 
$Trace(\delta_1\ldots \delta_k.(\tau,X').\mu)=
Trace ((\Sigma_{i=1}^k \delta_i).\mu)$,
\item 
$Trace(\delta_1\ldots \delta_k.(a,X').\mu)=
(\Sigma_{i=1}^k \delta_i).a.Trace(\mu)$ if $a\in \Sigma_{obs}^\pA$.
\end{iteMize}
For example $Trace(1.(\tau,X^1).2.(a,X^2). 2.(\tau,X^3))=3.a.2$ and
$Trace(1.(\tau,X^1).2.(a,X^2))=3.a.0$.  When a trace ends by a
$0$-delay, we sometimes omit it and write \eg~$3.a$ for $3.a.0$.

When concatenating two traces, the last delay of the first trace and
the initial delay of the second one must be added up as follows: if
$\sigma_1=\delta_1.a_1.\cdots a_n.\delta_{n+1}$ and
$\sigma_2=\delta'_1.a'_1.\cdots a'_m.\delta'_{m+1}$ then
$\sigma_1.\sigma_2= \delta_1.a_1.\cdots
a_n.(\delta_{n+1}+\delta'_1).a'_1.\cdots a'_m.\delta'_{m+1}$.  
Concatenation allows one to define the notion of prefix. Given a
  trace $\sigma$, $\sigma_1$ is a \emph{prefix} of $\sigma$ if there
  exists some $\sigma_2$ with $\sigma = \sigma_1.  \sigma_2$. Under
  this definition, $1.a.1$ is a prefix of $1.a.2.b$.

For a run $\rho$ projecting onto a sequence $\mu$, 
we also write $Trace(\rho)$ for $Trace(\mu)$.  
The set of traces of runs of $\A$ is denoted by
$\traces(\A)\subseteq (\setRnn \sqcup \Sigma_{obs}^\pA)^*.\setRnn$
\footnote{Notice that formally, a trace always ends with a delay, which can be $0$.
This technical detail is useful later to define verdicts as soon as possible 
without waiting for a hypothetical next action.
}. 


 Two OTAIOs are said
\emph{equivalent} if they have the same sets of traces.

Let $\sigma \in (\setRnn \sqcup \Sigma_{obs}^\pA)^*.\setRnn$ be a trace, 
and $s \in S^\pA$ be a state, 
\begin{iteMize}{$\bullet$}
\item 
$\A \after \sigma =\{s \in S^\pA \mid \exists \mu \in \seq(\A),
s_0^\pA \relt{\pA}{\mu} s \wedge Trace(\mu)=\sigma\}$ denotes the set of
states where $\A$ can stay after observing the trace $\sigma$.
\item
$elapse(s)= \{t \in \setRnn\mid 
\exists \mu \in (\setRnn \sqcup (\Sigma_\tau^\pA\times 2^{X^\pA}))^*, 
s \relt{\pA}{\mu} \wedge \, time(\mu)=t\}$ is the set of
enabled delays in $s$ with no  observable action.
\item  $out(s) = \{a \in \Sigma_!^\pA \mid \exists X
\subseteq X^\pA, s \relt{\pA}{(a,X)}\} \cup elapse(s)$ (and $in(s) =
\{a \in \Sigma_?^\pA \mid s \relt{\pA}{(a,X)}\}$) for the set of
outputs and delays (respectively inputs) that can be observed from
$s$.  
For $S'\subseteq S^\pA$, $out(S')=\bigcup_{s \in S'}out(s)$
and $in(S')=\bigcup_{s \in S'}in(s)$.
\end{iteMize}
Using these last definitions, 
we will later describe 
the set of 
possible outputs and delays after the trace $\sigma$
by  $out(\A \after \sigma)$.

Notice that all notions introduced for OTAIOs apply to the subclass of TAIOs.

\subsection{Properties and operations}
A TAIO $\A$ is {\em deterministic} (and called a DTAIO) whenever for
any $\sigma \in \traces(\A), \A \after \sigma$ is a
singleton\footnote{Determinism is only defined (and used in the
  sequel) for TAIOs.  For OTAIOs, the right definition would consider
  the projection of $\A \after \sigma$ which forgets values of
  observed clocks, as these introduce ``environmental''
  non-determinism.}.  A TAIO $\A$ is {\it determinizable} if there
exists an equivalent DTAIO.  It is well-known that some timed automata
are not determinizable~\cite{AlurDill94}; moreover, the
determinizability of timed automata is an undecidable problem, even
with fixed resources~\cite{Tripakis-ipl06,Finkel-formats06}.


An OTAIO $\A$ is said {\em complete} if in every location $\ell$,
$\Inv^\pA(\ell)= \true$ and for every action $a\in \Sigma^\pA$, the
disjunction of all guards of transitions leaving $\ell$ and labeled
by $a$ is $\true$.  
This entails that 
$\seq(\A)\downarrow_{X^\pA_p} = (\setRnn \sqcup (\Sigma^\pA \times 2^{X_o^\pA}))^*$,
where $\downarrow_{X^\pA_p}$ is the projection 
that removes resets of proper clocks in $X^\pA_p$.
This means that $\A$ is universal for all the behaviors of its environment.

An OTAIO $\A$ is {\em input-complete} in a state $s \in \reach(\A)$, if
 $in(s) = \Sigma_?^\pA$.  
An OTAIO $\A$  is input-complete if it is input-complete in all its
reachable states.

An OTAIO
$\A$ is {\em non-blocking} if $\forall s \in \reach(\A), \forall t \in
\setRnn, \exists \mu \in \pseq(\A) \cap (\setRnn \sqcup ((\Sigma_!^\pA
\sqcup \Sigma_\tau^\pA) \times 2^{X^\pA}))^*, time(\mu) = t \wedge s
\stackrel{\mu}{\rightarrow_\pA}$.
This means that it never  blocks the evolution of time, waiting for an input.

For modeling the behavior of composed systems, in particular for
modeling the execution of test cases on implementations, we introduce
the classical parallel product. This operation
consists in the synchronization of two TAIOs on complementary
  observable actions (\eg~ $a!$, the emission of $a$ and $a?$ its reception)
and induces the intersection of the sets of
traces. It is only defined for \emph{compatible} TAIOs,
  \emph{i.e.}
  $\A^i=(L^{\petit{i}},\ell_0^{\petit{i}},\Sigma_?^{\petit{i}},\Sigma_!^{\petit{i}},\Sigma_\tau^{\petit{i}},X_p^{\petit{i}},M^{\petit{i}},\Inv^{\petit{i}},E^{\petit{i}})$
  for $i=1,2$ such that $\Sigma_!^{\petit{1}}=\Sigma_?^{\petit{2}}$,
  $\Sigma_?^{\petit{1}}=\Sigma_!^{\petit{2}}$,
  $\Sigma_\tau^{\petit{1}}\cap \Sigma_\tau^{\petit{2}}=\emptyset$ and
  $X_p^{\petit{1}} \cap X_p^{\petit{2}}= \emptyset$. 
  \begin{defi}[Parallel product]
\label{def:product}
 The \emph{parallel product} of two compatible TAIOs
   $\A^i=(L^{\petit{i}},\ell_0^{\petit{i}},\Sigma_?^{\petit{i}},\Sigma_!^{\petit{i}},\Sigma_\tau^{\petit{i}},X_p^{\petit{i}},M^{\petit{i}},\Inv^{\petit{i}},E^{\petit{i}})$
   $i=1,2$ 
 is a TAIO $\A^1 \| \A^2 =
  (L,\ell_0,\Sigma_?,\Sigma_!,\Sigma_\tau,X_p,M,\Inv,E)$ where:
\begin{iteMize}{$\bullet$}
\item $L=L^{\petit{1}} \times L^{\petit{2}}$, $\ell_0 = (\ell_0^{\petit{1}},\ell_0^{\petit{2}})$,
\item $\Sigma_?= \Sigma_?^{\petit{1}}$, 
$\Sigma_!= \Sigma_!^{\petit{1}} $ and 
$\Sigma_\tau= \Sigma_\tau^{\petit{1}} \sqcup \Sigma_\tau^{\petit{2}}$
\item $X_p = X_p^{\petit{1}} \sqcup X_p^{\petit{2}}$
\item $M= \max(M^{\petit{1}}, M^{\petit{2}})$
\item $\forall (\ell^{\petit{1}},\ell^{\petit{2}}) \in L, 
  \Inv((\ell^{\petit{1}},\ell^{\petit{2}})) = \Inv(\ell^{\petit{1}}) \wedge \Inv(\ell^{\petit{2}})$ 
\item $E$ is the smallest relation such that:
\begin{iteMize}{$-$}
\item for $a \in \Sigma_?^{\petit{1}}\sqcup \Sigma_!^{\petit{1}}$,  
if $(\ell^{\petit{1}},g^{\petit{1}},a,X'^{\petit{1}}_p,\ell'^{\petit{1}}) \in E^{\petit{1}}$ and 
$(\ell^{\petit{2}},g^{\petit{2}},a,X'^{\petit{2}}_p,\ell'^{\petit{2}}) \in E^{\petit{2}}$ then
$((\ell^{\petit{1}},\ell^{\petit{2}}),g^{\petit{1}} \wedge g^{\petit{2}},a,
X'^{\petit{1}}_p \cup X'^{\petit{2}}_p,(\ell'^{\petit{1}},\ell'^{\petit{2}})) \in E$, \ie complementary actions synchronize, corresponding to a communication; 
\item for $\tau_1 \in \Sigma_\tau^{\petit{1}}$, $\ell^{\petit{2}} \in L^{\petit{2}}$,
if $(\ell^{\petit{1}},g^{\petit{1}},\tau_1,X'^{\petit{1}}_p,\ell'^{\petit{1}}) \in E^{\petit{1}}$
then 
$((\ell^{\petit{1}},\ell^{\petit{2}}),g^{\petit{1}},\tau_1,
X'^{\petit{1}}_p,(\ell'^{\petit{1}},\ell^{\petit{2}})) \in E$, 
\ie internal actions of $\A_1$ progress independently;
\item for $\tau_2 \in \Sigma_\tau^{\petit{2}}$, $\ell^{\petit{1}} \in L^{\petit{1}}$,
if $(\ell^{\petit{2}},g^{\petit{2}},\tau_2,X'^{\petit{2}}_p,\ell'^{\petit{2}}) \in E^{\petit{2}}$
then 
$((\ell^{\petit{1}},\ell^{\petit{2}}),g^{\petit{2}},\tau_2,
X'^{\petit{2}}_p,(\ell^{\petit{1}},\ell'^{\petit{2}})) \in E$, 
\ie internal actions of $\A_2$ progress independently.
\end{iteMize}
\end{iteMize}
\end{defi}\medskip

\noindent By the definition of the transition relation $E$ of $\A^{\petit{1}} \| \A^{\petit{2}}$,
TAIOs synchronize exactly on complementary observable actions and time, 
and evolve independently on internal actions. 
As a consequence, the following equality on traces holds:
\begin{equation}
\label{eq-traces}
  \traces(\A^{\petit{1}} \| \A^{\petit{2}}) =
  \traces(\A^{\petit{1}})
  \cap
  \traces(\A^{\petit{2}}) 
\end{equation}

Notice that the definition is not absolutely symmetrical, 
as the  direction (input/output) of actions of the product is chosen 
with respect to $\A^{\petit{1}}$.
The technical reason is that, 
in the execution of a test case on an implementation,
we will need to keep the directions of actions of the implementation. 

\begin{center}
\begin{figure}[hbt]
\scalebox{0.65}{
\begin{minipage}{0.52\textwidth}
\begin{center}
  \begin{tikzpicture}[->,>=stealth',shorten >=1pt,auto,node distance=2cm,
                    semithick]

   \tikzstyle{every state}=[text=black]

  \node[state, fill=white] (A) {};
  \node[state, fill=white] (B) [below of=A, node distance=2cm, yshift=0cm] {};
  \node[fill=white] (B') [left of=B, node distance=1.1cm] {$y\le1$};
  \node[state, fill=white] (D) [below of=A, node distance=2cm, xshift=2cm] {};
  \node[state, fill=white] (C) [below of=B, node distance=2cm, yshift=0cm] {};
  \node[fill=white] (C'') [left of=C, node distance=1.1cm] {$y\le1$};
  \node[fill=white] (A') [left of=A, node distance=1.5cm, yshift=1cm] {$\A^{\petit{1}}$};
  \node[state, fill=white, color=white, scale=.5] (G) [above of=A, node distance=3cm, yshift=0cm] {};
  \node[fill=white] (C') [left of=C, node distance=3cm, yshift=-.5cm] {$X_p^{\petit{1}}=\{y\}$};
 
  \path (A) edge node [left] {$y\ge1,a?,\{y\}$} (B)
        (A) edge node [above, sloped] {$y\ge1,c?$} (D)
        (B) edge node [left] {$y\le 1,b!$} (C)
        (G) edge node [right] {} (A)
;
\end{tikzpicture}
\end{center}
\end{minipage}
\begin{minipage}{0.52\textwidth}
\begin{center}
 \begin{tikzpicture}[->,>=stealth',shorten >=1pt,auto,node distance=2cm,
                    semithick]

   \tikzstyle{every state}=[text=black]

  \node[state, fill=white] (A) {};
  \node[fill=white] (A'') [left of=A, node distance=1.1cm] {$x\le1$};
  \node[state, fill=white] (B) [below of=A, node distance=2cm, yshift=0cm] {};
  \node[state, fill=white] (C) [below of=B, node distance=2cm, yshift=0cm] {};
  \node[state, fill=white] (D) [below of=B, node distance=2cm, xshift=2cm] {};
  \node[fill=white] (A') [left of=A, node distance=1.5cm, yshift=1cm] {$\A^{\petit{2}}$};
  \node[state, fill=white, color=white, scale=.5] (G) [above of=A, node distance=3cm, yshift=0cm] {};
  \node[fill=white] (C') [left of=C, node distance=3cm, yshift=-.5cm] {$X_p^{\petit{2}}=\{x\}$};
 
  \path (A) edge node [left] {$x=1,a!,\{x\}$} (B)
        (B) edge node [left] {$x\ge1,b?$} (C)
        (B) edge node [above, sloped] {$x\ge1,c!$} (D)
        (G) edge node [right] {} (A)
;
\end{tikzpicture}
\end{center}
\end{minipage}
 \begin{minipage}{0.52\textwidth}
 \begin{center}
 \begin{tikzpicture}[->,>=stealth',shorten >=1pt,auto,node distance=2cm,
                    semithick]

   \tikzstyle{every state}=[text=black]

  \node[state, fill=white] (A) {};
  \node[fill=white] (A'') [left of=A, node distance=1.1cm] {$x\le1$};
  \node[state, fill=white] (B) [below of=A, node distance=2cm, yshift=0cm] {};
  \node[fill=white] (B') [left of=B, node distance=1.1cm] {$y\le1$};
  \node[state, fill=white] (C) [below of=B, node distance=2cm, yshift=0cm] {};
  \node[fill=white] (C'') [left of=C, node distance=1.1cm] {$y\le1$};
  \node[fill=white] (A') [left of=A, node distance=2cm, yshift=1cm] {$\A=\A^{\petit{1}} \| \A^{\petit{2}}$};
  \node[state, fill=white, color=white, scale=.5] (G) [above of=A, node distance=3cm, yshift=0cm] {};
  \node[fill=white] (C') [left of=C, node distance=3cm, yshift=-.5cm] {$X_p=\{x,y\}$};
 
  \path (A) edge node [left] {$y\ge1 \wedge x=1,a?,\{x,y\}$} (B)
        (B) edge node [left] {$y\le 1\wedge x\ge1,b!$} (C)
        (G) edge node [right] {} (A)
;
\end{tikzpicture} \end{center}
 \end{minipage}}

 \caption{Example of a parallel product $\A=\A^{\petit{1}}\|\A^{\petit{2}}$.}\label{ex_prodpar}
 \end{figure}   
 \end{center}

\begin{exa}

  The Figure~\ref{ex_prodpar} gives a very simple illustration of the
  parallel product. The intersection of the sets of traces is
  clear. Indeed, the parallel product recognizes exactly all
    prefixes of the trace $1.a.1.b$.  

\end{exa}

We now define a product operation on OTAIOs which extends the
classical product of TAs, with a particular attention to observed
clocks. This product is used later in the paper, to model the action of a test purpose which observes the clocks of a specification.

\begin{defi}[Product]\label{def_product}
Let 
 $\A^i=(L^{\petit{i}},\ell_0^{\petit{i}},\Sigma_?,\Sigma_!,\Sigma_\tau,X_p^{\petit{i}},X_o^{\petit{i}},M^{\petit{i}},\Inv^{\petit{i}},E^{\petit{i}})$,
$i=1,2$, be two OTAIOs with same alphabets and disjoint sets of proper clocks ($X_p^{\petit{1}} \cap X_p^{\petit{2}}= \emptyset$).
Their {\em product} 
is the OTAIO 
$\A^{\petit{1}} \times \A^{\petit{2}} = (L,\ell_0,\Sigma_?,\Sigma_!,\Sigma_\tau,X_p,X_o,M,\Inv,E)$ where:
\begin{iteMize}{$\bullet$}
\item 
$L=L^{\petit{1}} \times L^{\petit{2}}$; 
\item 
$\ell_0 = (\ell_0^{\petit{1}},\ell_0^{\petit{2}})$;
\item 
$X_p = X_p^{\petit{1}} \sqcup X_p^{\petit{2}}$, $X_o = (X_o^{\petit{1}} \cup X_o^{\petit{2}}) \setminus X_p$;
\item 
$M= \max(M^{\petit{1}}, M^{\petit{2}})$;
\item
$\forall (\ell^{\petit{1}},\ell^{\petit{2}}) \in L, \Inv((\ell^{\petit{1}},\ell^{\petit{2}})) = \Inv^{\petit{1}}(\ell^{\petit{1}}) \wedge \Inv^{\petit{2}}(\ell^{\petit{2}})$;
\item 
$((\ell^{\petit{1}},\ell^{\petit{2}}),g^{\petit{1}} \wedge g^{\petit{2}},a,X'^{\petit{1}}_p \sqcup X'^{\petit{2}}_p,(\ell'^{\petit{1}},\ell'^{\petit{2}})) \in E$ if
$ (\ell^{\petit{i}},g^{\petit{i}},a,X'^{\petit{i}}_p,\ell'^{\petit{i}}) \in E^{\petit{i}}$, i=1,2.
\end{iteMize}
\end{defi}

Intuitively, $\A^{\petit{1}}$ and $\A^{\petit{2}}$ synchronize on both
time and common actions (including internal
ones\footnote{
Synchronizing internal actions allows for more
    precision in test selection. This justifies to have a set of
    internal actions in the TAIO model.
}).  $\A^{\petit{2}}$ may
observe proper clocks of $\A^{\petit{1}}$ using its observed clocks
$X_p^{\petit{1}} \cap X_o^{\petit{2}}$, and {\em vice versa}.  The set
of proper clocks of $\A^{\petit{1}} \times A^{\petit{2}}$ is the union
of proper clocks of $\A^{\petit{1}}$ and $A^{\petit{2}}$, and observed
clocks of $\A^{\petit{1}} \times A^{\petit{2}}$ are observed clocks of
any OTAIO which are not proper.  For example, the OTAIO in
Figure~\ref{ExProd} represents the product of the TAIO $\A$ in
Figure~\ref{ExSpec} and the OTAIO $\TP$ of Figure~\ref{ExObj}.

\begin{center}
\begin{figure}[hbt]
\scalebox{0.65}{
\begin{minipage}{0.52\textwidth}
\begin{center}
  \begin{tikzpicture}[->,>=stealth',shorten >=1pt,auto,node distance=2cm,
                    semithick]

   \tikzstyle{every state}=[text=black]

  \node[state, fill=white] (A) {};
  \node[state, fill=white] (B) [below of=A, node distance=2cm, yshift=0cm] {};
  \node[state, fill=white] (C) [below of=B, node distance=2cm, yshift=0cm] {};
  \node[fill=white] (A') [left of=A, node distance=1.5cm, yshift=1cm] {$\A^{\petit{1}}$};
  \node[state, fill=white, color=white, scale=.5] (G) [above of=A, node distance=3cm, yshift=0cm] {};
  \node[fill=white] (C') [left of=C, node distance=3cm, yshift=-.5cm] {$X_p^{\petit{1}}=\{z\},\;X_o^{\petit{1}}=\{x,y\}$};
 
  \path (A) edge node [left] {$z=1\wedge y\ge1 \wedge x\le1,a?,\{z\}$} (B)
        (B) edge node [left] {$z\le1 \wedge y\ge 1\wedge x=2,b!$} (C)
        (G) edge node [right] {} (A)
;
\end{tikzpicture}
\end{center}
\end{minipage}
\begin{minipage}{0.52\textwidth}
\begin{center}
 \begin{tikzpicture}[->,>=stealth',shorten >=1pt,auto,node distance=2cm,
                    semithick]

   \tikzstyle{every state}=[text=black]

  \node[state, fill=white] (A) {};
  \node[state, fill=white] (B) [below of=A, node distance=2cm, yshift=0cm] {};
  \node[state, fill=white] (C) [below of=B, node distance=2cm, yshift=0cm] {};
  \node[fill=white] (A') [left of=A, node distance=1.5cm, yshift=1cm] {$\A^{\petit{2}}$};
  \node[state, fill=white, color=white, scale=.5] (G) [above of=A, node distance=3cm, yshift=0cm] {};
  \node[fill=white] (C') [left of=C, node distance=3cm, yshift=-.5cm] {$X_p^{\petit{2}}=\{x\},\;X_o^{\petit{2}}=\{y,z\}$};
 
  \path (A) edge node [left] {$x=1,a?,\{x\}$} (B)
        (B) edge node [left] {$y\ge 1,b!$} (C)
        (G) edge node [right] {} (A)
;
\end{tikzpicture}
\end{center}
\end{minipage}
 \begin{minipage}{0.52\textwidth}
 \begin{center}
 \begin{tikzpicture}[->,>=stealth',shorten >=1pt,auto,node distance=2cm,
                    semithick]

   \tikzstyle{every state}=[text=black]

  \node[state, fill=white] (A) {};
  \node[state, fill=white] (B) [below of=A, node distance=2cm, yshift=0cm] {};
  \node[state, fill=white] (C) [below of=B, node distance=2cm, yshift=0cm] {};
  \node[fill=white] (A') [left of=A, node distance=2cm, yshift=1cm] {$\A=\A^{\petit{1}} \times \A^{\petit{2}}$};
  \node[state, fill=white, color=white, scale=.5] (G) [above of=A, node distance=3cm, yshift=0cm] {};
  \node[fill=white] (C') [left of=C, node distance=3cm, yshift=-.5cm] {$X_p=\{x,z\},\;X_o=\{y\}$};
 
  \path (A) edge node [left] {$z=1\wedge y\ge1 \wedge x=1,a?,\{x,z\}$} (B)
        (B) edge node [left] {$z\le1 \wedge y\ge 1\wedge x=2,b!$} (C)
        (G) edge node [right] {} (A)
;
\end{tikzpicture} \end{center}
 \end{minipage}}

 \caption{Example of a product $\A = \A^{\petit{1}} \times \A^{\petit{2}}$.}\label{ex_prod}
 \end{figure}   
 \end{center}

Contrary to the parallel product, 
the set of traces of the product of two OTAIOs 
is not the intersection of the sets of traces of these TAIOs,
as illustrated by the following example. 

\begin{exa}
  Figure~\ref{ex_prod} artificially illustrates the notion of product
  of two OTAIOs.  One can see that $1.a?.1.b!$ is a trace of
  $\A^{\petit{1}}$ and $\A^{\petit{2}}$ but is not a trace of $\A
    = \A^{\petit{1}} \times \A^{\petit{2}}$.  Indeed, in
  $\A^{\petit{1}}$, $1.a?.1.b!$ is the trace of a sequence where $x$
  is not reset at the first action. Unfortunately, the clock $x$ is
  observed by $\A^{\petit{1}}$ but is a proper clock of
  $\A^{\petit{2}}$ which resets it at the first action. As a
  consequence, $1.a?.1.b!$ cannot be a trace of the product
  $\A^{\petit{1}} \times \A^{\petit{2}}$. In fact, the second edge
    in $\A$ can never be fired, 
  since clocks $z$ and $x$ agree on their values and cannot be
  simultaneously smaller than $1$ and equal to $2$.
\end{exa}

On the other hand, sequences are more adapted to express the underlying operation.
 To compare the sets of sequences of
  $\A^{\petit{1}} \times \A^{\petit{2}}$ with the sets of sequences of
  its factors, we introduce an operation that lifts the sets of clocks
  of factors to the set of clocks of the product: for $\A^{\petit{1}}$
  defined on $(X_p^{\petit{1}},X_o^{\petit{1}})$, and $X_p^{\petit{1}}
  \cap X_p^{\petit{2}}= \emptyset$,
  $\A^{\petit{1}}\!\!\uparrow^{(X^{\petit{2}}_p,X^{\petit{2}}_o)}$ denotes
  an automaton identical to $\A^{\petit{1}}$
  but defined on
  $(X_p^{\petit{1}},X_p^{\petit{2}}\cup X_o^{\petit{1}} \cup
  X_o^{\petit{2}}\setminus X_p^{\petit{1}})$.  The effect on the semantics is to
  duplicate moves of $\A^{\petit{1}}$ with unconstrained resets in
  $(X_p^{\petit{2}}\cup X_o^{\petit{2}}) \setminus  (X_p^{\petit{1}}\cup
  X_o^{\petit{1}})$, so that
  $\A^{\petit{1}}\!\!\uparrow^{(X^{\petit{2}}_p,X^{\petit{2}}_o)}$
  strongly bisimulates $\A^{\petit{1}}$.  
  The equivalence just consists in ignoring values of added clocks which do not interfere in the guards.
  Similarly
  $\A^{\petit{2}}\!\!\uparrow^{(X^{\petit{1}}_p,X^{\petit{1}}_o)}$ is
  defined on $(X_p^{\petit{2}},X_p^{\petit{1}}\cup X_o^{\petit{2}}
  \cup X_o^{\petit{1}}\setminus X_p^{\petit{2}})$.  
  Both
  $\A^{\petit{1}}\!\!\uparrow^{X^{\petit{2}}_p,X^{\petit{2}}_o}$ and
  $\A^{\petit{2}}\!\!\uparrow^{X^{\petit{1}}_p,X^{\petit{1}}_o}$ have
  sequences in
  $(\setRnn \sqcup (\Sigma_\tau^\pA \times (X_p^{\petit{1}} \cup X_p^{\petit{2}} \cup X_o^{\petit{1}} \cup X_o^{\petit{2}})))^*$. They synchronize on both delays and common actions with their resets.
The effect of the product is to restrict the respective environments  (observed clocks) 
by imposing the resets of the peer TAIO. 
The sequences of the product  are then characterized by\\
\begin{equation}
\label{eq-seq}
  \seq(\A^{\petit{1}} \times \A^{\petit{2}}) =
  \seq(\A^{\petit{1}}\!\!\uparrow^{(X^{\petit{2}}_p,X^{\petit{2}}_o)})
  \cap
  \seq(\A^{\petit{2}}\!\!\uparrow^{(X^{\petit{1}}_p,X^{\petit{1}}_o)})
\end{equation}
 meaning that the product of OTAIOs is the adequate operation for
intersecting sets of sequences.

\ignore{On the other hand, the product is the right operation for
intersecting sets of sequences.
In fact, let $\A^{\petit{1}}\!\!\uparrow^{(X^{\petit{2}}_p,X^{\petit{2}}_o)}$ (respectively
$\A^{\petit{2}}\!\!\uparrow^{(X^{\petit{1}}_p,X^{\petit{1}}_o)}$) denote the same
TAIO $\A^{\petit{1}}$ (resp.  $\A^{\petit{2}}$) defined on
$(X_p^{\petit{1}},X_p^{\petit{2}}\cup X_o^{\petit{2}} \cup
X_o^{\petit{1}}\setminus X_p^{\petit{1}})$ (resp. on
$(X_p^{\petit{2}},X_p^{\petit{1}}\cup X_o^{\petit{2}} \cup
X_o^{\petit{2}}\setminus X_p^{\petit{2}})$).  Then we get: 
\begin{equation}
\seq(\A^{\petit{1}} \times \A^{\petit{2}}) = \seq(\A^{\petit{1}}\!\!\uparrow^{(X^{\petit{2}}_p,X^{\petit{2}}_o)})
                         \cap \seq(\A^{\petit{2}}\!\!\uparrow^{(X^{\petit{1}}_p,X^{\petit{1}}_o)})
\end{equation}
}

An OTAIO  equipped 
with a set of states $F \subseteq S^\pA$ can play the role of an acceptor.
A run is  {\em accepted} in $F$ if it ends in $F$.
$\seq_{F}(\A)$ denotes the set of sequences of accepted runs
and $\traces_{F}(\A)$ the set of their traces.
By abuse of notation, if $L$ is a subset of locations in $L^\pA$,
we note $\seq_L(\A)$ for $\seq_{L\times\setRnn^{X^\pA}}(\A)$ and 
similarly for $\traces_L(\A)$. 
Note that for the product $\A^{\petit{1}} \times \A^{\petit{2}}$, if $F^{\petit{1}}$ and $F^{\petit{2}}$
are subsets of states of $\A^{\petit{1}}$ and $\A^{\petit{2}}$ respectively, additionally to~(\ref{eq-seq}), the following equality holds:
\begin{equation}
\label{eq-seq-acc} 
  \seq_{F^{\petit{1}}\times F^{\petit{2}}}(\A^{\petit{1}} \times
  \A^{\petit{2}}) =
  \seq_{F^{\petit{1}}}(\A^{\petit{1}}\!\!\uparrow^{(X^{\petit{2}}_p,X^{\petit{2}}_o)})
  \cap
  \seq_{F^{\petit{2}}}(\A^{\petit{2}}\!\!\uparrow^{(X^{\petit{1}}_p,X^{\petit{1}}_o)}).
\end{equation}

\section{Conformance testing theory}
\label{sec-conformance}

In this section, we recall the conformance theory for timed automata
based on the conformance relation \tioco~\cite{KrichenTripakis09}
 that formally defines the set of correct
implementations of a given TAIO specification.  $\tioco$ is a natural
extension of the $\ioco$ relation of Tretmans~\cite{Tretmans-SCT96} to timed
systems.  We then define test cases, formalize their executions,
verdicts and expected properties relating verdicts to conformance. 
Finally, we introduce a refinement
relation between TAIOs that preserves \tioco, and will be useful in proving 
test case properties.

\subsection{The \tioco~conformance theory}
We consider that the specification is given as a  (possibly
non-deterministic) TAIO $\A$.
The implementation is a black box, unknown
except for its alphabet of observable actions, which is the same as
the one of $\A$.  
As usual, in order to formally
reason about conformance, we assume that the implementation can be
modeled by an (unknown) TAIO.
Formally:
\begin{defi}[Implementation]
Let $\A=(L^\pA,\ell_0^\pA,\Sigma_?^\pA,\Sigma_!^\pA,\Sigma_\tau^\pA,X_p^\pA,\emptyset,M^\pA,\Inv^\pA,E^\pA)$ be a specification TAIO.
An implementation of $\A$ is an input-complete and non-blocking TAIO
$\Imp=(L^\imp,\ell_0^\imp,\Sigma_?,\Sigma_!,\Sigma_\tau^\imp,X_p^\imp,\emptyset,M^\imp,\Inv^\imp,E^\imp)$
with same observable alphabet as $\A$ ($\Sigma_?^\imp=\Sigma_?^\pA$ and $\Sigma_!^\imp=\Sigma_!^\pA$).
$\Imp(\A)$ denotes the set of possible implementations of $\A$.
\end{defi}

The requirements that an implementation is input-complete and non-blocking
will ensure that the execution of a test case on $\Imp$ does not
block before verdicts are emitted.

Among the possible implementations in $\Imp(\A)$, 
the conformance relation $\tioco$ (for {\em timed input-output conformance})~\cite{KrichenTripakis09} 
formally defines which ones conform to $\A$,
naturally extending the classical $\ioco$ relation of Tretmans~\cite{Tretmans-SCT96} to timed
systems:
\begin{defi}[Conformance relation]
Let $\A$ be a TAIO representing the specification and $\Imp \in \Imp(\A)$ be an implementation of $\A$. We say that $\Imp$ conforms to $\A$ and write
$\Imp \; \tioco \; \A \mbox{ if } \forall \sigma \in \traces(\A), out(\Imp \after \sigma) \subseteq out(\A \after \sigma)$.
\end{defi}

Note that \tioco\, is equivalent to the \rtioco\, relation  that was defined
independently in~\cite{Larsen-Mikucionis-Nielsen-2004} (see~\cite{DBLP:conf/formats/SchmaltzT08}).
Intuitively, $\Imp$ conforms to $\A$ if after any timed trace enabled
in $\A$, every output or delay of $\Imp$ is specified in $\A$. 
This means that $\Imp$ may accept more inputs than $\A$,
but is authorized to send less outputs, 
or send them during a more restricted time interval. 
The intuition is illustrated on the following simple example:

\begin{exa}
Figure~\ref{imps} represents a specification $\A$ and two possible 
implementations $\Imp_1$ and $\Imp_2$.
Note that $\Imp_1$ and $\Imp_2$ should be input-complete, but for simplicity of figures, 
we omit some inputs and consider that missing inputs loop 
to the current location. 
It is easy to see that $\Imp_1$ conforms to $\A$.
Indeed, it accepts more inputs, which is allowed
(after the trace $\epsilon$, $\Imp_1$ can receive $a$ and $d$
while $\A$ only accepts $a$), 
and emits the output $b$ during a more restricted interval 
of time ($out(\Imp_1 \after a.2)=[0,\infty)$ is included in $out(\A \after a.2)=[0,\infty) \sqcup \{b\}$).
On the other hand $\Imp_2$ does not conform to $\A$ for 
two reasons:
$\Imp_2$  may send a new output $c$ 
and may send $b$ during a larger time interval 
(\eg~$out(\Imp_2 \after a.1)=[0,\infty) \sqcup \{b,c\}$
is not included in ~$out(\A \after a.1)=[0,\infty)$). 
\end{exa}

\begin{center}
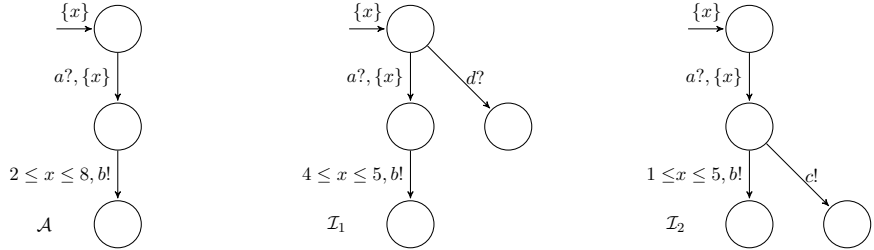
\begin{figure}[hbt]
\scalebox{0.65}{
\begin{minipage}{0.45\textwidth}
\begin{center}
  \begin{tikzpicture}[->,>=stealth',shorten >=1pt,auto,node distance=2cm,
                    semithick]

   \tikzstyle{every state}=[text=black]

  \node[state, fill=white] (A) {};
  \node[state, fill=white] (B) [below of=A, node distance=2cm, yshift=0cm] {};
  \node[state, fill=white] (C) [below of=B, node distance=2cm, yshift=0cm] {};
  \node[fill=white] (C') [left of=C, node distance=1.5cm, yshift=0cm] {$\A$};
  \node[state, fill=white, color=white, scale=.5] (G) [left of=A, node distance=3cm, yshift=0cm] {};
 
  \path (A) edge node [left] {$a?,\{x\}$} (B)
        (B) edge node [left] {$2\le x\le 8,b!$} (C)
        (G) edge node [above] {$\{x\}$} (A)
;
\end{tikzpicture}
\end{center}
\end{minipage}
\begin{minipage}{0.45\textwidth}
\begin{center}
  \begin{tikzpicture}[->,>=stealth',shorten >=1pt,auto,node distance=2cm,
                    semithick]

   \tikzstyle{every state}=[text=black]

  \node[state, fill=white] (A) {};
  \node[state, fill=white] (B) [below of=A, node distance=2cm, yshift=0cm] {};
  \node[state, fill=white] (C) [below of=B, node distance=2cm, yshift=0cm] {};
  \node[fill=white] (C') [left of=C, node distance=1.5cm, yshift=0cm] {$\Imp_1$};
  \node[state, fill=white, color=white, scale=.5] (G) [left of=A, node distance=3cm, yshift=0cm] {};
   \node[state, fill=white] (D) [right of=B, node distance=2cm, yshift=0cm] {};
  \path (G) edge node [above] {$\{x\}$} (A)
        (A) edge node [left] {$a?,\{x\}$} (B)
        (B) edge node [left] {${4 \leq x \leq 5},b!$} (C)
        (A) edge node [right] {${d?}$} (D)
;
\end{tikzpicture}
\end{center}
\end{minipage}
 \begin{minipage}{0.45\textwidth}
 \begin{center}
  \begin{tikzpicture}[->,>=stealth',shorten >=1pt,auto,node distance=2cm,
                    semithick]

   \tikzstyle{every state}=[text=black]

  \node[state, fill=white] (A) {};
  \node[state, fill=white] (B) [below of=A, node distance=2cm, yshift=0cm] {};
  \node[state, fill=white] (C) [below of=B, node distance=2cm, yshift=0cm] {};
  \node[fill=white] (C') [left of=C, node distance=1.5cm, yshift=0cm] {$\Imp_2$};
  \node[state, fill=white, color=white, scale=.5] (G) [left of=A, node distance=3cm, yshift=0cm] {};
  \node[state, fill=white] (D) [right of=C, node distance=2cm, yshift=0cm] {};
 
  \path (G) edge node [above] {$\{x\}$} (A)
        (A) edge node [left] {$a?,\{x\}$} (B)
        (B) edge node [left] {${1\le} x\le 5,b!$} (C)
        (B) edge node [right] {${c!}$} (D)        
;
 \end{tikzpicture}
 \end{center}
 \end{minipage}}

 \caption{Example of a specification $\A$ and two implementations $\Imp_1$ and $\Imp_2$.}\label{imps}
 \end{figure}   
 \end{center}
In practice, conformance is checked by test cases run on implementations.
In our setting, we define test cases as deterministic TAIOs
equipped with verdicts defined by a partition of states.

\begin{defi}[Test suite, test case]
Given a specification TAIO $\A$, 
a {\em test suite} is a set of {\em test cases},
where 
a {\em test case} is a pair $(\TC,\Verdicts)$
consisting of:
\begin{iteMize}{$\bullet$}
\item a deterministic TAIO 
$\TC=(L^\tc,\ell_0^\tc,\Sigma_?^\tc,\Sigma_!^\tc,\emptyset,X_p^\tc,\emptyset,M^\tc,\Inv^\tc,E^\tc)$,
\item a partition $\Verdicts$ of the set of states 
$S^\TC=\None \sqcup  \Inconc \sqcup \Pass \sqcup \Fail$.
States outside $\None$ are called {\em verdict states}. 
\end{iteMize} 
We also require that
\begin{iteMize}{$\bullet$}
\item 
$\Sigma_?^\tc=\Sigma_!^\pA$ and $\Sigma_!^\tc=\Sigma_?^\pA$,
\item 
$\TC$ is non-blocking, (\eg\ $\Inv^\tc(\ell)= \true$ for all $\ell \in L^\tc$),%
\item 
 $\TC$ is input-complete in all $\None$ states, 
meaning that it is ready to receive any input from the implementation
before reaching a verdict.
\end{iteMize}
\end{defi}\medskip

\noindent In the following, for simplicity we will sometimes abuse notations and
write $\TC$ instead of $(\TC, \Verdicts)$.  
Let us give some intuition about the different verdicts of test
cases. $\Fail$ states are those where the test case rejects an
implementation.  The intention is thus to detect a non-conformance.
$\Pass$ and $\Inconc$ states are linked to test purposes (see
Section~\ref{sec-generation}): the intention is that $\Pass$ states
should be those where no non-conformance has been detected and the
test purpose is satisfied, whereas $\Inconc$ states should be those
states where no non-conformance has been detected, but the test
purpose cannot be satisfied anymore.  $\None$ states are all other
states.  We insist on the fact that those are intentional
characterizations of the verdicts. Properties of test cases defined
later specify whether these intentions are satisfied by test cases.
We will see that it is not always the case for all properties.

The execution of a test case $\TC\in Test(\A)$ on an implementation $\Imp \in \Imp(\A)$
is  modeled by the parallel product $ \Imp \|\TC$,
which entails that $\traces(\Imp \| \TC) = 
\traces(\Imp) \cap \traces(\TC)$.
The facts that $\TC$ is input-complete (in $\None$ states) 
and non-blocking 
while  
$\Imp$ is input-complete (in all states) and non-blocking
ensure that no deadlock occurs before a verdict is reached.

We say that the verdict of an execution of trace $\sigma \in \traces(\TC)$,
noted ${\tt Verdict}(\sigma,\TC)$, is $\Pass$, $\Fail$, $\Inconc$ or $\None$
if $\TC \after \sigma$ is included in the corresponding states
set~\footnote{Note that $TC$ being deterministic, $\TC \after \sigma$ is a singleton.}.  
We write $\Imp \; {\tt fails} \; \TC$
if some execution $\sigma$ of $\Imp \| \TC$ leads $\TC$ to a $\Fail$
state, \ie when $\traces_{\pFail}(\TC) \cap \traces(\Imp) \neq
\emptyset$, which means that there exists $\sigma \in \traces(\Imp)
\cap \traces(\TC)$ such that 
${\tt Verdict}(\sigma,\TC)= \Fail$.
\ignore{~\footnote{The execution of a test case $\TC$ 
on an implementation $\Imp$
is usually modeled by the standard parallel composition $\Imp  \| \TC$. 
Due to space limitations, $\|$ is not defined here,
but we use its trace properties: $\traces(\Imp \| \TC) = 
\traces(\Imp) \cap \traces(\TC)$.}.}
Notice that this is only a possibility to reach the
$\Fail$ verdict among the infinite set of executions of $\Imp  \| \TC$.  
Hitting one of these executions is not ensured both because 
of the lack of control of $\TC$ 
on $\Imp$ and of timing constraints imposed by these executions.
 
We now introduce soundness, a crucial property ensured by our test generation method. 
We also introduce exhaustiveness 
 and strictness that will be ensured  
when determinization is exact (see Section~\ref{sec-generation}).

\begin{defi}[Test suite soundness, exhaustiveness and strictness]
A  test suite $\TS$ for $\A$ is:
\begin{iteMize}{$\bullet$}
\item {\em sound} if
$\forall \Imp \in \Imp(\A)$, $\forall \TC \in \TS$, $\Imp \; {\tt fails} \; \TC \Ra \neg (\Imp \; \tioco \;\A)$,
\item {\em exhaustive} if 
$\forall \Imp \in \Imp(\A)$, $\neg (\Imp \; \tioco \; \A) \Ra \exists \TC \in \TS$, $\Imp \; {\tt fails} \; \TC$,
\item {\em strict} if  
$\forall \Imp \in \Imp(\A), \forall \TC \in \TS, 
\neg (\Imp \| \TC \; \tioco \; \A) \Ra \Imp \; {\tt fails} \; \TC$.
\end{iteMize}
\end{defi}

Intuitively, soundness means that no conformant implementation can be
rejected by the test suite, \ie any failure of a test case during its
execution characterizes a non-conformance.  Conversely, exhaustiveness means
that every non-conformant implementation may be rejected by the test
suite.  Remember that the definition of $\Imp \; {\tt fails} \; \TC$
indicates only a possibility of reject.  Finally, strictness means
that non-conformance is detected once it occurs.  In fact, $\neg (\Imp
\| \TC \; \tioco \; \A)$ means that there is a trace common to $\TC$
and $\Imp$ which does not conform to $\A$.  The universal
quantification on $\Imp$ and $\TC$ implies that any such trace will
fail $\TC$. In particular, this implies that failure will be detected
as soon as it occurs.  


\begin{center}
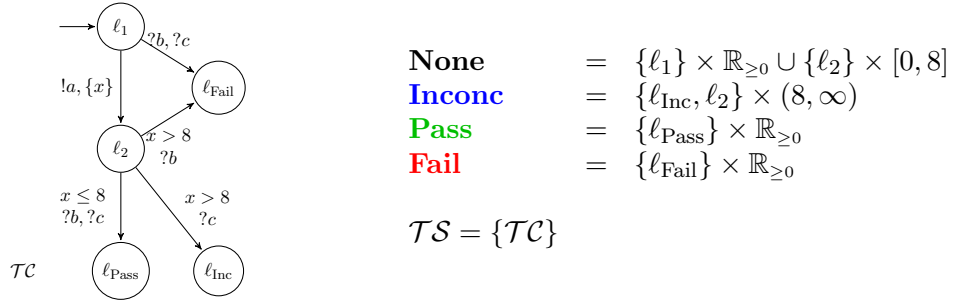
\begin{figure}[hbt]
\scalebox{0.65}{
\begin{minipage}{0.7\textwidth}
\begin{center}
  \begin{tikzpicture}[->,>=stealth',shorten >=1pt,auto,node distance=2cm,
                    semithick]

   \tikzstyle{every state}=[text=black]

  \node[state, fill=white] (A) {$\ell_1$};
  \node[state, fill=white] (B) [below of=A, node distance=2.5cm, yshift=0cm] {$\ell_2$};
  \node[state, fill=white] (C) [below of=B, node distance=2.5cm, yshift=0cm] {$\ell_{\mbox{\scriptsize{Pass}}}$};
  \node[fill=white] (C') [left of=C, node distance=2cm, yshift=0cm] {$\TC$};
  \node[state, fill=white, color=white, scale=.5] (G) [left of=A, node distance=3cm, yshift=0cm] {};
   \node[state, fill=white] (D) [right of=B, node distance=2cm, yshift=1.25cm] {$\ell_{\mbox{\scriptsize{Fail}}}$};
   \node[state, fill=white] (E) [right of=C, node distance=2cm, yshift=0cm] {$\ell_{\mbox{\scriptsize{Inc}}}$};
  \path (G) edge node [above] {} (A)
        (A) edge node [left] {$!a,\{x\}$} (B)
        (B) edge node [left] {$\begin{array}{c}x\le8\\ ?b,?c\end{array}$} (C)
        (A) edge node [above] {$?b,?c$} (D)
        (B) edge node [below] {$\begin{array}{c}x>8\\?b\end{array}$} (D)
        (B) edge node [right] {$\begin{array}{c}x>8\\?c\end{array}$} (E)
;
\end{tikzpicture}
\end{center}
\end{minipage}}
 \begin{minipage}{0.5\textwidth}
$\begin{array}{lll}\None&=&\{\ell_1\}\times \setRnn\cup\{\ell_2\}\times\lbrack0,8\rbrack \\
 \bleu{\Inconc}&=&\{\ell_{\mbox{\scriptsize{Inc}}},\ell_2\}\times (8, \infty)\\
  \verte{\Pass}&=&\{\ell_{\mbox{\scriptsize{Pass}}}\}\times\setRnn\\ 
  \rouge{\Fail}&=&\{\ell_{\mbox{\scriptsize{Fail}}}\}\times\setRnn\\
  \\
  \TS=\{\TC\}
  \end{array}$
 \end{minipage}
 \caption{Example of a sound but not strict test suite for the specification $\A$ (Figure~\ref{imps}).}\label{ex_sound}
 \end{figure}   
 \end{center}

\begin{exa}
Figure~\ref{ex_sound} represents a test suite composed of a single test case $\TC$ for the specification $\A$ of the Figure~\ref{imps}. Indeed, $\TC$ is a TAIO which is input-complete in the $\None$ states.
$\TS$ is sound because
the $\Fail$ states of $\TC$ are reached only when a conformance error occurs,
\eg~on trace $1.b$. 
However, this test case can observe non-conformant traces without detecting them, hence $\TS$ is not strict. For example, $1.a.1.b$, $1.a.1.c$ and $1.a.9.c$ are non-conformant traces that do not imply a  $\Fail$ verdict. 
These traces are \eg~traces of $\Imp_2$ (Figure~\ref{imps}) which should allow
to detect that $\neg(\Imp_2 \,\tioco\, \A)$.
\end{exa}

\subsection{Refinement preserving \tioco}
We introduce an io-refinement relation between two TAIOs, a
generalization to non-deterministic TAIOs of the 
io-refinement  between DTAIOs introduced
in~\cite{David-Larsen-etal-HSCC10}, itself a generalization of
alternating simulation~\cite{Alur-Henzinger-Kupferman-Vardi-CONCUR98}.
Informally $\A$ io-refines $\B$ if $\A$ specifies more inputs and allows 
less outputs and delays.
As a consequence,  if $\A$ and $\B$ are specifications,
$\A$ is more restrictive than $\B$ with respect to conformance.
We thus prove that io-abstraction (the inverse relation) preserves \tioco:
if $\Imp$ conforms to $\A$, it also conforms to any io-abstraction
$\B$ of $\A$.  This will  ensure that  soundness of test cases is
preserved by the approximate determinization defined in Section~\ref{sec-determinisation}.

\begin{defi}
  Let $\A$ and $\B$ be two TAIOs with same input and output alphabets,
  we say that $\A$ {\em io-refines} $\B$ (or $\B$ {\em io-abstracts}
  $\A$) and note $\A \preceq \B$ if
\begin{equation*}
\begin{aligned}
(i) \quad \forall \sigma \in \traces(\B), \; & out(\A \after  \sigma)  \subseteq out(\B  \after \sigma) \text{ and,}\\
(ii) \quad \forall \sigma \in \traces(\A), \; & in(\B  \after  \sigma)  \subseteq in(\A  \after  \sigma).
\end{aligned}
\end{equation*}
\end{defi}

As we will see below, $\preceq$ is a preorder relation.  
Moreover, as condition (ii)
is always satisfied if $\A$ is input-complete, for $\Imp \in
\Imp(\A)$, $\Imp \; \tioco\; \A$ is equivalent to $\Imp \preceq \A$.
By transitivity of $\preceq$, it follows that io-refinement preserves
conformance (see Proposition~\ref{prop:tioco-alt-sim}).

\begin{lem}
The io-refinement $\preceq$ is a preorder relation.
\end{lem}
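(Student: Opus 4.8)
The plan is to verify the two defining properties of a preorder: reflexivity and transitivity.

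\medskip

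\noindent\textbf{Reflexivity.} For any TAIO $\A$ (with equal input and output alphabets to itself, trivially), both conditions of $\A \preceq \A$ reduce to the set inclusions $out(\A \after \sigma) \subseteq out(\A \after \sigma)$ and $in(\A \after \sigma) \subseteq in(\A \after \sigma)$, which hold by reflexivity of $\subseteq$. So $\A \preceq \A$.

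\medskip

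\noindent\textbf{Transitivity.} Suppose $\A \preceq \B$ and $\B \preceq \C$, where all three share the same input and output alphabets. I want $\A \preceq \C$. The first point to settle is that the trace-set inclusions implied by the hypotheses line up the way we need. From $\A \preceq \B$, condition (i) gives in particular that for every $\sigma \in \traces(\B)$, $out(\A \after \sigma) \subseteq out(\B \after \sigma)$; since $out(s)$ always contains $elapse(s)$ and hence the delay $0$, nonemptiness of $\A \after \sigma$ forces... actually the cleaner route is: if $\sigma \in \traces(\A)$ then $\sigma \in \traces(\B)$. Indeed, a trace $\sigma \in \traces(\A)$ means $\A \after \sigma \neq \emptyset$, so $out(\A \after \sigma) \ni 0$; were $\sigma \notin \traces(\B)$ we could still have $\sigma$ a prefix issue, so instead argue directly by induction on the structure of $\sigma$ using condition (i) of $\A \preceq \B$ applied along prefixes: each observable action or delay extending a trace of $\B$ that is realizable in $\A$ lies in $out(\B \after \cdot)$ or is an input handled by condition (ii), hence stays in $\traces(\B)$. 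Thus $\traces(\A) \subseteq \traces(\B)$, and symmetrically $\traces(\B) \subseteq \traces(\C)$.

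\medskip

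\noindent With these trace inclusions in hand the two conditions chain: for condition (i), take $\sigma \in \traces(\C)$. Either $\sigma \in \traces(\B)$, in which case $out(\B \after \sigma) \subseteq out(\C \after \sigma)$ by (i) of $\B \preceq \C$, and $out(\A \after \sigma) \subseteq out(\B \after \sigma)$ by (i) of $\A \preceq \B$ (valid since $\sigma \in \traces(\B)$), giving $out(\A \after \sigma) \subseteq out(\C \after \sigma)$; or $\sigma \notin \traces(\B)$, whence by the trace inclusion $\sigma \notin \traces(\A)$, so $\A \after \sigma = \emptyset$ and $out(\A \after \sigma) = \emptyset \subseteq out(\C \after \sigma)$. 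For condition (ii), take $\sigma \in \traces(\A) \subseteq \traces(\B)$; then $in(\B \after \sigma) \subseteq in(\A \after \sigma)$ by (ii) of $\A \preceq \B$, and $in(\C \after \sigma) \subseteq in(\B \after \sigma)$ by (ii) of $\B \preceq \C$ (valid since $\sigma \in \traces(\B)$), so $in(\C \after \sigma) \subseteq in(\A \after \sigma)$. Hence $\A \preceq \C$.

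\medskip

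\noindent The main obstacle is the bookkeeping lemma $\A \preceq \B \Rightarrow \traces(\A) \subseteq \traces(\B)$: the two conditions of $\preceq$ quantify over different trace sets ($\traces(\B)$ for outputs, $\traces(\A)$ for inputs), so one must carefully induct on prefixes, at each step distinguishing whether the next symbol is a delay/output (covered by (i), after checking the current prefix is in $\traces(\B)$ — which is exactly what the induction hypothesis supplies) or an input (covered by (ii)). Once that lemma is established, the rest is routine inclusion-chasing with an empty-set escape hatch for traces that fall outside the relevant domain.
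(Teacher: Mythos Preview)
Your bookkeeping lemma $\A \preceq \B \Rightarrow \traces(\A) \subseteq \traces(\B)$ is false, and this breaks both branches of your transitivity argument. The input case of your inductive step is not ``covered by (ii)'': condition (ii) says $in(\B \after \sigma) \subseteq in(\A \after \sigma)$, which is the \emph{wrong direction} to push an input of $\A$ into $\B$. Concretely, take $\B$ with a single location and no edges, and $\A$ identical but with one input edge $a?$ leaving the initial location (both with trivial invariants and, say, output alphabet $\{b\}$). Then for every $\sigma \in \traces(\B)$ (pure delays) we have $out(\A \after \sigma)=\setRnn=out(\B \after \sigma)$, and for every $\sigma \in \traces(\A)$ we have $in(\B \after \sigma)=\emptyset \subseteq in(\A \after \sigma)$; hence $\A \preceq \B$, yet $0.a \in \traces(\A)\setminus\traces(\B)$.

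The paper's proof sidesteps this entirely: it never claims trace inclusion. For condition (i) with $\sigma \in \traces(\C)\setminus\traces(\B)$, it locates the \emph{first} divergence point, writing $\sigma=\sigma'.a.\sigma''$ with $\sigma'\in\traces(\B)\cap\traces(\C)$ and $\sigma'.a\in\traces(\C)\setminus\traces(\B)$. Condition (ii) of $\B\preceq\C$ (applied at $\sigma'\in\traces(\B)$) forces $a$ to be an output or delay; then condition (i) of $\A\preceq\B$ (applied at $\sigma'\in\traces(\B)$) gives $a\notin out(\A\after\sigma')$, so $\sigma'.a\notin\traces(\A)$ and $out(\A\after\sigma)=\emptyset$. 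The argument for condition (ii) is symmetric. You should replace your trace-inclusion detour with this first-divergence analysis.
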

\begin{proof}
The relation $\preceq$ is trivially reflexive and we prove that it is transitive.

Suppose that $\A \preceq \B$ and $\B \preceq \C$.
By definition of $\preceq$ we have:
\begin{equation*}
\begin{aligned}
\forall \sigma \in \traces(\B), \; & out(\A \after  \sigma)  \subseteq out(\B  \after \sigma) \quad (1)\\
\forall \sigma \in \traces(\A), \; & in(\B  \after  \sigma)  \subseteq in(\A  \after  \sigma) \quad (2)  \quad and \\
\forall \sigma \in \traces(\C), \; & out(\B \after  \sigma)  \subseteq out(\C  \after \sigma) \quad (3)\\
\forall \sigma \in \traces(\B), \; & in(\C  \after  \sigma)  \subseteq in(\B  \after  \sigma) \quad (4)
\end{aligned}
\end{equation*}
We want to prove that $\A \preceq \C$ thus that
\begin{equation*}
\begin{aligned}
\forall \sigma \in \traces(\C), \; & out(\A \after  \sigma)  \subseteq out(\C  \after \sigma) \quad (5) \\
\forall \sigma \in \traces(\A), \; & in(\C  \after  \sigma)  \subseteq in(\A  \after  \sigma) \quad (6)
\end{aligned}
\end{equation*}
In order to  prove $(5)$, 
let $\sigma \in \traces(\C)$, and examine the two cases: 
\begin{iteMize}{$\bullet$}
\item If $\sigma \in \traces(\B) \cap \traces(\C) $ then $(1)$ and $(3)$ 
imply    $out(\A \after  \sigma)  \subseteq out(\B  \after \sigma)$ and 
$out(\B  \after \sigma) \subseteq out(\C  \after \sigma)$. Thus 
$out(\A  \after \sigma) \subseteq out(\C  \after \sigma)$ and we are done.
\item If $\sigma \in \traces(\C)\setminus \traces(\B)$, 
there exist $\sigma', \sigma'' \in (\Sigma_{obs} \sqcup \setRnn)^*$ 
and $a\in \Sigma_{obs} \sqcup \setRnn$ such that 
$\sigma=\sigma'.a.\sigma''$ with $\sigma'\in \traces(\B) \cap \traces(\C)$ 
and   $\sigma'.a \in \traces(\C)\setminus \traces(\B)$.
As $\B \preceq \C$, by $(4)$ we get that $a \in \Sigma_! \sqcup \setRnn$. 
But as $\A \preceq \B$, and $\sigma' \in \traces(\B)$, 
the condition $(1)$ induces that $out(\A \after  \sigma')  \subseteq out(\B  \after \sigma')$,
and then $\sigma'.a \in \traces(\C)\setminus \traces(\A)$.
We deduce that  $out(\A \after  \sigma'.a)=\emptyset$,  and thus
 $out(\A \after  \sigma) = \emptyset \subseteq out(\C \after  \sigma)$.
\end{iteMize}
The proof of $(6)$ is similar.
\end{proof}


\begin{prop}
\label{prop:tioco-alt-sim}
If $\A \preceq \B$ then $\forall \Imp \in \Imp(\A)$ $(=\Imp(\B))$, 
$\Imp \; \tioco \; \A \Ra \Imp \; \tioco \; \B$.
\end{prop}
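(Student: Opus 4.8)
The plan is to route everything through the io-refinement preorder $\preceq$, exploiting the observation made just before the preceding lemma: for an input-complete TAIO, $\tioco$ and $\preceq$ coincide. First I would record that $\Imp(\A) = \Imp(\B)$. This is immediate, since the definition of $\preceq$ requires $\A$ and $\B$ to have the same input and output alphabets, and the set of possible implementations of a specification depends only on its observable alphabet.

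Next I would make the key equivalence precise: for every $\Imp \in \Imp(\A)$, $\Imp \;\tioco\; \A$ is equivalent to $\Imp \preceq \A$, and symmetrically with $\B$ in place of $\A$. The interesting direction uses the input-completeness of implementations. Condition $(i)$ in the definition of $\preceq$ (namely $\forall \sigma \in \traces(\A), out(\Imp \after \sigma) \subseteq out(\A \after \sigma)$) is literally the definition of $\Imp \;\tioco\; \A$, so nothing is needed there. For condition $(ii)$, take $\sigma \in \traces(\Imp)$; then $\Imp \after \sigma$ is nonempty and every state it contains is reachable, so by input-completeness of $\Imp$ we have $in(s) = \Sigma_?$ for each such state, hence $in(\Imp \after \sigma) = \Sigma_?$, which trivially contains $in(\A \after \sigma)$. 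Thus $(ii)$ holds for free, and $\tioco$ agrees with $\preceq$ on input-complete TAIOs.

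Finally I would chain the pieces using the preceding lemma (transitivity of $\preceq$). From $\Imp \;\tioco\; \A$ we obtain $\Imp \preceq \A$; combined with the hypothesis $\A \preceq \B$, transitivity gives $\Imp \preceq \B$; and condition $(i)$ of $\Imp \preceq \B$ is exactly $\Imp \;\tioco\; \B$. There is no genuine obstacle: the content of the proposition is essentially repackaging the lemma together with the remark identifying $\tioco$ with $\preceq$ on input-complete implementations. The only point deserving care is spelling out that $\tioco$/$\preceq$ equivalence, in particular checking that $\Imp \after \sigma$ is nonempty (it is, being the set of states reached along a trace of $\Imp$) so that input-completeness may legitimately be invoked.
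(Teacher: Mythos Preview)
Your proposal is correct and follows essentially the same approach as the paper: reduce to transitivity of $\preceq$ after observing that, for input-complete $\Imp$, condition $(ii)$ of $\preceq$ holds trivially so that $\Imp\;\tioco\;\A$ is equivalent to $\Imp \preceq \A$. The paper's proof is slightly terser but the structure is identical.
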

\begin{proof}
%

This proposition 
is a direct consequence of the transitivity of $\preceq$.
In fact when $\Imp$ is input-complete, by definition $\forall \sigma \in \traces(\Imp), in(\Imp \after \sigma) = \Sigma_?$, thus condition (ii) of $\preceq$ 
trivially holds:
$\forall \sigma \in \traces(\Imp), in(\A \after \sigma) \subseteq in(\Imp \after \sigma)$.
Thus $\Imp \,\tioco\, \A$ (which is defined by $\forall \sigma \in \traces(\A), out(\Imp \after \sigma) \subseteq out(\A \after \sigma)$) is equivalent to $\Imp \preceq \A$.
Now suppose $\A \preceq B$ and $\Imp \,\tioco\, \A$ then the transitivity of $\preceq$ gives $\Imp \,\tioco \,\B$.
\end{proof}

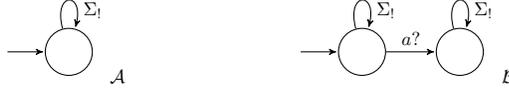
\begin{figure}[htb]
\scalebox{0.65}{
\begin{minipage}{0.45\textwidth}
\begin{center}
  \begin{tikzpicture}[->,>=stealth',shorten >=1pt,auto,node distance=2cm,
                    semithick]

   \tikzstyle{every state}=[text=black]

  \node[state, fill=white] (A) {};
  \node[state, fill=white, color=white, scale=.5] (G) [left of=A, node distance=3cm, yshift=0cm] {};
    \node[fill=white] (C) [right of=A, node distance=1cm, yshift=-.5cm] {$\A$};
 
  \path (A) edge [loop above] node [right, pos=.8] {$\Sigma_!$} (A)
        (G) edge node [above] {} (A)
;
\end{tikzpicture}
\end{center}
\end{minipage}
\begin{minipage}{0.45\textwidth}
\begin{center}
  \begin{tikzpicture}[->,>=stealth',shorten >=1pt,auto,node distance=2cm,
                    semithick]

   \tikzstyle{every state}=[text=black]

  \node[state, fill=white] (A) {};
  \node[state, fill=white] (B) [right of=A, node distance=2cm, yshift=0cm] {};
  \node[state, fill=white, color=white, scale=.5] (G) [left of=A, node distance=3cm, yshift=0cm] {};
    \node[fill=white] (C) [right of=B, node distance=1cm, yshift=-.5cm] {$\B$};
 
  \path (A) edge [loop above] node [right, pos=.8] {$\Sigma_!$} (A)
        (G) edge node [above] {} (A)
        (A) edge node [above] {$a?$} (B)
        (B) edge [loop above] node [right, pos=.8] {$\Sigma_!$} (B)
;
\end{tikzpicture}
\end{center}
\end{minipage}}
\caption{Counter-example to converse of Proposition~\ref{prop:tioco-alt-sim}.}
\label{fig-counter-ex}
\end{figure}

\paragraph{\bf Remark:} unfortunately, the converse of
Proposition~\ref{prop:tioco-alt-sim} is in general false, already in
the untimed case.  
This is illustrated 
in Figure~\ref{fig-counter-ex}.
It is clear that the automaton $\A$  accepts all implementations.
$\B$ also accepts all implementations as,
from the conformance point of view, when a
specification does not specify an input after a trace, this is
equivalent to specifying this input and then to accept the universal
language on $\Sigma_{obs}$.  
Thus
$\Imp \;\tioco\; \A \Rightarrow \Imp \;\tioco\; B$.
However 
$\neg (\A \preceq \B)$ as 
$in(\B \after \epsilon)=\{a\}$ 
but  $in(\A \after \epsilon)=\emptyset$.
Notice that this example also works for the untimed case in the $\ioco$ conformance theory.

As a corollary of Proposition~\ref{prop:tioco-alt-sim}, we get that
io-refinement preserves soundness of test suites:
\begin{cor}
\label{cor}
If $A \preceq B$ 
then any sound test suite for $\B$ is also sound for $\A$.
\end{cor}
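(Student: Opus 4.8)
The plan is to reduce the statement to the contrapositive of Proposition~\ref{prop:tioco-alt-sim}. First I would fix an arbitrary sound test suite $\TS$ for $\B$ and unfold what has to be shown: that $\TS$ is sound for $\A$, i.e.\ that for every $\Imp \in \Imp(\A)$ and every $\TC \in \TS$, $\Imp \; {\tt fails} \; \TC$ implies $\neg(\Imp \; \tioco \; \A)$.

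The key preliminary observation is that $\Imp(\A) = \Imp(\B)$. Indeed, $\A \preceq \B$ requires $\A$ and $\B$ to have the same input and output alphabets, and an implementation of a specification is by definition just an input-complete, non-blocking TAIO sharing that observable alphabet; hence the two sets of candidate implementations coincide. Moreover the predicate $\Imp \; {\tt fails} \; \TC$ only refers to $\traces_{\pFail}(\TC) \cap \traces(\Imp)$, so it is a property of the pair $(\Imp,\TC)$ alone and does not mention the specification from which $\TS$ was built. With this in hand the argument is immediate: given $\Imp \in \Imp(\A)$ and $\TC \in \TS$ with $\Imp \; {\tt fails} \; \TC$, we have $\Imp \in \Imp(\B)$, so soundness of $\TS$ for $\B$ yields $\neg(\Imp \; \tioco \; \B)$; applying Proposition~\ref{prop:tioco-alt-sim} to $\A \preceq \B$ gives $\Imp \; \tioco \; \A \Ra \Imp \; \tioco \; \B$, and its contrapositive $\neg(\Imp \; \tioco \; \B) \Ra \neg(\Imp \; \tioco \; \A)$ then delivers $\neg(\Imp \; \tioco \; \A)$, as required.

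I do not expect any genuine obstacle: the corollary is a purely logical rearrangement once Proposition~\ref{prop:tioco-alt-sim} is available. The only point that deserves an explicit sentence is the identification $\Imp(\A)=\Imp(\B)$ and the accompanying remark that ``failing a test case'' is independent of the reference specification; the rest is contraposition.
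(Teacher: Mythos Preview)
Your proposal is correct and follows essentially the same argument as the paper: unfold soundness for $\B$, apply the contrapositive of Proposition~\ref{prop:tioco-alt-sim}, and conclude soundness for $\A$. Your explicit remarks that $\Imp(\A)=\Imp(\B)$ and that ``failing a test case'' is specification-independent are implicit in the paper's proof (the former already appears parenthetically in the statement of Proposition~\ref{prop:tioco-alt-sim}), so there is no substantive difference.
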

%
\begin{proof}
Let $\TS$ be a  sound test suite for $\B$.
By definition, for any $\Imp \in \Imp(\B)$, for any $\TC \in \TS$, $\Imp \; {\tt fails} \; \TC \Ra \neg (\Imp \; \tioco \;\B)$.
As we have $\A \preceq \B$, by Proposition~\ref{prop:tioco-alt-sim}, 
we obtain $\neg (\Imp \; \tioco \;\B) \Ra \neg (\Imp \; \tioco \;\A)$
which implies 
that for any $\Imp \in \Imp(\B)$, for any $\TC \in \TS$, $\Imp \; {\tt fails} \; \TC \Ra \neg (\Imp \; \tioco \;\A)$.
Thus $\TS$ is also sound for $\A$.
\end{proof}
In the sequel, this corollary will justify 
our methodology:
from $\A$ a non-deterministic TAIO, 
build a deterministic io-abstraction $\B$ of $\A$,
then any test case generated from $\B$ and sound
is also sound for $\A$.

\section{Approximate determinization preserving conformance}
\label{sec-determinisation}

We recently proposed a game approach to determinize or provide a
deterministic over-approximation for TAs~\cite{BertrandStainerJeronKrichen-FOSSACS2011}.
Determinization is exact on all known classes of determinizable TAs
(\eg  ~event-clock TAs, TAs with integer resets, strongly non-Zeno TAs) if
resources (number and clocks and maximum constant) are sufficient.  
This
method can be adapted to the context of testing for building a
deterministic io-abstraction of a given TAIO. Thanks to
Proposition~\ref{prop:tioco-alt-sim}, the construction preserves
$\tioco$.

The approximate determinization uses the classical region\footnote{Note that it could be adapted to zones with some loss in precision.}
construction~\cite{AlurDill94}.  As for classical timed automata, the
regions form a partition of valuations over a given set of clocks
which allows to make abstractions in order to decide properties such as
the reachability of a
location.  
We note $\Reg_{(X,M)}$ the set of regions over clocks $X$ with maximal
constant $M$.  A region $r'$ is a {\em time-successor} of a region $r$
if $\exists v \in r, \exists t \in \setRnn, v+t \in r'$.  Given
  $X$ a set of clocks, a relation over $X$ is a finite conjunction $C$
  of atomic constraints of the form $x-y \sim c$ where $x, y \in X$,
  $\sim \in \{<,=,>\}$ and $c\in \setN$. When all constants $c$ belong
  to $[-M,M]$ for some constant $M \in \setN$ we denote by
  $\Rel{X}{}{M}$ for the set of relations over $X$.
Given a region $r$, we write $\overleftrightarrow{r}^{\petit{M}}$ for the smallest
relation in $\Rel{X}{}{M}$ containing $r$. 

\subsection{A game approach to determinize timed automata}
The technique presented
in~\cite{BertrandStainerJeronKrichen-FOSSACS2011} applies first to
TAs, \ie the alphabet only consists of one kind of actions (say output
actions), and the invariants are all trivial. Given such a TA
$\mathcal{A}$ over set of clocks $X^\pA$, a deterministic TA $\B$
with a new set of clocks $X^\pB$ is built, with $\traces(\A)=
\traces(\B)$ as often as possible, or $\traces(\A)\subseteq
\traces(\B)$.
Resources of $\B$ 
are fixed, and the goal is to simulate the clocks of $\A$ by choosing
the right resets in $\B$.  To this aim, letting $k=|X^\pB|$, a finite
2-player zero-sum turn-based safety game
$\mathcal{G}_{\pA,(k,M^\pB)}=(\V_S,\V_D,\v_0,\delta_S\sqcup\delta_D,\Bad)$
is built. 
The two players, Spoiler and Determinizator, alternate moves, the
objective of player Determinizator being to remain in a set of safe
states
where intuitively, for sure no over-approximation has been performed.
In this game, every strategy for Determinizator yields a deterministic
automaton $\B$ with $\traces(\A)\subseteq \traces(\B)$, and every
winning strategy induces a deterministic TA $\B$ equivalent to
$\A$. 
It is well known that for safety games, winning strategies can be
chosen positional (\emph{i.e.}, only based on the current state) and
computed in linear time in the size of the arena (see
\emph{e.g.}~\cite{mazala-lncs2500}).

The game $\mathcal{G}_{\pA,(k,M^\pB)}=(\V_S,\V_D,\v_0,\delta_S\sqcup\delta_D,\Bad)$
is defined as follows: 
\begin{iteMize}{$\bullet$}
\item $\V_S= 2^{L^\pA\times\Rel{X^\pA\sqcup
      X^\pB}{}{\max(M^\pA,M^\pB)}\times\{\bot,\top\}}\times
  \Reg_{(X^\pB,M^\pB)}$ is the set of states of Spoiler.  Each state
  is a pair $\v_S=(\mathcal{E}, r)$ where $r$ is a region over
  $X^\pB$, and $\mathcal{E}$ is a finite set of \emph{configurations}
  of the form $(\ell,C,b)$ where $\ell$ is a location of $\A$, $C$ is
  a relation over $X^\pA \sqcup X^\pB$ with respect to the maximal
  constant $M=\max(M^\pA,M^\pB)$, and $b$ is a boolean marker ($\top$ or
  $\bot$).  A state of Spoiler thus constitutes a state estimate of
  $\A$, and the role of the marker $b$ is to indicate whether
  over-approximations possibly happened.
\item $\V_D= \V_S \times (\Sigma \times \Reg_{(X^\pB,M^\pB)})$ is the
  set of states of Determinizator. Each state $\v_D=(\v_S,(a,r'))$
  consists of a state of Spoiler, together with an action and a region
  over $X^\pB$ which role is to remember the last move of Spoiler.
\item $\v_0=(\{(\ell_0,C_0,b_0)\},\{\overline{0}\})\in \V_S$, the
  initial state of the game, is a state of Spoiler consisting of a
  single configuration with
the initial location $\ell_0$ of $\A$, the simple relation $C_0$
over $X^\pA \sqcup X^\pB$: $\forall x, y \in X^\pA\sqcup X^\pB,\
x-y=0$, a marker $b_0=\top$ (no over-approximation was done so far),
together with the null region over $X^{\pB}$.
\item $\delta_S \subseteq \V_S \times (\Sigma \times
  \Reg_{(X^\pB,M^\pB)}) \times \V_D$ and $\delta_D \subseteq \V_D \times
  2^{X^\pB} \times \V_S$ are inductively defined from $\v_0$ as follows:
\begin{iteMize}{$-$}
\item moves of Spoiler are pairs $(a,r')$ and the successor of a
  state $\v_S=(\mathcal{E},r) \in \V_S$ by the move $(a,r')$ is simply
  $\v_D=((\mathcal{E},r),(a,r'))$, \ie a copy of $\v_S$ together with a
  challenge for Determinizator consisting in an action $a$ and a
  region $r'\in \Reg_{(X^\pB,M^\pB)}$, a time-successor of $r$;
\item moves of Determinizator are resets $Y \subseteq X^\pB$ and
  the successor of a state $\v_D=((\mathcal{E},r),(a,r')) \in \V_D$ by the
  reset $Y \subseteq X^{\pB}$, is the state of Spoiler
  $(\mathcal{E}',r'_{[Y \leftarrow 0]})\in \V_S$ where $\mathcal{E}'=
  \{\Succ_e[(a,r'),Y](\ell,C,b) \mid (\ell,C,b)\in \mathcal{E}\}$ and
$$
\Succ_e[(a,r'),Y](\ell,C,b) =\left\{(\ell',C',b') \, \left|\, \begin{array}{l} \exists \ell
  \xrightarrow{g,a,X} \ell' \in E \textrm{ s.t. } [r' \cap C]_{|X^\pA}
  \cap g \neq \emptyset \\ C' = \overleftrightarrow{(r' \cap C \cap g)_{[X
    \leftarrow 0][Y \leftarrow 0]}}^{\petit{M}}\\ b' = b \wedge ([r' \cap
  C]_{|X^\pA} \subseteq g) \end{array} \right\}\right..
$$
In words, $\mathcal{E}'$ is the set of elementary successors of
configurations in $\mathcal{E}$ by $(a,r')$ and by resetting $Y$.  An
elementary successor of a configuration $(\ell,C, b)$ by a transition
$\ell \xrightarrow{g,a,X} \ell'$ exists only if the guard $[r' \cap C]
_{|X^\pA}$ over $X^\pA$ induced by the guard $r'$ over $X^\pB$ through
the relation $C$ intersects $g$. 
Intuitively, the transition is possible in $\ell$ according to the
state estimate $(\ell,C)$ and the region $r'$. 
The resulting configuration $(\ell',C',b')$ is such that:
\begin{iteMize}{$*$}
\item $\ell'$ is
the location reached by the transition; 
\item $C'$ is the relation between
clocks in $X^\pA$ and $X^\pB$ after the moves of the two players, that
is after satisfying the guard $g$ in $r'\cap C$, resetting $X
\subseteq X^\pA$ and $Y\subseteq X^\pB$; 
\item $b'$ is a boolean set to
$\top$ if both $b = \top$ and the induced guard $[r' \cap C]_{|X^\pA}$
over $X^\pA$ implies $g$.  
Intuitively, $b'$ becomes $\bot$ when
  $r'$ encodes more values than $g$, thus an over-approximation
  possibly happens.
\end{iteMize}
\end{iteMize}
Note that during the construction of $\delta_S$ and $\delta_D$, the
states of Determinizator whose successors by $\delta_D$ have an empty
set of configurations are removed, together with the moves in
$\delta_S$ leading to them.  Indeed these moves have no counterpart in
$\A$.
\item $\Bad = \{(\mathcal{E},r)\in \V_S \mid \forall (\ell,C,b) \in \mathcal{E}, b= \bot\}$. Bad states Determinizator wants to avoid are states where
all configurations are marked $\bot$, \ie configurations where
an approximation possibly happened.
Note that a single configuration marked $\top$ in a state is enough to
ensure that no over-approximation happened. Indeed, for any path in
the game leading to such a state, starting from a $\top$-marked
configuration, and taking elementary predecessors, one can build
backwards a sequence of configurations following this path. By
definition of the marker's update, these configurations are all marked
$\top$, and the sequence thus corresponds to real traces in the
non-deterministic automaton.
\end{iteMize}

\begin{exa} Figure~\ref{fig:ex-ta} represents a simple non-deterministic timed automaton $\A$. 
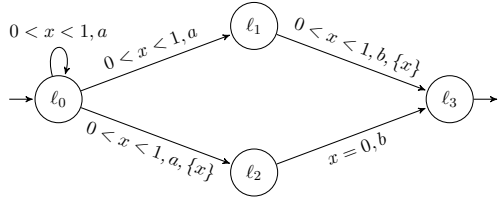
\begin{figure}[htbp]
\begin{center}
\scalebox{0.65}{
\begin{tikzpicture}[->,>=stealth',shorten >=1pt,auto,node distance=2cm,
                    semithick]

  \tikzstyle{every state}=[text=black]

  \node[state, fill=white] (A) {$\ell_0$};
  \node[state, fill=white] (B) [right of=A, node distance=4cm, yshift=1.5cm] {$\ell_1$};
  \node[state, fill=white] (C) [right of=A, node distance=4cm, yshift=-1.5cm] {$\ell_2$};
  \node[state, fill=white] (D) [right of=A, node distance=8cm] {$\ell_3$};
  \node[state, fill=white, color=white] (E) [left of=A, node distance=1.5cm, yshift=0cm] {};
\node[state, fill=white, color=white] (F) [right of=D, node distance=1.5cm, yshift=0cm] {}
;
  
  \path (A) edge node [above,sloped] {$0<x<1,a$} (B)
        (E) edge (A)
        (A) edge [loop above] node [above] {$0<x<1,a$} (A)
        (A) edge node [below,sloped] {$0<x<1,a,\{x\}$} (C)
	(B) edge node [above,sloped] {$0<x<1,b,\{x\}$} (D)
	(C) edge node [below,sloped] {$x=0,b$} (D)      
        (D) edge (F)
;
\end{tikzpicture}}
\caption{Non-deterministic timed automaton $\A$.  }\label{fig:ex-ta}
\end{center}
\end{figure}
Let us explain how to construct the game
$\mathcal{G}_{\mathcal{A},(1,1)}$ for $\A$ with resources $(1,1)$,
that is a single clock $y$ and maximal constant $1$.  We only detail
part of the construction in Figure~\ref{fig:ex-game}, but the complete
game can be found in~\cite{BertrandStainerJeronKrichen-FOSSACS2011}.
 \begin{figure}[htb]
\begin{center}
\scalebox{.65}{
\begin{tikzpicture}[->,>=stealth',shorten >=1pt,auto,node distance=2cm,
                     semithick]

   \tikzstyle{every state}=[text=black]

   \node[state, fill=white] [rectangle] (A) {$\begin{array}{ll} 
\ell_0,x-y=0,\top&\{0\}\end{array}$};
   \node[state, fill=white] (A1) [right of=A, node distance=5cm,  
yshift=0cm] {};
   \node[rectangle] (B) [below of=A1, node  
distance=2cm, xshift=0cm] {};
    \node[state, fill=white] [rectangle] (C)  [right of=A1, node  
distance=3.9cm, xshift=0cm] {$\begin{array}{ll}\ell_0,0<x-y<1,\top &  
\multirow{3}{*}{\{\emph{0}\}} \\\ell_1,0<x-y<1,\top& \\\ell_2,x-y=0,\top&  
\end{array}$};
   \node[state, fill=white] (C1)  [right of=C, node distance=5.2cm,  
yshift=0cm] {};
   \node[state, fill=GrisC!90!black] [rectangle] (C11) [right of=C1, node  
distance=4cm, xshift=0cm] {$\begin{array}{ll}\ell_0,0<x-y<1,\bot&  
\multirow{3}{*}{\emph{(0,1)}}\\\ell_1,0<x-y<1,\bot&\\\ell_2,-1<x-y<0,\bot 
\end{array}$};
   \node[rectangle] (C12) [below of=C1, node  
distance=2cm, xshift=0cm] {};
   \node (C2) [below of=C, node distance=2cm,  
xshift=0cm] {};
   \node (C111) [below of=C11, node distance=2cm,  
xshift=0cm] {};
   \node[fill=white, color=white] (H) [above of=A, node  
distance=1.5cm] {};

   \path (H) edge node [left] {$\{y\}$} (A)
         (A) edge node [above] {$0<y<1,a$} (A1)
         (C) edge node [above] {$0<y<1,a$} (C1)
         (C) edge [dashed] node [left] {} (C2)            
         (A1) edge node [above] {$\{y\}$} (C)
        (A1) edge [dashed] node [left] {$\emptyset$} (B)
         (C1) edge node [above] {$\emptyset$} (C11)
         (C11) edge [dashed] node [left] {} (C111)
        (C1) edge [dashed] node [left] {$\{y\}$}  (C12)
;

\end{tikzpicture}}
\caption{Part of the game $\mathcal{G}_{\mathcal{A},(1,1)}$.}
\label{fig:ex-game}
\end{center}
\end{figure}
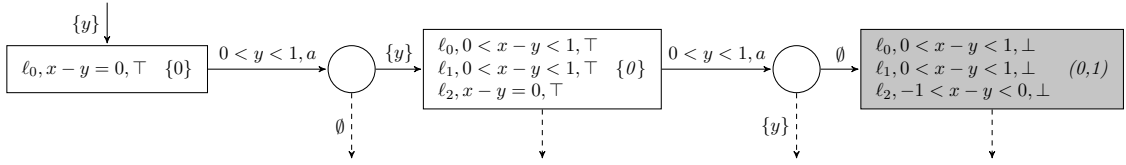

As defined above, the initial state of the game is simply
$\v_0=(\{(\ell_0,x-y=0,\top)\},\{0\})$. 

From $\v_0$, the only move of Spoiler compatible with behaviors of
$\A$ is $0<y<1,a$.
Corresponding transitions in $\A$  lead to
locations $\ell_0$, $\ell_1$ and $\ell_2$, and only in this last
location $x$ has been reset.  Each transition of $\A$ yields a
configuration in the next state of Spoiler, and assuming
Determinizator chooses to reset $y$, the three different
configurations are the following:
 \begin{iteMize}{$\bullet$}
 \item one with location $\ell_0$, where $x\in(0,1)$ (no reset in $\A$) and $y=0$ (reset in $\G{\A}{1,1}$),
 \item one with location $\ell_1$, where $x\in(0,1)$ and $y=0$,
 \item and one with location $\ell_2$, where $x=0$ (reset in $\A$) and
   $y=0$.
 \end{iteMize} 
 In the two first configurations, the new relation is
 $\overleftrightarrow{(y=0<x<1)}^1$, that is $0 <x-y <1$, and in the
 last configuration, the new relation is simply $x -y =0$. As a
 consequence the successor state is $\v_1 = (\{(\ell_0,0<x-y<1,\top),
 (\ell_1,0<x-y<1,\top), (\ell_2,x-y=0,\top)\} ,\{0\})$. Note that all
 markers are $\top$ since the guard on $y$ faithfully represented the
 ones on $x$. 

 From state $\v_1$, if Spoiler chooses the move $0<y<1, a$, it is not
 obvious to which transitions in $\A$ this corresponds, and we thus
 explain in details how to compute the successor state. 
 First observe that the only configuration in $\v_1$ from which an $a$
 action is possible is the first one, with location $\ell_0$. In this
 configuration, the relation is $0<x-y<1$.  Let us now explain what
 guard over $x$ is induced by the relation $C = 0<x-y<1$ and the region
 $r' = 0<y<1$.
\begin{figure}[htbp]
\begin{center}
\scalebox{.65}{
\begin{tikzpicture}[scale=2]
\draw [fill=blue!30,color=blue!30]
(0,1) -- (2.9,1) -- (2.9,0) -- (0,0) -- (0,1);

  \draw (-0.3,.5) node[right] {\Large{$r'$}};
  
\draw [fill=yellow!30,color=yellow!30]
     (1,1) -- (2.9,2.9) -- (2.9,1.9) -- (2,1) -- (1,1);

\draw[fill=green!30,color=green!30]
(0,0) -- (1,1) -- (2,1) -- (1,0) -- (0,0);

\begin{scope}[opacity=.6]

\draw [pattern=dots,pattern color =blue]
(0,1) -- (2.9,1) -- (2.9,0) -- (0,0) -- (0,1);
\draw [pattern=north east lines, pattern color=orange]
      (0,0) -- (2.9,2.9) -- (2.9,1.9) -- (1,0) -- (0,0);

\end{scope}
  \draw (2.2,1.7) node[right] {\Large{$C$}};
     
  \draw (.7,.5) node[right] {\Large{$r' \cap C$}};
 \draw[style=help lines] (0,0) grid (2.9,2.9);

  \draw[->] (-0.2,0) -- (3,0) node[right] {\Large{$x$}};
  \draw[->] (0,-0.2) -- (0,3) node[above] {\Large{$y$}};

  \foreach \x/\xtext in {1/1, 2/2}
    \draw[shift={(\x,0)}] (0pt,2pt) -- (0pt,-2pt) node[below] {$\xtext$};

  \foreach \y/\ytext in {1/1, 2/2}
    \draw[shift={(0,\y)}] (2pt,0pt) -- (-2pt,0pt) node[left] {$\ytext$};
\draw[line width= 4pt,color=green!75!black] (0,0) -- (2,0);
  \draw (0.5,-.4) node[right] {\Large{$[r'\cap C]_{|\{x\}}$}};
  \draw[line width= 4pt,color=blue] (0,0) -- (0,1);
\end{tikzpicture}}\caption{Construction of the induced guard.
}\label{fig:ind-guard}
\end{center}
\end{figure}
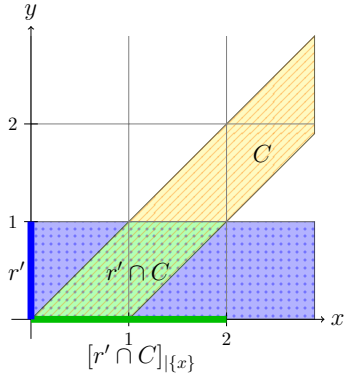
Figure~\ref{fig:ind-guard} illustrates this computation.  The dotted
area represents the set of the valuations over $\{x,y\}$ satisfying
the guard $r'=0<y<1$ and the dashed area represents the relation
$C=0<x-y<1$. The induced guard $[r'\cap C]_{|\{x\}}$ (\ie the guard
over $x$ encoded by the guard $r'$ on $y$ through the relation $C$) is
then the projection over clock $x$ of the intersection of these two
areas. In this example, the induced guard is $0<x<2$. Therefore, the
transitions of $\A$ corresponding to the choice of Spoiler $0<y<1, a$
are as before the three ones originating in $\ell_0$, but this time they
are over-approximated. Indeed, the induced guard $[r'\cap C]_{|\{x\}}$
is not included in the original guard $0<x<1$ in $\A$, \ie~{\em a priori}
$r'$ encodes more values than $g$.  As a consequence, all the
configurations in Spoiler's successor state are marked $\bot$.
Last, let us detail how the new relations are computed.
Assuming Determinizator chooses not to reset $y$ leads to state $\v_2$,
in which for the configuration with location $\ell_0$, the relation is
the smallest one containing $(0<x-y<1)\cap(0<y<1)\cap(0<x<1)$, namely
$0<x-y<1$. The relation for the last configuration in $\v_2$ is
$\overleftrightarrow{\bigl((0<x-y<1)\cap (0<y<1) \cap
  (0<x<1)\bigr)_{[x \leftarrow 0]}}^1$, which is same as
$\overleftrightarrow{(x=0 <y<1)}^1$, namely $-1 < x-y <0 $.

\end{exa}

As explained earlier, a strategy for Determinizator chooses in each
state of $\V_D$ a set $Y \subseteq X^\pB$ of clocks to reset.  With
every strategy $\Pi$ for Determinizator we associate the TA
$\B=\mathsf{Aut}(\Pi)$ obtained by merging a transition of Spoiler
with the transition chosen by Determinizator just after.  The
following theorem links strategies of Determinizator with
deterministic over-approximations of the original traces language and
enlightens the interest of the game:
\begin{thm}[\cite{BertrandStainerJeronKrichen-FOSSACS2011}]
\label{th:game}
Let $\A$ be a TA, and $k,M^\pB \in \setN$. For any
strategy $\Pi$ of Determinizator in
$\mathcal{G}_{\pA,(k,M^\pB)}$, $\B=\mathsf{Aut}(\Pi)$ is a
deterministic timed automaton over resources $(k,M^\pB)$ and satisfies
$\traces(\A)\subseteq \traces(\B)$. 
Moreover, if $\Pi$ is winning, then
$\traces(\A) =
\traces(\B)$.
\end{thm}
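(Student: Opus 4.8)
The plan is to prove the two inclusions on traces by relating runs of $\B=\mathsf{Aut}(\Pi)$ to runs of $\A$ through the configurations carried in the states of the game. First I would observe that $\B$ is deterministic and respects the resources $(k,M^\pB)$ essentially by construction: after merging a Spoiler move $(a,r')$ with Determinizator's chosen reset $Y=\Pi(\cdots)$, the edge of $\B$ goes from a Spoiler state $\v_S=(\E,r)$ to the Spoiler state obtained by applying $\Succ_e[(a,r'),Y]$ componentwise; the guard of this edge is $r'$ (a region over $X^\pB$, hence $M^\pB$-bounded) and the invariants are trivial, so $\B$ is a genuine TA over $(k,M^\pB)$. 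Determinism follows because from each $\v_S$ the regions $r'$ that are time-successors of $r$ partition the possible delay-then-$a$ behaviours, and $\Pi$ picks exactly one reset per $(\v_S,(a,r'))$; so after any trace $\B$ is in a single state.

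Next, for the inclusion $\traces(\A)\subseteq\traces(\B)$, I would set up the key invariant: along any run of $\B$ reading a timed word $w$, if the reached Spoiler state is $(\E,r)$, then for every run of $\A$ reading (a trace-equivalent refinement of) $w$ and ending in location $\ell$ with clock valuation $u$, there is a configuration $(\ell,C,b)\in\E$ and a valuation $\bar v$ of $X^\pB$ with $\bar v\in r$ such that the joint valuation $(u,\bar v)$ satisfies $C$. The proof is by induction on the length of the run, using the definition of $\Succ_e$: the condition $[r'\cap C]_{|X^\pA}\cap g\neq\emptyset$ together with the update $C'=\overleftrightarrow{(r'\cap C\cap g)_{[X\leftarrow 0][Y\leftarrow 0]}}^{M}$ is exactly what is needed to push the pair of valuations forward. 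Since every discrete move of $\A$ (satisfying its guard $g$) together with a matching delay is captured by some region $r'$ and some elementary successor, and since Determinizator's strategy always supplies a reset, every trace of $\A$ is matched by a trace of $\B$. A small point to be careful about is the pruning of Determinizator states with empty configuration sets: I must check this only removes moves of $\B$ that have no counterpart in $\A$, so it does not destroy any trace of $\A$ — which is immediate from the invariant.

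For the second part, assuming $\Pi$ is winning I want the reverse inclusion $\traces(\B)\subseteq\traces(\A)$, giving equality. Here I would use the marker: a winning strategy keeps the play out of $\Bad$, so every reachable Spoiler state $(\E,r)$ contains at least one configuration $(\ell,C,\top)$. The crucial lemma, sketched in the excerpt's discussion of $\Bad$, is a \emph{backward} statement: from any $\top$-marked configuration in a reachable state, following the game path backwards and taking elementary predecessors, one reconstructs a sequence of $\top$-marked configurations all the way to $\v_0$, and by the update rule $b'=b\wedge([r'\cap C]_{|X^\pA}\subseteq g)$ the $\top$ marker certifies that at each step the induced guard was contained in the original guard $g$, i.e.\ no over-approximation occurred. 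This reconstructed sequence is therefore a genuine run of $\A$ with the same trace as the run of $\B$. So I would (i) prove this backward-reachability lemma by induction on the path length, and (ii) conclude that any trace of $\B$, witnessed by a $\top$-configuration in the final state, is a trace of $\A$.

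The main obstacle I anticipate is the backward lemma in the winning case: it requires showing that the relations $C$ and regions $r'$ recorded along the play are tight enough that a $\top$ at the end forces a coherent \emph{single} run of $\A$ witnessing the whole trace, rather than merely a set of states each reachable by possibly different runs. Concretely, one must verify that when $b'=\top$ the set $(r'\cap C)_{|X^\pA}$ is exactly the reachable valuation set in $\A$ at that location (not a strict over-approximation), and that taking elementary predecessors is the inverse operation of $\Succ_e$ on $\top$-configurations; the delicate bookkeeping is that $C$ is the \emph{smallest} relation in $\Rel{X^\pA\sqcup X^\pB}{}{M}$ containing the true reset-image, so equality with the true set can fail if $M=\max(M^\pA,M^\pB)$ is too small relative to the constants generated — but this is precisely why resources must be ``sufficient,'' and in the region-based setting with the stated $M$ the containment $[r'\cap C]_{|X^\pA}\subseteq g$ guarantees the guard is met exactly, which is all that is needed. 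The rest — determinism, the resource bound, and the forward inclusion — is routine bookkeeping with $\Succ_e$.
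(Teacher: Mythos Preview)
The paper does not actually prove this theorem: it is stated with a citation to~\cite{BertrandStainerJeronKrichen-FOSSACS2011} and no proof is given here. So there is no ``paper's own proof'' to compare against in the strict sense.

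That said, your outline is consistent with the only proof hint the paper offers, namely the remark following the definition of $\Bad$: a single $\top$-marked configuration in a reachable state suffices, because ``starting from a $\top$-marked configuration, and taking elementary predecessors, one can build backwards a sequence of configurations following this path,'' all marked $\top$, which ``corresponds to real traces in the non-deterministic automaton.'' This is precisely your backward lemma for the winning case, and your forward invariant for $\traces(\A)\subseteq\traces(\B)$ is the natural companion. Your identification of determinism and the resource bound as by-construction is also correct.

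One comment on the ``main obstacle'' you flag: your worry that the closure $\overleftrightarrow{\cdot}^M$ might be too coarse for the backward argument is slightly misplaced. You do not need $(r'\cap C)_{|X^\pA}$ to equal the exact reachable set in $\A$; what the $\top$ marker records is only that the induced guard is \emph{contained} in $g$, so every $X^\pB$-valuation in $r'$ that is consistent with $C$ corresponds to some $X^\pA$-valuation satisfying $g$. That is enough to fire the edge in $\A$. The backward reconstruction therefore does not require tightness of $C$, only the implication $[r'\cap C]_{|X^\pA}\subseteq g$ at each $\top$-step, which is exactly what the marker update guarantees. With that clarification the argument goes through.
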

When there is no winning strategy, one can either try to increase
  resources (number of clocks and/or maximal constant), or try to
  choose the best losing strategy, which is a concern.  Indeed, the
  language inclusion seems to be a good criterion to compare two
  losing strategies, but it is not a total ordering.  Alternatively,
  one can use the natural heuristics which tends to lose as late as
  possible (see~\cite{BertrandStainerJeronKrichen-RR2010}).  In
  particular, for a game with $k$ clocks and same maximal constant as
  the original timed automaton, there is a strategy which ensures not
  to lose before $k$ moves (of each players): by choosing to reset a
  new clock at each of its moves, Determinizator ensures 
to perfectly  encode all clocks of the original timed automaton. 
 Other
  alternatives would be to consider heuristics based on quantitative
  measures over languages.

\subsection{Extensions to TAIOs and adaptation to \tioco}
In the context of model-based testing, the above-mentioned
determinization technique must be adapted to TAIOs, as detailed
in~\cite{BertrandStainerJeronKrichen-RR2010}, and summarized below.  
The model of TAIOs
is an expressive model of timed automata incorporating internal actions and
invariants.  
Moreover, inputs and outputs must be treated differently
in order to build
from a TAIO $\A$ a DTAIO $\B$ such that $\A \preceq \B$,
and then to preserve \tioco.

\ignore{
In the context of model-based testing, the alphabet of actions is
partitioned into inputs and outputs \NB{(and internal actions)}. In
order to construct from a specification an approximation while 
preserving conformance, input and output actions have to be treated
differently: successors by outputs need to be over-approximated to
allow for more behaviors whereas successors by inputs must be
under-approximated to specify less behaviors (\NB{clarifier
  encore}). 
Moreover, it is necessary to be able to deal with models incorporating
internal actions and invariants, since they increase the expressive
power of the specifications by allowing silent transitions and notions
of urgency.

The game approach can be adapted to deal with these issues and the
extensions, detailed in~\cite{BertrandStainerJeronKrichen-RR2010}, are summarized below.  }

\begin{iteMize}{$\bullet$}
\item
  \textit{Internal actions} are naturally part of the specification
  model.  They cannot be observed during test executions and should
  thus be removed during determinization.  In order to do so, a
  closure by internal actions is performed for each state during the
  construction of the game, 
  that is, in each state, all the configurations reachable by internal
  actions are added to the set of configurations. 
To this attempt, states of the game have to be extended since internal 
actions might be enabled 
from a subset of time-successors 
of the region associated with the state.  Therefore, each
configuration is associated with a proper region which is a
time-successor of the initial region of the state.
The closure by internal actions is effectively computed the same way
as successors in the original construction when Determinizator is
  not allowed to reset any clock.
  It is well known that timed automata with silent transitions are
  strictly more expressive than standard timed
  automata~\cite{BerardGastinPetit-STACS96}.  Therefore, our
  approximation can be coarse, but it performs as well as possible
  with its available clock information.
\item
  \textit{Invariants} are classically used to model urgency in timed
  systems.
  Taking into account urgency of outputs is quite important, indeed
  without the ability to express it, for instance, any 
  dummy system would conform to all specifications. Ignoring all
  invariants in the approximation as done in~\cite{KrichenTripakis09}
  surely yields an io-abstraction: delays (considered as outputs) are
  over-approximated.  In order to be more precise, while preserving
  the io-abstraction relation $\preceq$, with each state of the game
  is associated the most restrictive invariant containing invariants
  of all the configurations in the state.  In the computation of the
  successors, invariants are treated as guards and their validity is
  verified at both ends of the transition.  A state whose
  invariant is strictly over-approximated is treated as unsafe in the
  game.
\item
Rather than over-approximating a given TAIO $\A$, we aim here at
building a DTAIO $\B$ \textit{io-abstracting} $\A$ ($\A \preceq
\B$).
Successors by outputs are over-approximated as in the original game,
while successors by inputs must be under-approximated. The over-approximated closure by silent transitions is not suitable to under-approximation.
Therefore, states of the game are extended to contain both over-approximated and under-approximated closures.
Thus, the
unsafe successors by an input (where possibly an over-approximation
would occur), are not built.
\end{iteMize}
\begin{exa}
\label{exa3.3}
Figure~\ref{ExProddet} represents a non-deterministic
  timed automaton $\A'$ that has invariants and internal actions. It
  is a sub-automaton of the timed automaton we use in the next section
  (see Figure~\ref{ExProd}) to illustrate the approximate
  determinization for our test selection.
\begin{figure}[htbp]
\begin{center}
\scalebox{0.65}{
\begin{tikzpicture}[->,>=stealth',shorten >=1pt,auto,node distance=2cm,
                    semithick]

  \tikzstyle{every state}=[text=black]

  \node[state, fill=white] (A) {$\ell_0$};
  \node[state, fill=white] (B) [right of=A, node distance=3.5cm, yshift=1cm] {$\ell_1$};
  \node[state, fill=white] (C) [right of=B, node distance=4.5cm, yshift=0cm] {$\ell_2$};
  \node[state, fill=white] (F) [right of=A, node distance=3.5cm, yshift=-1cm] {$\ell_5$};
  \node[state, fill=white] (G) [right of=F, node distance=4.5cm, yshift=0cm] {$\ell_6$};
  \node[state, fill=white, color=white] (A') [left of=A, node distance=2cm, yshift=0cm] {};
  \node[fill=white] (Ae) [below of=A, node distance=.9cm, yshift=0cm] {$x \le 1$};
  \node[fill=white] (Ce) [below of=C, node distance=.9cm, yshift=0cm] {$x \le 1$};
  \node[fill=white] (Ge) [below of=G, node distance=.85cm, yshift=0cm] {$x = 0$};
  
  \path (A') edge node [above] {} (A)
        (A) edge node [above,sloped] {$x=1, \tau$} (B)    
        (B) edge node [above] {$1<x<2, a?, \{x\}$} (C)   
        (C) edge [loop above] node [above] {$x=1, \tau, \{x\}$} (C)    
        (A) edge node [above,sloped] {$x=1, \tau, \{x\}$} (F)    
        (F) edge node [above] {$x<1, a?, \{x\}$} (G)   
;

\end{tikzpicture}}
\caption{Non-deterministic timed automaton $\A'$ (with invariants and
  internal actions).  }\label{ExProddet}
\end{center}
\end{figure}
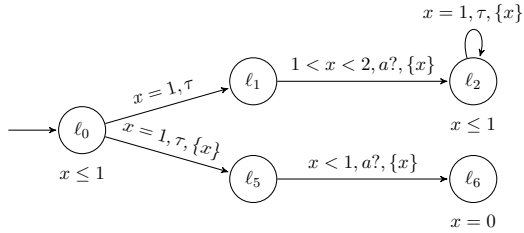

Using this automaton $\A'$, let us illustrate how the game
construction is adapted to deal with internal actions and invariants,
by detailing part of the game $\mathcal{G}_{\A',(1,2)}$ represented in
Figure~\ref{ExJeudet}.
\begin{figure}[htbp]
\begin{center}
\scalebox{0.7}{
\begin{tikzpicture}[->,>=stealth',shorten >=1pt,auto,node distance=2cm,
                    semithick]

  \tikzstyle{every state}=[text=black]

  \node[state, fill=white] [rectangle] (A)
  {$\begin{array}{ll}(\ell_0,x-y=0,\top)& \{0\}\\ 
 (\ell_1,x-y=0,\top)& \{1\}\\ (\ell_5,x-y=-1,\top) &\{1\} \end{array}$};
  \node[state, fill=white] (AB) [right of=A, node distance=4.8cm, yshift=0cm] {};
  \node[state, fill=white] [rectangle] (B1) [right of=AB, node distance=3.8cm, yshift=1cm] {$(\ell_6,x-y=0,\top)\; \{0\}$};
  \node[state, fill=white] (AC) [right of=A, node distance=4.8cm, yshift=-2cm] {};
  \node[fill=white] (C2) [below of=AC, node distance=2cm, yshift=0cm] {};
  \node[state, fill=GrisC!90!black,inner xsep=0] [rectangle] (C1) [right of=AC, node distance=4.2cm, yshift=0cm] {$\begin{array}{ll}(\ell_6,x-y=0,\top)& \{0\}\\
(\ell_2,x-y=0,\top)& \{0\} \\ 
(\ell_2,x-y=-1,\top)& \{1\}\\ 
 (\ell_2,x-y=-2,\top)& \{2\} \\ \hline \hline 
(\ell_2,x-y<-2, \bot) & (2,\infty)\end{array}$};
  \node[state, fill=white, color=white] (A') [left of=A, node distance=3.7cm, yshift=0cm] {};
    \node[fill=white] (Ae) [below of=A, node distance=1.2cm, yshift=0cm] {$\true,\top$};
   \node[fill=white] (C1e) [below of=C1, node distance=1.7cm, xshift=0cm] {$\true,\bot$};
  \node[fill=white] (B1e) [below of=B1, node distance=.9cm, yshift=0cm] {$y = 0,\top$};
  \node[rectangle] (B2) [above of=AB, node distance=2cm] {};
  
    \path (A') edge node [above] {} (A)
        (A) edge node [above] {$y=1, a?$} (AB)    
        (AB) edge node [above,pos=.3] {$\{y\}$} (B1)   
        (AB) edge [dashed] node [right] {$\emptyset$} (B2)   
        (A) edge node [below, sloped] {$1<y<2,a?$} (AC)   
        (AC) edge node [above] {$\{y\}$} (C1)   
        (AC) edge [dashed] node [right] {$\emptyset$} (C2)    
;

\end{tikzpicture}}
\caption{Part of the game $\mathcal{G}_{\A',(1,2)}$.
}\label{ExJeudet}
\end{center}
\end{figure}
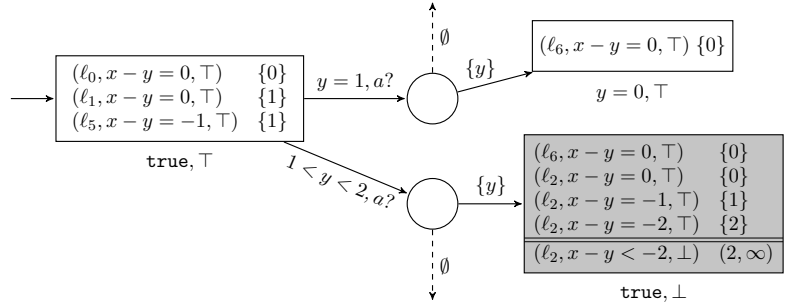

A state of Spoiler in the game is a triple $(S_-,S_+,(I,b_I))$ where
$S_-$ (resp. $S_+$) is the 
under-approximated (resp. over-approximated) closure by unobservable
actions of the successors by some observable action, 
$I$ is the invariant and $b_I$ is the marker which indicates a risk of
approximation of the invariant. The invariant and the marker of
Spoiler's states are written below the states.

In the initial state of the game, $(\ell_0,x-y=0,\top,\{0\})\in S_-
\subseteq S_+$. Moreover, an internal action $\tau$ can be fired for
$x=1$ along two different edges, which add two configurations,
associated with the region $y=1$ (because $x-y=0$ in the first
configuration). Determinizator cannot reset $y$ along an internal
action, hence the relation for the configuration with location
$\ell_5$ is $x-y=-1$. Note that the region $y=1$ is associated with
the two last configurations in the initial state, reflecting that the
internal action fired and thus the least value for $y$ is $1$.  Also
in this case, the closure (by internal actions) is not approximated,
hence $S_-=S_+$.  
On the other hand, it may be surprising that the invariant of this
initial state is $\true$ whereas the invariant of the initial state of
$\A'$ is $x\le1$.  In fact, the invariant of a state is the smallest
invariant containing the union, over all its configurations, of induced
invariants.  On this example, after an internal action from $\ell_0$,
delays are not constrained anymore in $\ell_1$ and $\ell_6$
(invariants are $\true$).  Thus the invariant in the initial state of
the game is not approximated, so its marker is $\top$.

From this initial state, Spoiler can choose the regions $y=1$ or
$1<y<2$ together with action $a?$. For $y=1$, this can only happen
from the configuration with location $\ell_5$. Indeed, the relation $x
-y =0$ and the guard $y=1$ induce a guard $x=1$ which is not
compatible with the outgoing edge from $\ell_1$ in $\A'$. The
computation of the successor state, \eg~when Determinizator chooses to
reset $y$, is simple: no internal action is fireable and the
invariant in $\ell_6$ is precisely expressed by $y=0$. The situation
is more complex when Spoiler chooses the region $1<y<2$: in this case
there are two successors by the observable action $a?$ (leading to
locations $\ell_6$ and $\ell_2$), and for the first one internal actions
may follow. We thus have to compute the closure by internal actions of
the successor configuration by observable action $a?$. Before
computing the closure, and assuming that Determinizator resets clock
$y$, the successor state is composed of two configurations:
$(\ell_2,x-y =0,\top)$ and $(\ell_6,x-y=0,\top)$ together with region
$y=0$. Along the $\tau$-loop on location $\ell_2$, $x$ is reset in
$\A'$ whereas $y$ cannot be reset in the game (because it is an
internal action). Starting from configuration $(\ell_2,x-y
=0,\top,\{0\})$ and performing once the internal action $\tau$, the
resulting configuration is thus $(\ell_2,x-y =-1,\top,\{1\})$. This
computation is iterated to obtain the closure by internal actions,
which in such a case, will depend on the maximal constant (here
$2$). Indeed, after $(\ell_2,x-y =-1,\top,\{1\})$, the next
configuration is $(\ell_2,x-y=-2,\top,\{2\})$ and starting from
$(\ell_2,x-y=-2,\top,\{2\})$ the effect of one internal action would
yield to $(\ell_2,x-y=-3,\top,\{3\})$. However, $x-y =-3$ cannot be
expressed in $\Rel{\{x,y\}}{}{2}$, so it is approximated by the least
relation of $\Rel{\{x,y\}}{}{2}$ containing it, that is
$x-y<-2$. Similarly, region $y=3$ is approximated by $y>2$. As a
consequence, the configuration $(\ell_2,x-y=-3,\top,\{3\})$ is
approximated by $(\ell_2,x-y<-2,\bot,(2,\infty))$ in $S_+$. Note that
this latter configuration is in $S_+ \setminus S_-$ and thus separated
from configurations in $S_-$ by two horizontal lines on
Figure~\ref{ExJeudet}.
Moreover, taking the union of all the invariants, we obtain $\true$ as
invariant for this state, but since it is approximated for the last
configuration $(\ell_2,x-y<-2,\bot,(2,\infty))$, its marker is $\bot$.
\end{exa}

All in all, these modifications allow to deal with the full TAIO model
with invariants, internal transitions and inputs/outputs.  In
particular, the treatment of invariants is consistent with the
io-abstraction: delays are considered as outputs, thus
over-approximated.
Figure~\ref{ExJeu} represents a part of this game for the TAIO of Figure~\ref{ExProd}.
The new game then enjoys the following nice
property:
\begin{prop}[\cite{BertrandStainerJeronKrichen-RR2010}]
\label{altSim}
  Let $\A$ be a TAIO, and $k,M^\pB \in \setN$. 
  For any
  strategy $\Pi$ of Determinizator in the game
  $\mathcal{G}_{\pA,(k,M^\pB)}$, $\B=\mathsf{Aut}(\Pi)$ is a
  DTAIO over resources $(k,M^\pB)$ with $\A \preceq \B$. 
  Moreover, if $\Pi$ is winning, then
  $\traces(\A) =  \traces(\B)$.
\end{prop}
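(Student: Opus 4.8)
The plan is to reduce Proposition~\ref{altSim} to Theorem~\ref{th:game} by carefully tracking how the three extensions (internal actions, invariants, input/output distinction) affect the game construction, and to argue that each extension is designed precisely so as to preserve the io-refinement $\A \preceq \B$ rather than mere trace inclusion. First I would fix a strategy $\Pi$ for Determinizator in $\mathcal{G}_{\pA,(k,M^\pB)}$ and set $\B = \mathsf{Aut}(\Pi)$. The fact that $\B$ is a DTAIO over resources $(k,M^\pB)$ is essentially syntactic: each Spoiler move fixes an action $a\in\Sigma_{obs}$ and a target region $r'$ over $X^\pB$, Determinizator's reply fixes a reset set $Y\subseteq X^\pB$, and merging the two produces a single edge; determinism follows because distinct regions are disjoint and the strategy is a function. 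The invariant of $\B$ at a state is read off from the $(I,b_I)$ component, and by construction it uses only constants up to $M^\pB$.

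Next I would establish the two inclusions defining $\A \preceq \B$, namely
\begin{equation*}
\begin{aligned}
(i)\quad &\forall \sigma \in \traces(\B),\ out(\A \after \sigma) \subseteq out(\B \after \sigma),\\
(ii)\quad &\forall \sigma \in \traces(\A),\ in(\B \after \sigma) \subseteq in(\A \after \sigma).
\end{aligned}
\end{equation*}
For this I would use the key invariant of the construction: for every run of $\B$ on a trace $\sigma$ reaching a Spoiler state $(S_-,S_+,(I,b_I))$, the over-approximated component $S_+$ contains (a region-abstracted version of) every state of $\A$ reachable on $\sigma$, while the under-approximated component $S_-$ contains only states genuinely reachable in $\A$. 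Then for $(i)$: any output or delay offered by $\A$ after $\sigma$ corresponds to a configuration in $S_+$ — the closure by internal actions is computed with the over-approximating rule (Determinizator not resetting), so internal transitions of $\A$ are faithfully captured, and invariants are relaxed to the smallest enclosing invariant, so delays are only over-approximated — hence it is offered by $\B$. For $(ii)$: the construction deliberately omits the unsafe successors by inputs, building input-successors only from the under-approximated closure $S_-$; so every input enabled in $\B$ after $\sigma$ is backed by a genuine state of $\A$ reachable on $\sigma$, hence is enabled in $\A$. This is where one must check that $\sigma\in\traces(\A)$ (the hypothesis of $(ii)$) guarantees $S_-$ is nonempty and correctly populated along $\sigma$.

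For the "moreover" part, if $\Pi$ is winning then no reachable Spoiler state lies in $\Bad$, i.e.\ every reachable state has at least one $\top$-marked configuration and, additionally (in the extended game), a non-over-approximated invariant. As noted in the text just before Theorem~\ref{th:game}, a single $\top$-marked configuration lets one reconstruct backwards a sequence of $\top$-marked configurations along any path, corresponding to a genuine run of $\A$; combined with the exact treatment of internal closures and invariants on winning plays, this upgrades the over-approximation to an equality, yielding $\traces(\A) = \traces(\B)$ exactly as in Theorem~\ref{th:game}. I expect the main obstacle to be the bookkeeping in step two: precisely stating and proving the correspondence between runs of $\B$, plays of the game, and the pair $(S_-,S_+)$ of configuration sets — in particular verifying that the asymmetric treatment (over-approximate closure for $S_+$, under-approximate closure for $S_-$, pruning of unsafe input-successors) yields exactly the two inclusions of $\preceq$ and not something weaker. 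Since the full construction and this correspondence are developed in~\cite{BertrandStainerJeronKrichen-RR2010}, the cleanest route is to invoke Theorem~\ref{th:game} as the template, point out that outputs and delays are handled identically (giving $(i)$), and isolate the input-pruning argument as the one genuinely new ingredient needed for $(ii)$.
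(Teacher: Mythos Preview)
The paper does not actually prove Proposition~\ref{altSim}; it is imported from the cited technical report~\cite{BertrandStainerJeronKrichen-RR2010}. The only remark the paper makes is the sentence immediately following the statement: ``Note that the proof of proposition~\ref{altSim} in~\cite{BertrandStainerJeronKrichen-RR2010} considers a stronger refinement relation, thus implies the same result for the present refinement relation.'' So there is no in-paper argument to compare your proposal against.

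That said, your plan is consistent with the informal description the paper gives of the three extensions (internal-action closure, invariant handling, asymmetric input/output treatment) just before the proposition, and the reduction to Theorem~\ref{th:game} for the ``moreover'' clause is exactly the intended template. The one point worth flagging is that, according to the paper's remark, the actual proof in~\cite{BertrandStainerJeronKrichen-RR2010} does \emph{not} establish conditions $(i)$ and $(ii)$ of the trace-based $\preceq$ directly; it proves a stronger (state-based, simulation-style) refinement and then observes that this implies the weaker relation used here. Your approach of verifying $(i)$ and $(ii)$ directly via the $(S_-,S_+)$ invariant is therefore a somewhat different route --- more elementary in that it avoids introducing the stronger relation, but potentially more delicate because you must track trace-reachable state sets rather than argue locally via a simulation. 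Either route should work; just be aware that the bookkeeping you anticipate as the main obstacle is precisely what the simulation-based argument in the cited report is designed to streamline.
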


In other words, the approximations produced by our method are
deterministic io-abstractions of the initial specification, hence the
approximate determinization preserves
\tioco~(Proposition~\ref{prop:tioco-alt-sim}), and conversely,
sound test cases of the approximate determinization remain sound for the original specification (Corollary~\ref{cor}).  Note that the proof of
proposition~\ref{altSim} in~\cite{BertrandStainerJeronKrichen-RR2010}
considers a stronger refinement relation, thus implies the same result
for the present refinement relation. In comparison with our method,
the algorithm proposed in~\cite{KrichenTripakis09} always performs an
over-approximation, and thus preserves \tioco~only if the
specification is input-complete; moreover all invariants are
set to $\true$ in the resulting automata, so the construction does
not preserve urgency.

\ignore{
\subsection{Properties}

\TJ{A revoir et reduire}
The algorithm proposed in~\cite{KrichenTripakis09} is an over-approximation,
thus preserves \tioco\; only if the specification is input-complete. 
Moreover it does not preserve urgency. 
On the contrary, our method always preserves \tioco, 
and incorporates urgency as much as possible.



\paragraph{Urgency and internal actions}
Concerning expressivity, modeling urgency is quite important. Without
the ability to express urgency, for instance, any system doing nothing
would  conform to any specification. It
is classical to use invariants to model urgency. Our determinization
method allows to preserve them as much as possible. Moreover, in
practice, not all actions are observable. Specifications naturally can
include internal actions but they cannot be observed during test
executions. In the determinization step, these internal actions are
seen as silent transitions.  
It is well known that timed automata with silent transitions are
strictly more expressive than standard timed automata~\cite{BerardGastinPetit-STACS96}. 
Therefore, of course, our approximation can be coarse, but it
performs as well as possible with its available clock information.
%
%
\paragraph{Conformance}
For test generation, the most important property is the soundness of
the test cases. As a consequence, when generating tests from an
approximation, the approximation should preserve soundness. In other
words, approximation must preserve {\tioco}.  The approximations
produced by our method are deterministic abstractions of the initial
specification, hence, our approach preserves conformance.
%
%
%
Approximations can be more or less precise, information contained in
the game states allows to establish finer verdicts, ensuring in some
cases the strictness of a test.
}

\subsection*{Complexity} The number of regions (resp. relations)
  over a set of clocks is exponential in the number of clocks. Thus,
  the number of possible configurations in the game is at most exponential in the
  cardinality of $X\sqcup Y$ and linear in the number of locations in
  $\A$. As a consequence, the size of the game (\emph{i.e.}, number of
  states in the arena) is at most doubly exponential in $|X\sqcup Y|$
  and exponential in $|L^\pA|$. In particular this bound also holds
  for the size of the generated deterministic TAIO, for every
  \emph{memoryless} strategy of Determinizator. The overall complexity
  of this io-abstracting determinization algorithm is thus doubly
  exponential in the size of the instance (original TAIO and
  resources).

\section{Off-line test case generation}
\label{sec-generation}
In this section, we describe the off-line generation of test cases from 
timed automata specifications and test purposes.
We first define test purposes, their role in test generation 
and their formalization as OTAIOs.
We then detail the process of
off-line test selection guided by test purposes, which uses 
the approximate determinization just defined.
We also prove properties of generated test cases with respect 
to conformance and test purposes.

\subsection{Test purposes}

In testing practice, especially when test cases are
generated manually, each test case has a particular objective,
informally described by a sentence called test purpose.
In formal test generation, test purposes should be formal models 
interpreted as means to select behaviors to be tested,
either focusing on usual behaviors, or on suspected errors in
implementations~\cite{jard04a}, thus typically reachability properties.  
They complement other selection mechanisms 
such as coverage methods~\cite{Zhu-Hall-May-97} which, contrary to test purposes,
are most often based on syntactical criteria rather than semantic aspects.
Moreover, the set of goals
covering a given criterion  (\eg~states, transitions, etc) 
may be translated into a set of test purposes,
each test purpose focusing on one such goal.

As test purposes are selectors of behaviors, 
a natural way to formalize them is to use a logical formula
characterizing a set of behaviors 
or an automaton accepting those behaviors.
In this work we choose to describe test purposes as OTAIOs equipped 
with accepting states.
The motivation is to use a model close to the specification model,
easing the description of targeted specification behaviors.
The following definition formalizes test purposes, 
and some alternatives are discussed in Section~\ref{sec-discuss}. 

\begin{defi}[Test purpose]
 Let  $\A=(L^\pA,\ell_0^\pA,\Sigma_?^{\pA},\Sigma_!^{\pA},\Sigma_\tau ^{\pA},
  X_p^\pA,\emptyset, M^\pA,\Inv^\pA,E^\pA)$ be a TAIO specification.
 A {\em test purpose} for $\A$ is a pair
  $(\TP,\accept^\tp)$ where:
\begin{iteMize}{$\bullet$}
\item $\TP=
  (L^\tp,\ell_0^\tp,\Sigma_?^{\pA},\Sigma_!^{\pA},\Sigma_\tau ^{\pA},
  X_p^\tp,X_o^\tp,M^\tp,\Inv^\tp,E^\tp)$ is a complete OTAIO (in
  particular $\Inv^\tp(\ell) = \true$ for any $\ell \in L^\tp$) with
  $X_o^\tp=X_p^\pA$ ($\TP$ observes proper clocks of $\A$) and
  $X_p^\tp\cap X_p^\pA = \emptyset$,
\item
  $\accept^\tp \subseteq L^\tp$ is a subset of trap 
  locations.
\end{iteMize}
\end{defi}

In the following, we will sometimes abuse notations and use $\TP$
instead of the pair $(\TP, \accept^\tp)$.  During the test generation
process, test purposes are synchronized with the specification, and
together with their $\accept$ locations, they will play the role of
acceptors of timed behaviors.  They are non-intrusive in order not to
constrain behaviors of the specification.  This explains why they are
complete, thus allowing all actions in all locations, and are not
constrained by invariants.  They observe behaviors of specifications
by synchronizing with their actions (inputs, outputs and internal
actions) and their proper clocks (by the definition of 
the product (Definition~\ref{def_product}), observed clocks of
$\TP$ are proper clocks of $\A$, which mean that $\TP$ does not reset
those clocks).  However, in order to add some flexibility in the
description of timed behaviors, they may have their own proper clocks.

\begin{figure}[htbp]
\begin{center}
\scalebox{0.65}{
\begin{tikzpicture}[->,>=stealth',shorten >=1pt,auto,node distance=2cm,
                    semithick]

  \tikzstyle{every state}=[text=black]

  \node[state, fill=white] (A) {$\ell'_0$};
  \node[state, fill=white] (B) [right of=A, node distance=3cm, yshift=0cm] {$\ell'_1$};
  \node[state, fill=white] (C) [right of=B, node distance=4cm, yshift=0cm] {$\ell'_2$};
  \node[state, fill=white] (D) [right of=C, node distance=3cm, yshift=0cm] {$\ell'_3$};
  \node[state, fill=white] (E) [right of=D, node distance=3cm, yshift=0cm] {$Acc$};
  \node[state, fill=white] (F) [below of=B, node distance=2cm, yshift=0cm] {$\ell'_4$};
   \node[state, fill=white, color=white] (A') [left of=A, node distance=2cm, yshift=0cm] {};
  
  \path (A') edge node [above] {} (A)
        (A) edge node [above] {$x=1, \tau$} (B)    
        (B) edge node [above] {$x<1, a?$} (C)   
        (C) edge node [above] {$b!$} (D)   
        (D) edge node [above] {$b!$} (E)   
        (A) edge node [right] {othw} (F)    
        (B) edge node [right] {othw} (F)   
        (C) edge node [right] {othw} (F)   
        (D) edge node [right] {othw} (F)   
        (F) edge [loop left] node [left] {$\Sigma_{\tp}$} (F) 
        (E) edge [loop right] node [right] {$\Sigma_{\tp}$} (E) 
;

\end{tikzpicture}
}
\caption{Test purpose $\TP$.
}\label{ExObj}
\end{center}
\end{figure}

\begin{exa} 
Figure~\ref{ExObj} represents a test purpose $\TP$ for the specification $\A$
of Figure~\ref{ExSpec}. This one has no proper clock and observes the unique
clock $x$ of $\A$.  It accepts sequences where $\tau$ occurs at $x=1$,
followed by an input $a$ at $x<1$ (thus focusing on the lower branch
of $\A$ where $x$ is reset), and two subsequent $b$'s. The label $othw$ (for otherwise) on a transition is an abbreviation for the complement 
of specified transitions leaving the same location.
For example in location $\ell'_1$, $othw$ stands for 
$\{(\true, \tau),(\true,b!),(x\geq 1,a?)\}$.
\end{exa}

\subsection{Principle of test generation}

Given a specification TAIO $\A$ and a test purpose $(\TP,
\accept^\tp)$, the aim is to build a sound and, if possible strict 
test case $(\TC,\Verdicts)$ focusing on behaviors accepted by $\TP$.  
As $\TP$ accepts sequences of $\A$, but  test cases observe timed traces,
the intention is that $\TC$ should deliver $\Pass$ verdicts 
on traces of sequences of $\A$ accepted by $\TP$ in $\accept^\tp$.  
This property is formalized by the following definition: 
\begin{defi}
\label{def:precise}
A test suite $\TS$ for $\A$ and $\TP$ is said to be {\em precise} 
if for any test case $\TC$ in $\TS$,
for any timed observation $\sigma$ in $\traces(\TC)$, 
${\tt Verdict}(\sigma, \TC) = \Pass$
if and only if 
$\sigma \in \traces(\seq(\A\!\!\uparrow^{(X_p^\tp,X_o^\tp)}) \cap \seq_{\accept^\tp}(\TP))$.
\end{defi}

Let $\A=(L^\pA,\ell_0^\pA,\Sigma_?^{\pA},\Sigma_!^{\pA},\Sigma_\tau
^{\pA}, X_p^\pA,\emptyset, M^\pA,\Inv^\pA,E^\pA)$ be the specification
TAIO, and $\TP=
(L^\tp,\ell_0^\tp,\Sigma_?^{\pA},\Sigma_!^{\pA},\Sigma_\tau ^{\pA},
X_p^\tp,X_o^\tp,M^\tp,\Inv^\tp,E^\tp)$ be a test purpose for $\A$,
with its set $\accept^\tp$ of accepting locations.  The generation of
a test case $\TC$ from $\A$ and $\TP$ proceeds in several steps.
First, sequences of $\A$ accepted by $\TP$ are identified by the
computation of the product $\AxTP$ of those OTAIOs.  Then a
determinization step is necessary to characterize conformant traces as
well as traces of accepted sequences.  Then the resulting
deterministic TAIO $\DP$ is transformed into a test case TAIO $\TC'$
with verdicts assigned to states.  Finally, the test case $\TC$ is
obtained by a selection step which tries to avoid some $\Inconc$
verdicts.  The different steps of the test generation process from
$\A$ and $\TP$ are detailed in the following paragraphs.

\subsubsection*{Computation of the product:} 
First, the product $\AxTP = \A \times \TP$
is built
(see Definition~\ref{def_product} for the definition of the product), 
associated with the set
of marked locations $\accept^\axtp = L^\pA \times \accept^\tp$.
Let $P= (L^\axtp,\ell_0^\axtp,\Sigma_?^{\pA},\Sigma_!^{\pA},\Sigma_\tau ^{\pA},
  X_p^\axtp,X_o^\axtp,M^\axtp,\Inv^\axtp,E^\axtp)$. 
As $X_o^\tp=X_p^\pA$, we get $X_o^\axtp= \emptyset$ and $X_p^\axtp=X_p^\pA\sqcup X_p^\tp$, thus $\AxTP$ is in fact a TAIO.

The effect of the product is to unfold $\A$
and to mark locations of the product by $\accept^\axtp$, 
so that sequences of $\A$ accepted by $\TP$ are identified.
As $\TP$ is complete, 
$\seq(\TP)\downarrow_{X_p^\tp} = (\setRnn \times (\Sigma^\tp \times 2^{X_o^\tp}))^*$,
thus, by the properties of the product (see equation~\ref{eq-seq}),
$\seq(\AxTP)\downarrow_{X_p^\tp} = \seq(\A)$ 
\ie the sequences of the product after removing resets of proper clocks of $\TP$
are the sequences of $\A$.
As a consequence 
$\traces(\AxTP) = \traces(\A)$, which entails that $\AxTP$ and $\A$
define the same sets of conformant implementations.  

 Considering accepted sequences of the product $\AxTP$, 
by equation~\ref{eq-seq-acc} we  get the equality
$\seq_{\accept^\axtp}(\AxTP) = \seq(\A\!\!\uparrow^{(X_p^\tp,X_o^\tp)}) \cap
\seq_{\accept^\tp}(\TP)$, 
which induces the desired characterization of accepted traces: 
$\traces_{\accept^\axtp}(\AxTP) = \traces(\seq(\A\!\!\uparrow^{(X_p^\tp,X_o^\tp)}) \, \cap \,\seq_{\accept^\tp}(\TP))$.

Using the notation $\pref(T)$ 
for the set of prefixes
of traces in a set of traces $T$,
we note $\rtraces(\A,\TP) = \traces(\A) \setminus \pref(\traces_{\accept^\axtp}(\AxTP))$
for the set of traces of $\A$ which are not prefixes of accepted traces of $\AxTP$.
In the sequel, 
the principle of test selection will be to try to select traces in
$\traces_{\accept^\axtp}(\AxTP)$ (and assign to them the $\Pass$ verdict)
and to try to avoid or at least detect (with an $\Inconc$ verdict) 
those traces in $\rtraces(\A,\TP)$, as
these traces cannot be prefixes of traces of sequences satisfying the test purpose.

\begin{exa}
Figure~\ref{ExProd} represents the product $\AxTP$ for the specification
$\A$ in Figure~\ref{ExSpec} and the test purpose $\TP$ in
Figure~\ref{ExObj}. 
As $\TP$ describes one branch of $\A$, the product is very simple in this case,
\eg~intersection of guards are trivial.
The only difference with $\A$ is the  tagging with $\accept^\axtp$.
\end{exa}

\begin{figure}[htbp]
\begin{center}
\scalebox{0.65}{
\begin{tikzpicture}[->,>=stealth',shorten >=1pt,auto,node distance=2cm,
                    semithick]

  \tikzstyle{every state}=[text=black]

  \node[state, fill=white] (A) {$\ell_0\ell'_0$};
  \node[state, fill=white] (B) [right of=A, node distance=3.5cm, yshift=1cm] {$\ell_1\ell'_1$};
  \node[state, fill=white] (C) [right of=B, node distance=4.5cm, yshift=0cm] {$\ell_2\ell'_4$};
  \node[state, fill=white] (D) [right of=C, node distance=3cm, yshift=0cm] {$\ell_3\ell'_4$};
  \node[state, fill=white] (E) [right of=D, node distance=3cm, yshift=0cm] {$\ell_4\ell'_4$};
  \node[state, fill=white] (F) [right of=A, node distance=3.5cm, yshift=-1cm] {$\ell_5\ell'_1$};
  \node[state, fill=white] (G) [right of=F, node distance=4.5cm, yshift=0cm] {$\ell_6\ell'_2$};
  \node[state, fill=white] (H) [right of=G, node distance=3cm, yshift=0cm] {$\ell_7\ell'_3$};
  \node[state, fill=white] (I) [right of=H, node distance=3cm, yshift=0cm] {$\ell_8Acc$};
  \node[state, fill=white, color=white] (A') [left of=A, node distance=2cm, yshift=0cm] {};
  \node[fill=white] (Ae) [above of=A, node distance=.9cm, yshift=0cm] {$x \le 1$};
  \node[fill=white] (Ce) [below of=C, node distance=.9cm, yshift=0cm] {$x \le 1$};
  \node[fill=white] (De) [below of=D, node distance=.9cm, yshift=0cm] {$x \le 1$};
  \node[fill=white] (Ge) [below of=G, node distance=.85cm, yshift=0cm] {$x = 0$};
  \node[fill=white] (He) [below of=H, node distance=.85cm, yshift=0cm] {$x = 0$};
  
  \path (A') edge node [above] {} (A)
        (A) edge node [above,sloped] {$x=1, \tau$} (B)    
        (B) edge node [above] {$1<x<2, a?, \{x\}$} (C)   
        (C) edge node [above] {$x=0, b!$} (D)   
        (D) edge node [above] {$b!$} (E)   
        (C) edge [loop above] node [above] {$x=1, \tau, \{x\}$} (C)    
        (A) edge node [above,sloped] {$x=1, \tau, \{x\}$} (F)    
        (F) edge node [above] {$x<1, a?, \{x\}$} (G)   
        (G) edge node [above] {$b!$} (H)   
        (H) edge node [above] {$b!$} (I)   
;

\end{tikzpicture}}
\caption{Product $\AxTP= \A \times \TP$.
}\label{ExProd}
\end{center}
\end{figure}

\subsubsection*{Approximate determinization of $\AxTP$ into $\DP$:}
We now want to transform $\AxTP$ into a deterministic TAIO $\DP$
such that $\AxTP \preceq \DP$, which  by Proposition~\ref{prop:tioco-alt-sim})
will entail that implementations conformant to $\AxTP$ (thus to $\A$) 
are still conformant to $\DP$. 
If $\AxTP$ is already deterministic, we simply take $\DP=\AxTP$.
Otherwise, 
the approximate determinization of Section~\ref{sec-determinisation} provides a solution.
The user fixes some resources $(k,M^\pdp)$,
then a deterministic io-abstraction  $\DP$ of $\AxTP$ with  resources $(k,M^\pdp)$ is computed.
By Proposition~\ref{altSim}, we thus get that  $\DP$ io-abstracts $\AxTP$.
$\DP$ is equipped with the set of marked locations $\accept^\pdp$
consisting of locations in $L^\pdp$ containing some configuration
whose location is in $\accept^\axtp$. 
As a consequence traces of $\DP$ which are traces of sequences accepted by $\AxTP$ 
in $\accept^\axtp$ are accepted by $\DP$ in $\accept^\pdp$,  formally 
$\traces(\DP) \cap \traces(\seq_{\accept^\axtp}(\AxTP)) = 
\traces(\DP) \cap \traces_{\accept^\axtp}(\AxTP) \subseteq \traces_{\accept^\pdp}(\DP)$.
This means that extra accepted traces may be added due to over-approximations, 
some traces may be lost (including accepted ones) 
by under-approximations, but if the under-approximation preserves
some traces that are accepted in $\AxTP$, these are still accepted in $\DP$.   
If the determinization is exact (or $\AxTP$ is already deterministic), 
of course we get more precise relations between the traces and accepted traces of $\AxTP$ and $\DP$,
namely
 $\traces(\DP) = \traces(\AxTP)$ and
$\traces_{\accept^\pdp}(\DP) = \traces_{\accept^\axtp}(\AxTP)$.  

\ignore{
\textcolor{red}{
\TJ{on devrait caracteriser
 les traces acceptees dans $\DP$ par rapport a celles de $\AxTP$. 
Or si on a fait une sous-approximation, on peut perdre des traces acceptees,
alors que si on a fait une sur-approximation, on peut en ajouter.
}.
}
}

\begin{exa}
Figure~\ref{ExJeu}
partially represents the game $\mathcal{G}_{\AxTP,(1,2)}$ for the TAIO $\AxTP$ of Figure~\ref{ExProd} where,
for readability reasons, some behaviors not co-reachable from
$\accept^\pdp$ (dotted green states) are omitted.
Notice that the construction of the initial part of the game was explained  
in Example~\ref{exa3.3}.
A strategy $\Pi$ for Determinizator is represented by bold arrows.
$\Pi$ is not winning (the unsafe configuration, in gray, is unavoidable from the initial state), 
and in fact an approximation is performed.
$\DP$, represented in Figure~\ref{fig:dp} 
is simply obtained from $\mathcal{G}_{\AxTP,(1,2)}$ and the strategy $\Pi$ 
by merging transitions of Spoiler and those of Determinizator in the strategy.
\end{exa}

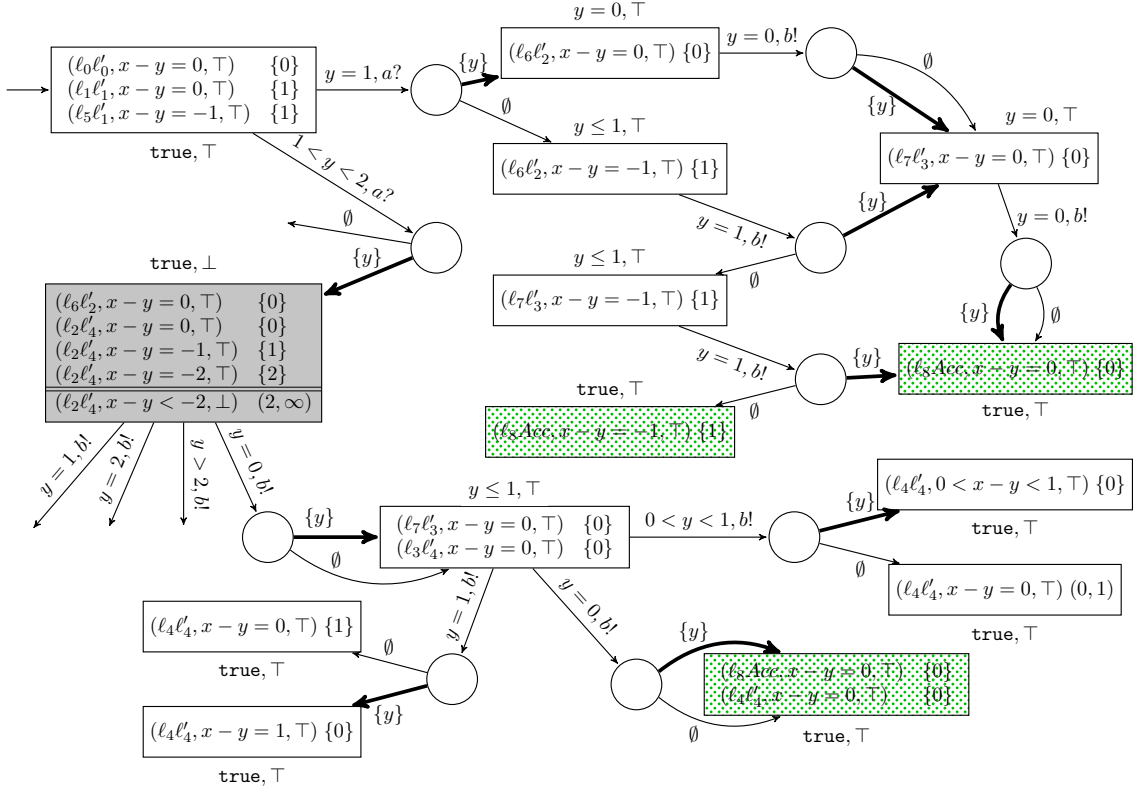
\begin{figure}[htbp]
\begin{center}
\scalebox{0.7}{
\begin{tikzpicture}[->,>=stealth',shorten >=1pt,auto,node distance=2cm,
                    semithick]

  \tikzstyle{every state}=[text=black]

  \node[state, fill=white] [rectangle] (A)
  {$\begin{array}{ll}(\ell_0\ell'_0,x-y=0,\top)& \{0\}\\ 
 (\ell_1\ell'_1,x-y=0,\top)& \{1\}\\ (\ell_5\ell'_1,x-y=-1,\top)& \{1\} \end{array}$};
  \node[state, fill=white] (AB) [right of=A, node distance=4.8cm, yshift=0cm] {};
  \node[state, fill=white] [rectangle] (B1) [right of=AB, node distance=3.3cm, yshift=.7cm] {$(\ell_6\ell'_2,x-y=0,\top)\; \{0\}$};
  \node[state, fill=white] (AC) [right of=A, node distance=4.8cm, yshift=-3cm] {};
  \node[fill=white] (C2) [left of=AC, node distance=3cm, yshift=.5cm] {};
  \node[state, fill=GrisC!90!black,inner xsep=0] [rectangle] (C1) [below of=A, node distance=5cm, yshift=0cm] {$\begin{array}{ll}(\ell_6\ell'_2,x-y=0,\top)& \{0\}\\
(\ell_2\ell'_4,x-y=0,\top)& \{0\}\\ 
 (\ell_2\ell'_4,x-y=-1,\top)& \{1\}\\
 (\ell_2\ell'_4,x-y=-2,\top)& \{2\} \\ \hline\hline
 (\ell_2\ell'_4,x-y<-2, \bot) & (2,\infty)\end{array}$};
  \node[state, fill=white] (C1F) [right of=C1, node distance=1.6cm, yshift=-3.5cm] {};
  \node[state, fill=white] [rectangle] (F) [right of=C1F, node distance=4.5cm, yshift=-0cm] {$\begin{array}{ll}(\ell_7\ell'_3,x-y=0,\top)& \{0\}\\ (\ell_3\ell'_4,x-y=0,\top)& \{0\}\end{array}$};
  \node[state, fill=white] (FH) [below of=F, node distance=2.8cm, xshift=2.5cm] {};
  \node[state, fill=green, pattern color= green!75!black, pattern=crosshatch dots] [rectangle] (H) [right of=FH, node
  distance=3.8cm, yshift=0cm]
  {$\begin{array}{ll}(\ell_8Acc,x-y=0,\top)& \{0\}\\
      (\ell_4\ell'_4,x-y=0,\top) &\{0\} \end{array}$};
  \node[color=white] (A') [left of=A, node distance=3.5cm, yshift=0cm] {};
  \node[state, fill=white] (B1V) [right of=B1, node distance=4.2cm, yshift=0cm] {};
  \node[state, fill=white] [rectangle] (V) [right of=B1V, node distance=3cm, yshift=-2cm] {$(\ell_7\ell'_3,x-y=0,\top)\; \{0\}$};
  \node[state, fill=white] (VW) [right of=V, node distance=.7cm, yshift=-2cm] {};
  \node[state, fill=green, pattern color= green!75!black, pattern=crosshatch dots] [rectangle] (W) [below of=VW, node distance=2cm, xshift=-.2cm] {$(\ell_8Acc,x-y=0,\top)\; \{0\}$};
  \node[fill=white] (Ae) [below of=A, node distance=1.2cm, yshift=0cm] {$\true,\top$};
   \node[fill=white] (C1e) [above of=C1, node distance=1.7cm, xshift=0cm] {$\true,\bot$};
   \node[fill=white] (He) [below of=H, node distance=1cm, yshift=0cm] {$\true,\top$};
   \node[fill=white] (We) [below of=W, node distance=.8cm, yshift=0cm] {$\true,\top$};
  \node[fill=white] (Fe) [above of=F, node distance=.9cm, yshift=0cm] {$y \le 1,\top$};
  \node[fill=white] (Ve) [above of=V, node distance=.8cm, xshift=1cm] {$y = 0,\top$};
  \node[fill=white] (B1e) [above of=B1, node distance=.8cm, yshift=0cm] {$y = 0,\top$};

  \node[state, fill=white] [rectangle] (B2) [right of=AB, node distance=3.3cm, yshift=-1.5cm] {$(\ell_6\ell'_2,x-y=-1,\top)\; \{1\}$};
  \node[state, fill=white] (B2V2) [right of=B2, node distance=4cm, yshift=-1.5cm] {};
  \node[state, fill=white] [rectangle] (V2) [left of=B2V2, node distance=4cm, yshift=-1cm] {$(\ell_7\ell'_3,x-y=-1,\top)\; \{1\}$};
  \node[state, fill=white] (V2W2) [right of=V2, node distance=4cm, yshift=-1.5cm] {};
  \node[state, fill=green, pattern color=green!75!black, pattern=crosshatch dots] [rectangle] (W2) [below of=V2W2, node distance=1cm, xshift=-4cm] {$(\ell_8Acc,x-y=-1,\top)\; \{1\}$};
  \node[fill=white] (V2e) [above of=V2, node distance=.8cm, yshift=0cm] {$y \le 1,\top$};
  \node[fill=white] (W2e) [above of=W2, node distance=.8cm, yshift=0cm] {$\true,\top$};
  \node[fill=white] (B2e) [above of=B2, node distance=.8cm, yshift=0cm] {$y \le 1,\top$};

  \node[fill=white] [ellipse] (D2) [below of=C1, node distance=3.5cm, xshift=-3cm] {};
  \node[fill=white] [ellipse] (D3) [below of=C1, node distance=3.5cm, xshift=-1.5cm] {};
  \node[fill=white] [ellipse] (D4) [below of=C1, node distance=3.5cm, xshift=0cm] {};
  
  \node[state, fill=white] (F2H2) [right of=F, node distance=5.5cm, yshift=0cm] {};
  \node[state, fill=white] [rectangle] (H21) [right of=F2H2, node distance=4cm, yshift=1cm] {$ (\ell_4\ell'_4,0<x-y<1,\top)\; \{0\}$}; 
    \node[state, fill=white] [rectangle] (H22) [right of=F2H2, node distance=4cm, yshift=-1cm] {$ (\ell_4\ell'_4,x-y=0,\top)\; (0,1)$}; 
  
  \node[state, fill=white] (F3H3) [below of=F, node distance=2.7cm, xshift=-1cm] {};
    \node[state, fill=white] [rectangle] (H31) [left of=F3H3, node distance=3.8cm, yshift=-1cm] {$ (\ell_4\ell'_4,x-y=1,\top)\; \{0\}$}; 
    \node[state, fill=white] [rectangle] (H32) [left of=F3H3, node distance=3.8cm, yshift=1cm] {$ (\ell_4\ell'_4,x-y=0,\top)\; \{1\}$};

   \node[fill=white] (H21e) [below of=H21, node distance=.85cm, yshift=0cm] {$\true,\top$};
   \node[fill=white] (H22e) [below of=H22, node distance=.85cm, yshift=0cm] {$\true,\top$};
   \node[fill=white] (H31e) [below of=H31, node distance=.85cm, yshift=0cm] {$\true,\top$};
   \node[fill=white] (H32e) [below of=H32, node distance=.85cm, yshift=0cm] {$\true,\top$};

  \path (A') edge node [above] {} (A)
        (A) edge node [above] {$y=1, a?$} (AB)    
        (AB) edge [line width=2pt] node [above,pos=.3] {$\{y\}$} (B1)   
        (AB) edge node [above] {$\emptyset$} (B2)   
        (A) edge node [above, sloped] {$1<y<2,a?$} (AC)   
        (AC) edge [line width=2pt] node [above] {$\{y\}$} (C1)    
        (AC) edge node [above] {$\emptyset$} (C2)    
         (C1) edge node [above,sloped] {$y=1, b!$} (D2)      
         (C1) edge node [sloped, above] {$y=2, b!$} (D3)      
         (C1) edge node [sloped, above] {$y>2, b!$} (D4)    
         (C1) edge node [sloped, above] {$y=0, b!$} (C1F)       
         (F) edge node [sloped,above] {$y=0, b!$} (FH)    
         (C1F) edge [line width=2pt] node [above,pos=.3] {$\{y\}$} (F)     
         (C1F) edge [bend right] node [above,pos=.3] {$\emptyset$} (F)    
         (FH) edge [bend left, line width=2pt] node [above,pos=.3] {$\{y\}$} (H)     
         (FH) edge [bend right] node [below,pos=.3] {$\emptyset$} (H)    
         (F) edge node [above,sloped] {$0<y<1, b!$} (F2H2)      
         (F) edge node [sloped, above] {$y=1, b!$} (F3H3)     
         (B1) edge node [above] {$y=0, b!$} (B1V)       
         (V) edge node [right, pos=.6] {$y=0, b!$} (VW)        
        (B1V) edge [line width=2pt] node [below,pos=.3] {$\{y\}$} (V)    
        (VW) edge [bend right, line width=2pt] node [left] {$\{y\}$} (W)    
        (B1V) edge [bend left] node [above] {$\emptyset$} (V) 
        (VW) edge [bend left] node [right] {$\emptyset$} (W) 
         (B2) edge node [below, sloped] {$y=1, b!$} (B2V2)       
         (V2) edge node [below, sloped] {$y=1, b!$} (V2W2)        
        (B2V2) edge [line width=2pt] node [above,pos=.3] {$\{y\}$} (V)    
        (V2W2) edge [line width=2pt] node [above,pos=.4] {$\{y\}$} (W)    
        (B2V2) edge node [below] {$\emptyset$} (V2) 
        (V2W2) edge node [below] {$\emptyset$} (W2) 
        (F2H2) edge [line width=2pt] node [above] {$\{y\}$} (H21)    
        (F2H2) edge node [below] {$\emptyset$} (H22) 
        (F3H3) edge [line width=2pt] node [below] {$\{y\}$} (H31)    
        (F3H3) edge node [above] {$\emptyset$} (H32) 

;

\end{tikzpicture}}
\caption{Game $\mathcal{G}_{\AxTP,(1,2)}$.
}\label{ExJeu}
\end{center}
\end{figure}

\begin{figure}[htbp]
\begin{center}
\scalebox{0.65}{
\begin{tikzpicture}[->,>=stealth',shorten >=1pt,auto,node distance=2cm,
                    semithick]

  \tikzstyle{every state}=[text=black]

  \node[state, fill=white] (A) {$\ell"_0$};
  \node[state, fill=white] (B) [right of=A, node distance=4cm,
yshift=1.2cm] {$\ell"_1$};

  \node[state, fill=white] (B1) [above left of=B, node distance=1.8cm,
yshift=1.2cm] {\scriptsize{$\ell''_{11}$}};
  \node[state, fill=white] (B2) [above of=B, node distance=1.8cm,
yshift=1.2cm] {\scriptsize{$\ell''_{12}$}};
  \node[state, fill=white] (B3) [above right of=B, node distance=1.8cm,
yshift=1.2cm] {\scriptsize{$\ell''_{13}$}};

  \node[state, fill=white] (C) [right of=B, node distance=3.5cm,
yshift=0cm] {$\ell"_2$};

  \node[state, fill=white] (C1) [above  of=C, node distance=1.8cm,
yshift=.9cm] {\scriptsize{$\ell''_{21}$}};
  \node[state, fill=white] (C2) [above right of=C, node distance=1.8cm,
yshift=.9cm] {\scriptsize{$\ell''_{22}$}};

  \node[state, fill=white] (D) [right of=C, node distance=3.5cm,
yshift=0cm] {\scriptsize{$\accept_1$}};
  \node[state, fill=white] (F) [right of=A, node distance=4cm,
yshift=-.9cm] {$\ell"_3$};

  \node[state, fill=white] (G) [right of=F, node distance=3.5cm,
yshift=0cm] {$\ell"_4$};

\node[fill=white] (Ci) [below of=C, node distance=.8cm, yshift=0cm] {$y \leq 1$};
 \node[fill=white] (Fi) [below of=F, node distance=.8cm, yshift=0cm] {$y=0$};
\node[fill=white] (Gi) [below of=G, node distance=.8cm, yshift=0cm] {$y=0$};

  \node[state, fill=white] (H) [right of=G, node distance=3.5cm,
yshift=0cm] {\scriptsize{\bf $\accept_2$}};
  \node[state, fill=white, color=white] (A') [left of=A, node
distance=1.8cm, yshift=0cm] {};

  \path (A') edge node [above] {} (A)
        (A) edge node [below,sloped] {$y=1, a?,\{y\}$} (F)
        (F) edge node [above] {$y=0, b!, \{y\}$} (G)
        (G) edge node [above] {$y=0, b!, \{y\}$} (H)
        (A) edge node [above,sloped] {$1<y<2, a?,\{y\}$} (B)
        (B) edge node [above] {$y=0, b!, \{y\}$} (C)
        (C) edge node [above] {$y=0, b!, \{y\}$} (D);

\path       (B) edge node [left] {\scriptsize{$y=1,b!$}} (B1)
        (B) edge node [above] {\scriptsize{$y=2,b!$}} (B2)
        (B) edge node [right] {\scriptsize{$y > 2,b!$}} (B3)
        (C) edge node [left] {\scriptsize{$\begin{array}{r} 0<y<1,\\b!,\{y\}\end{array}$}} (C1)
        (C) edge node [right] {\scriptsize{$y=1,b!,\{y\}$}} (C2)
;

\end{tikzpicture}}
\end{center}
\caption{Deterministic automaton $\DP=Aut(\Pi)$.}
\label{fig:dp}
\end{figure}

\subsubsection*{Generating  $\TCa$ from $\DP$:}
The next step consists in building a test case $(\TCa,\Verdicts)$ from $\DP$.
The main point is the computation of verdicts.
$\Pass$  verdicts are simply defined from $\accept^\pdp$.
$\Fail$ verdicts that should detect unexpected outputs and delays, 
rely on a complementation. 
The difficult part is the computation of  
$\Inconc$ states
which should detect when $\accept^\pdp$ is not reachable 
(or equivalently 
$\None$ states, those states where $\accept^\pdp$ is still reachable)
and thus relies on an analysis of the co-reachability to locations $\accept^\pdp$.
Another interesting point is the treatment of invariants.
First $\TCa$ will have no invariants 
(which ensures that it is non-blocking).
Second, invariants in $\DP$ 
are shifted to guards in $\TCa$ and
in the definition of $\Fail$
so that 
test cases check that  the urgency specified in $\A$ is satisfied by $\Imp$.

The test case constructed from
$\DP=(L^\pdp,\ell_0^\pdp,\Sigma_?^\pdp,\Sigma_!^\pdp,\emptyset,X_p^\pdp,\emptyset,M^\pdp,\Inv^\pdp,E^\pdp)$
and $\accept^\pdp$ is the pair $(\TCa, \Verdicts)$
where:
\begin{iteMize}{$\bullet$}
\item $\TCa=(L^\tca,\ell_0^\tca,\Sigma_?^\tca,\Sigma_!^\tca,\emptyset ,X_p^\tca,\emptyset,M^\tca,\Inv^\tca,E^\tca)$ is the TAIO
such that: 
\begin{iteMize}{$-$}
\item $L^\tca= L^\pdp \sqcup \{\lfail\}$ where $\lfail$ is a new
location; 
\item $\ell_0^\tca=\ell_0^\pdp$ is the initial location;
\item $\Sigma_?^\tca=\Sigma_!^\pdp=\Sigma_!^\pA$ and
$\Sigma_!^\tca=\Sigma_?^\pdp=\Sigma_?^\pA$, \ie input/output alphabets
are mirrored in order to reflect the opposite role of actions in the
synchronization of $\TCa$ and $\Imp$; 
\item $X_p^\tca = X_p^\pdp$ and
$X_o^\tca=X_o^\pdp=\emptyset$; 
\item $M^\tca = M^\pdp$; 
\item $\Inv^\tca(\ell)=\true$ for any $\ell \in L^\tca$; 
\item $E^\tca=E_\Inv^\pdp
\sqcup E_{\lfail}$ where
 $$ \quad \quad \quad \quad 
\begin{array}{l}
E_\Inv^\pdp=\{(\ell,g\wedge\Inv^\pdp(\ell),a,X',\ell') \mid
(\ell,g,a,X',\ell') \in E^\pdp\} \mbox{ and}\\
E_{\lfail}=\left \{(\ell,\bar{g}\wedge\Inv^\pdp(\ell),a,X_p^\tca,\lfail) \,  \left|\, 
\begin{array}{l}
\ell \in L^\pdp,\ a \in \Sigma_!^\pdp \\
\textrm{ and } \bar{g}= \neg \bigvee_{(\ell,g,a,X',\ell') \in
  E^\pdp} g \end{array} \right\}\right. \end{array}$$
\end{iteMize}
\item 
 $\Verdicts$ is the partition of $S^\pdp$ defined as follows:
\begin{iteMize}{$-$}
\item $\Pass=
\bigcup_{\ell\in\accept^\pdp} (\{\ell\}\times\Inv^\pdp(\ell))$,
\item $\None =
\coreach(\DP,\Pass) \setminus \Pass$, 
\item $\Fail=\{\lfail\} \times \setRnn^{X^\tca} \sqcup
\{(\ell,\neg \Inv^\pdp(\ell)) | \ell \in L^\pdp\}$;
\item $\Inconc= S^\pdp \setminus
(\Pass \sqcup \Fail \sqcup \None)$,
 \end{iteMize}
\end{iteMize}\medskip

\noindent The important points to understand in the construction of $\TCa$ are
the completion to $\Fail$ and the computation of $\None$,
which, together with $\Pass$,  define $\Inconc$ by complementation.  

For the
completion to $\Fail$, the idea is to detect unspecified outputs and delays 
with respect to $\DP$.  
Remember that outputs of $\DP$ are inputs of $\TCa$. 
Moreover, authorized delays in $\DP$ are defined by invariants, 
but remember that test cases have no invariants (they are $\true$ in all locations).
First, all states in $(\ell,\neg\Inv^\pdp(\ell)), \ell \in L^\pdp$, \ie
states where the invariant runs out, are put into $\Fail$
which reflects the counterpart in $\TC'$ of the urgency in $\DP$. 
Then, in each location $\ell$, the invariant $\Inv^\pdp(\ell)$ in $\DP$ is
removed and shifted to guards of all transitions leaving $\ell$ in
$\TCa$, as defined in $E_I^\pdp$.
Second, in any location $\ell$, for each input $a\in\Sigma_?^\tca=\Sigma_!^\pdp$, a transition
leading to $\lfail$ is added, labeled with $a$, and whose guard is
the conjunction of $\Inv(\ell)$ 
with the negation of the disjunction of all guards of transitions labeled
by $a$ and leaving $\ell$ (thus $\true$ if no $a$-action leaves
$\ell$), as defined in $E_{\lfail}$.  
It is then easy to see that $\TC'$ is input-complete in all states.


The computation of $\None$ is based on an analysis of the
co-reachability to $\Pass$.  
$\None$ contains all states co-reachable from locations in $\Pass$.  
Notice that the set of states $\coreach(\DP,\Pass)$, and thus $\None$,
can be computed symbolically as usual in the region graph
of $\DP$, or more efficiently using zones.

\begin{exa}
  Figure~\ref{fig:tca} represents the test case $\TCa$ obtained from
  $\DP$.  For readability reasons, we did not represent transitions in
  $E_{\lfail}$, except the one leaving $\ell"_0$.  In fact these are
  removed in the next selection phase as they are only fireable from
  states where a verdict has already been issued.  The rectangles
  attached to locations represent the verdicts in these locations when
  clock $y$ progresses between 0 and 2, and after 2: dotted green for
  $Pass$, black for $\None$, blue grid for $\Inconc$ and
  crosshatched red for $\Fail$.  For example, in $\ell"_2$, the
  verdict is initially $\None$, becomes $\Inconc$ if no $b$ is
  received immediately, and even $\Fail$ if no $b$ is received before
  one time unit.  Notice that in order to reach a $\Pass$ verdict, one
  should initially send $a$ after one and strictly before two time
  units, and expect to receive two consecutive $b$'s immediately
  after.
\end{exa}

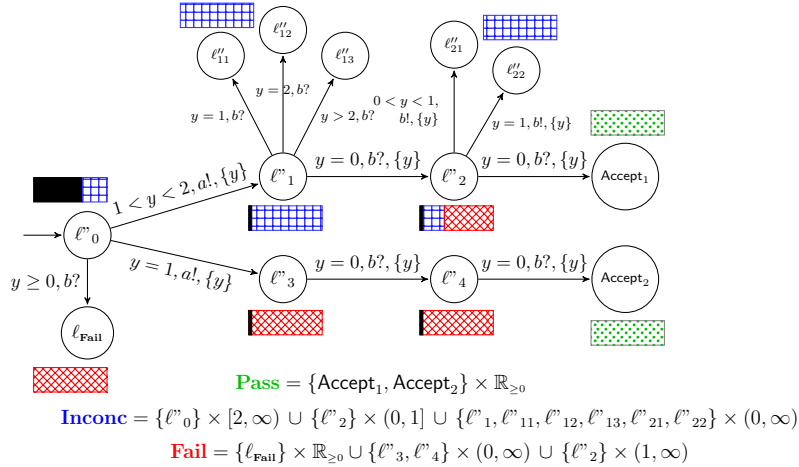
\begin{figure}[htbp]
\begin{center}
\scalebox{0.65}{
\begin{tikzpicture}[->,>=stealth',shorten >=1pt,auto,node distance=2cm,
                    semithick]

  \tikzstyle{every state}=[text=black]

  \node[state, fill=white] (A) {$\ell"_0$};
  \node[state, fill=white] (B) [right of=A, node distance=4cm,
yshift=1.2cm] {$\ell"_1$};

  \node[state, fill=white] (B1) [above left of=B, node distance=1.8cm,
yshift=1.2cm] {\scriptsize{$\ell''_{11}$}};
  \node[state, fill=white] (B2) [above of=B, node distance=1.8cm,
yshift=1.2cm] {\scriptsize{$\ell''_{12}$}};
  \node[state, fill=white] (B3) [above right of=B, node distance=1.8cm,
yshift=1.2cm] {\scriptsize{$\ell''_{13}$}};

  \node[state, fill=white] (C) [right of=B, node distance=3.5cm,
yshift=0cm] {$\ell"_2$};

  \node[state, fill=white] (C1) [above  of=C, node distance=1.8cm,
yshift=.9cm] {\scriptsize{$\ell''_{21}$}};
  \node[state, fill=white] (C2) [above right of=C, node distance=1.8cm,
yshift=.9cm] {\scriptsize{$\ell''_{22}$}};

  \node[state, fill=white] (D) [right of=C, node distance=3.5cm,
yshift=0cm] {\scriptsize{$\accept_1$}};
  \node[state, fill=white] (E) [below of=A, node distance=2cm, xshift=0cm]
{$\lfail$};
  \node[state, fill=white] (F) [right of=A, node distance=4cm,
yshift=-.9cm] {$\ell"_3$};

  \node[state, fill=white] (G) [right of=F, node distance=3.5cm,
yshift=0cm] {$\ell"_4$};

  \node[state, fill=white] (H) [right of=G, node distance=3.5cm,
yshift=0cm] {\scriptsize{\bf $\accept_2$}};
  \node[state, fill=white, color=white] (A') [left of=A, node
distance=1.8cm, yshift=0cm] {};

  \node[fill=white] (Ge) [below of=A, node distance=4.45cm, xshift=7cm]
{\large{$\rouge{\Fail} = \{\lfail\}\times\setRnn \cup \{\ell"_3,\ell"_4\}\times (0,\infty) 
\,\cup\, \{\ell"_2\}\times (1,\infty)$}};
  \node[fill=white] (He) [below of=A, node distance=3.75cm, xshift=7cm]
{\large{$\bleu{\Inconc} = \{\ell"_0\}\times [2,\infty) \,\cup\, \{\ell"_2\}\times(0,1] \,\cup\,
\{\ell"_1,\ell"_{11},\ell"_{12},\ell"_{13},\ell"_{21},\ell"_{22}\}\times(0,\infty) $}};
  \node[fill=white] (Ie) [below of=A, node distance=3.05cm, xshift=6cm]
{\large{$\verte{\Pass} = \{\accept_1,\accept_2\}\times \setRnn$}};

  \path (A') edge node [above] {} (A)
        (A) edge node [below,sloped] {$y=1, a!,\{y\}$} (F)
        (F) edge node [above] {$y=0, b?, \{y\}$} (G)
        (G) edge node [above] {$y=0, b?, \{y\}$} (H)
        (A) edge node [left] {$y\ge0, b?$} (E)
        (A) edge node [above,sloped] {$1<y<2, a!,\{y\}$} (B)
        (B) edge node [above] {$y=0, b?, \{y\}$} (C)
        (C) edge node [above] {$y=0, b?, \{y\}$} (D);

\path       (B) edge node [left] {\scriptsize{$y=1,b?$}} (B1)
        (B) edge node [above] {\scriptsize{$y=2,b?$}} (B2)
        (B) edge node [right] {\scriptsize{$y > 2,b?$}} (B3)
        (C) edge node [left] {\scriptsize{$\begin{array}{r} 0<y<1,\\b!,\{y\}\end{array}$}} (C1)
        (C) edge node [right] {\scriptsize{$y=1,b!,\{y\}$}} (C2)
;

\def\rectanglepath{-- ++(.5cm,0cm) -- ++(0cm,.5cm) -- ++(-.5cm,0cm) -- cycle}
\def\rectanglepathone{-- ++(.5cm,0cm) -- ++(0cm,.5cm) -- ++(-.5cm,0cm) -- cycle}
\def\rectanglepathtwo{-- ++(1cm,0cm) -- ++(0cm,.5cm) -- ++(-1cm,0cm) -- cycle}
\def\rectanglepaththree{-- ++(1.5cm,0cm) -- ++(0cm,.5cm) -- ++(-1.5cm,0cm) -- cycle}

\draw [fill=black,color=black] (-1.1,.68) \rectanglepathtwo;
\draw [fill=blue!80!white,color=blue!80!white,pattern color=blue,pattern=grid] (-.1,.68) \rectanglepathone;

\draw [fill=red,color=red,pattern color=red,pattern=crosshatch] (-1.1,-3.21) \rectanglepaththree;

\draw [fill=blue!80!white,color=blue!80!white,pattern color=blue,pattern=grid] (3.3,0.1) \rectanglepaththree;

\draw [fill=blue!80!white,color=blue!80!white,pattern color=blue,pattern=grid] (6.8,0.1) \rectanglepathone;
\draw [fill=red,color=red,pattern color=red,pattern=crosshatch] (7.3,0.1) \rectanglepathtwo;

\draw [fill=green,color=gray,pattern=crosshatch dots,pattern color=green!75!black] (10.3,2.05) \rectanglepaththree;

\draw [fill=red,color=red,pattern color=red,pattern=crosshatch] (3.3,-2.04) \rectanglepaththree;

\draw [fill=red,color=red,pattern color=red,pattern=crosshatch] (6.8,-2.04) \rectanglepaththree;

\draw [fill=green,color=gray,pattern=crosshatch dots,pattern color=green!75!black] (10.3,-2.25) \rectanglepaththree;

\draw [fill=blue!80!white,color=blue!80!white,pattern color=blue,pattern=grid] (1.9,4.25) \rectanglepaththree;
\draw [fill=blue!80!white,color=blue!80!white,pattern color=blue,pattern=grid] (8.1,4.0) \rectanglepaththree;

\def\rectanglepath2{-- ++(.06cm,0cm) -- ++(0cm,.5cm) -- ++(-.06cm,0cm) -- cycle}

\draw [fill=black,color=black] (3.3,0.1) \rectanglepath2;
\draw [fill=black,color=black] (6.8,0.1) \rectanglepath2;

\draw [fill=black,color=black] (3.3,-2.04) \rectanglepath2;
\draw [fill=black,color=black] (6.8,-2.04) \rectanglepath2;

\end{tikzpicture}}
\end{center}
\caption{Test case $\TCa$ with verdicts}
\label{fig:tca}
\end{figure}

\ignore{
\begin{figure}[htbp]
\begin{center}
\scalebox{0.65}{
\begin{tikzpicture}[->,>=stealth',shorten >=1pt,auto,node distance=2cm,
                    semithick]

  \tikzstyle{every state}=[text=black]

  \node[state, fill=white] (A) {$\ell"_0$};
  \node[state, fill=white] (B) [right of=A, node distance=4cm,
yshift=.9cm] {$\ell"_1$};
  \node[state, fill=white] (C) [right of=B, node distance=3.5cm,
yshift=0cm] {$\ell"_2$};
  \node[state, fill=white] (D) [right of=C, node distance=3.5cm,
yshift=0cm] {$\accept_1$};
  \node[state, fill=white] (E) [below of=A, node distance=2cm, xshift=0cm]
{$\lfail$};
  \node[state, fill=white] (F) [right of=A, node distance=4cm,
yshift=-.9cm] {$\ell"_3$};
  \node[state, fill=white] (G) [right of=F, node distance=3.5cm,
yshift=0cm] {$\ell"_4$};
  \node[state, fill=white] (H) [right of=G, node distance=3.5cm,
yshift=0cm] {$\accept_2$};
  \node[state, fill=white, color=white] (A') [left of=A, node
distance=1.8cm, yshift=0cm] {};
  \node[fill=white] (Ge) [below of=A, node distance=3.2cm, xshift=6cm]
{\large{$\Fail = \{\lfail\}\times\setRnn \sqcup \{\ell"_1,\ell"_2\}\times ]0,\infty[ 
\,\sqcup\, \{\ell"_4\}\times ]1,\infty[$}};
  \node[fill=white] (He) [below of=A, node distance=2.5cm, xshift=6cm]
{\large{$\Inconc = \{\ell"_0\}\times [2,\infty[ \,\cup\,
\{\ell"_3\}\times]0,\infty[ \,\cup\, \{\ell"_4\}\times]0,1]$}};
  \node[fill=white] (Ie) [below of=A, node distance=1.8cm, xshift=6cm]
{\large{$\Pass = \{\accept_1,\accept_2\}\times \setRnn$}};

  \path (A') edge node [above] {} (A)
        (A) edge node [above,sloped] {$y=1, a!,\{y\}$} (B)
        (B) edge node [above] {$y=0, b?, \{y\}$} (C)
        (C) edge node [above] {$y=0, b?, \{y\}$} (D)
        (A) edge node [left] {$y\ge0, b?$} (E)
        (A) edge node [below,sloped] {$1<y<2, a!,\{y\}$} (F)
        (F) edge node [above] {$y=0, b?, \{y\}$} (G)
        (G) edge node [above] {$y=0, b?, \{y\}$} (H)
;

\end{tikzpicture}}
\caption{Test case $\TC$}
\label{fig:tc}
\end{center}
\end{figure}
}

\subsubsection*{Selection of $\TC$:}
So far, the construction of $\TC'$ determines $\Verdicts$, but does
not perform any selection of behaviors.  A last step consists in
trying to control the behavior of $\TC'$ in order to avoid $\Inconc$
states (thus stay in $\pref(\traces_{\accept^\axtp}(\AxTP))$),
because reaching $\Inconc$ means that $\Pass$ is unreachable, thus $\TP$ 
cannot be satisfied anymore.  
To this aim, guards of transitions of $\TC'$ are refined in the final
test case $\TC$ in two complementary ways.  First, transitions leaving
a verdict state ($\Fail$, $\Inconc$ or $\Pass$) are useless, because
the test case execution stops when a verdict is issued.  Thus for each
transition, the guard is intersected with the predicate characterizing
the set of valuations associated with $\None$ in the source location.
This does not change the verdict of traces.  
Second, transitions
arriving in $\Inconc$ states and carrying outputs can be avoided
(outputs are controlled by the test case), thus for any transition
labeled by an output, the guard is intersected with the predicate
characterizing $\None$ and $\Pass$ states in the target location (\ie~
states that are not in $\Inconc$, as $\Fail$ cannot be reached by an
output).  The effect is to suppress some traces leading to $\Inconc$
states.  
All in all, traces in $\TC$ are exactly those of $\TC'$ that
traverse only $\None$ states (except for the last state), 
and do not end in $\Inconc$ with an output.
This selection does not impact on the properties of test suites
(soundness, strictness, precision and exhaustiveness) as will be seen later.

\begin{exa}
Figure~\ref{fig:tc} represents the test case obtained after this selection phase.
One can notice that locations $\ell"_{11},\ell"_{12},\ell"_{13}$ and $\ell"_{21},\ell"_{22}$
have been removed since they can only be reached from $\Inconc$ states, 
thus a verdict will have been emitted before reaching those locations.
The avoidance of $\Inconc$ verdicts by outputs cannot be observed on this example.
However,  with a small modification of $\A$ 
consisting in adding initially the reception of an $a$  before one time unit, 
and not followed by two $b$'s but \eg~one $c$, the resulting transition 
labeled with $(0\leq y <1, a!)$ in $\TC'$ could be cut, producing the same $\TC$. 
\end{exa}




\begin{figure}[htbp]
\begin{center}
\scalebox{0.65}{
\begin{tikzpicture}[->,>=stealth',shorten >=1pt,auto,node distance=2cm,
                    semithick]

  \tikzstyle{every state}=[text=black]

  \node[state, fill=white] (A) {$\ell"_0$};
  \node[state, fill=white] (B) [right of=A, node distance=4cm,
yshift=1.2cm] {$\ell"_1$};

  \node[state, fill=white] (C) [right of=B, node distance=3.5cm,
yshift=0cm] {$\ell"_2$};

  \node[state, fill=white] (D) [right of=C, node distance=3.5cm,
yshift=0cm] {\scriptsize{$\accept_1$}};
  \node[state, fill=white] (E) [below of=A, node distance=2cm, xshift=0cm]
{$\lfail$};
  \node[state, fill=white] (F) [right of=A, node distance=4cm,
yshift=-.9cm] {$\ell"_3$};

  \node[state, fill=white] (G) [right of=F, node distance=3.5cm,
yshift=0cm] {$\ell"_4$};

  \node[state, fill=white] (H) [right of=G, node distance=3.5cm,
yshift=0cm] {\scriptsize{\bf $\accept_2$}};
  \node[state, fill=white, color=white] (A') [left of=A, node
distance=1.8cm, yshift=0cm] {};

  \node[fill=white] (Ge) [below of=A, node distance=4.45cm, xshift=7cm]
{\large{$\rouge{\Fail} = \{\lfail\}\times\setRnn \cup \{\ell"_3,\ell"_4\}\times (0,\infty) 
\,\cup\, \{\ell"_2\}\times (1,\infty)$}};
  \node[fill=white] (He) [below of=A, node distance=3.75cm, xshift=7cm]
{\large{$\bleu{\Inconc} = \{\ell"_0\}\times [2,\infty) \,\cup\, \{\ell"_2\}\times(0,1] \,\cup\,
\{\ell"_1\}\times(0,\infty) $}};
  \node[fill=white] (Ie) [below of=A, node distance=3.05cm, xshift=6cm]
{\large{$\verte{\Pass} = \{\accept_1,\accept_2\}\times \setRnn$}};

  \path (A') edge node [above] {} (A)
        (A) edge node [below,sloped] {$y=1, a!,\{y\}$} (F)
        (F) edge node [above] {$y=0, b?, \{y\}$} (G)
        (G) edge node [above] {$y=0, b?, \{y\}$} (H)
        (A) edge node [left] {$y\ge0, b?$} (E)
        (A) edge node [above,sloped] {$1<y<2, a!,\{y\}$} (B)
        (B) edge node [above] {$y=0, b?, \{y\}$} (C)
        (C) edge node [above] {$y=0, b?, \{y\}$} (D);

\def\rectanglepath{-- ++(.5cm,0cm) -- ++(0cm,.5cm) -- ++(-.5cm,0cm) -- cycle}
\def\rectanglepathone{-- ++(.5cm,0cm) -- ++(0cm,.5cm) -- ++(-.5cm,0cm) -- cycle}
\def\rectanglepathtwo{-- ++(1cm,0cm) -- ++(0cm,.5cm) -- ++(-1cm,0cm) -- cycle}
\def\rectanglepaththree{-- ++(1.5cm,0cm) -- ++(0cm,.5cm) -- ++(-1.5cm,0cm) -- cycle}

\draw [fill=black,color=black] (-1.1,.68) \rectanglepath;
\draw [fill=black,color=black] (-.6,.68) \rectanglepath;
\draw [fill=blue!80!white,color=blue!80!white,pattern color=blue,pattern=grid] (-.1,.68) \rectanglepathone;

\draw [fill=red,color=red,pattern color=red,pattern=crosshatch] (-1.1,-3.21) \rectanglepaththree;

\draw [fill=blue!80!white,color=blue!80!white,pattern color=blue,pattern=grid] (3.3,0.1) \rectanglepaththree;

\draw [fill=blue!80!white,color=blue!80!white,pattern color=blue,pattern=grid] (6.8,0.1) \rectanglepathone;
\draw [fill=red,color=red,pattern color=red,pattern=crosshatch] (7.3,0.1) \rectanglepathtwo;

\draw [fill=green!200,color=gray,pattern=crosshatch dots,pattern color=green!75!black] (10.3,2.05) \rectanglepaththree;

\draw [fill=red,color=red,pattern color=red,pattern=crosshatch] (3.3,-2.04) \rectanglepaththree;

\draw [fill=red,color=red,pattern color=red,pattern=crosshatch] (6.8,-2.04) \rectanglepaththree;

\draw [fill=green!200,color=gray,pattern=crosshatch dots,pattern color=green!75!black] (10.3,-2.25) \rectanglepaththree;

%

\def\rectanglepath2{-- ++(.06cm,0cm) -- ++(0cm,.5cm) -- ++(-.06cm,0cm) -- cycle}

\draw [fill=black,color=black] (3.3,0.1) \rectanglepath2;
\draw [fill=black,color=black] (6.8,0.1) \rectanglepath2;

\draw [fill=black,color=black] (3.3,-2.04) \rectanglepath2;
\draw [fill=black,color=black] (6.8,-2.04) \rectanglepath2;

\end{tikzpicture}}
\end{center}
\caption{Final test case $\TC$ after selection}
\label{fig:tc}
\end{figure}
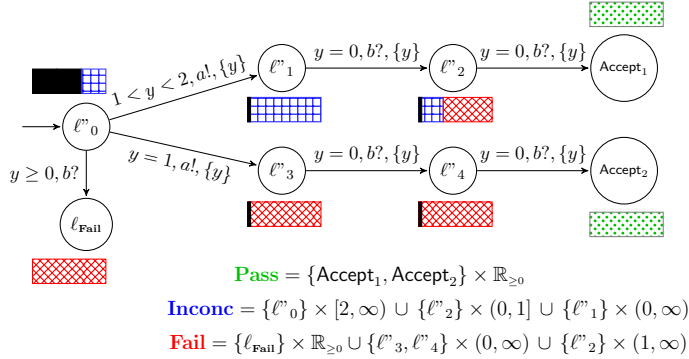

\begin{rem}
Notice that in the example, falling into $\Inconc$ in  
$\ell"_0$ could be avoided by adding the invariant $y<2$,
with the effect of forcing to output $a$.
More generally, invariants can be added to locations by rendering outputs 
urgent 
in order to avoid $\Inconc$, while taking care of keeping test cases non-blocking, \ie\ by ensuring that an output can be done just before the invariant becomes false.
More precisely, $I(\ell)$ is the projection of $\None$ on $\ell$ if 
$Inconc$ is reachable by letting time elapse and it preserves the non-blocking
property, $\true$ otherwise.
\end{rem}

\subsection*{Complexity}
 Let us discuss the complexity of the construction of $\TC$ from
  $\DP$. Note that the size of TAIO $\TC$ is linear in the size of
  $\DP$ but the difficulty lies in the computation of
  $\Verdicts$. Computing $\Pass$ is immediate. 
  The set $\coreach(\Pass)$ can be computed in polynomial time (more
  precisely in $\mathcal{O}(|L^\pdp| . |X^\pdp| . |M^\pdp|) $). To
  explain this, observe that guards in the TAIO $\DP$ are regions and
  with each location $\ell$ is associated an initial region $r_\ell$
  such that guards of transitions leaving $\ell$ are time successors of
  $r_\ell$. Thus during the computation of $\coreach(\Pass)$, for each
  location $\ell$, one only needs to consider these $ \mathcal{O}(
  |X^\pdp| . |M^\pdp|)$ different regions in order to determine the
  latest time-successor $r_\ell^{\max}$ of $r_\ell$ which is
  co-reachable from $\Pass$. Then $\None$ states with location $\ell$
  are exactly those within regions that are time-predecessors of
  $r_\ell^{\max}$. For the same reason (number of possible guards
  outgoing a given location) $E_{\lfail}$ can be computed in
  polynomial time. Last the $\Fail$ verdicts in locations (except for
  $\lfail$) are computed in linear time by complementing the
  invariants in $\DP$.
  The test selection can be done by inspecting all transitions: a
  transition is removed if either the source state is a verdict state,
  or it corresponds to an output action and the successor are
  $\Inconc$ states. This last step thus only requires linear time.
  To conclude, the overall complexity of construction of $\TC$ from
  $\DP$ is polynomial.  

\subsection{Test suite properties}
We have presented the different steps for the generation of 
a TAIO test case from a TAIO specification and an OTAIO test purpose.
The following results express their properties.

\begin{thm}\label{theorem-TC}
  Any test case $\TC$ built by the procedure is sound for $\A$.
  Moreover, if $\DP$ is an exact approximation of $\AxTP$ (\ie
  $\traces(\DP) = \traces(\AxTP)$), the test case $\TC$ is also strict
  and precise for $\A$ and $\TP$.
\end{thm}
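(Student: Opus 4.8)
The plan is to reduce all three properties to statements about the deterministic io-abstraction $\DP$, reusing the identities already established: $\traces(\AxTP)=\traces(\A)$ (so $\A$ and $\AxTP$ have the same conformant implementations), $\AxTP\preceq\DP$ (Proposition~\ref{altSim}), and, when $\DP$ is exact, $\traces(\DP)=\traces(\AxTP)$ together with $\traces_{\accept^\pdp}(\DP)=\traces_{\accept^\axtp}(\AxTP)$. Two auxiliary observations do most of the work. First, a \emph{structural} fact relating $\TCa$ — hence, after the selection step, $\TC$ — to $\DP$: since $E_\Inv^\pdp$ is exactly the transition relation of $\DP$ with the source invariant folded into each guard, and $\TCa$ has only trivial invariants, any trace of $\TCa$ whose run never enters a $\Fail$ state is a trace of $\DP$ reaching the very same state (using determinism of $\DP$), and conversely every run of $\DP$ lifts to such a run of $\TCa$; moreover $\Fail$ is absorbing, a state $(\ell,v)$ of $\TCa$ with $v\models\Inv^\pdp(\ell)$ lies in $\Pass$ iff $\ell\in\accept^\pdp$, and, by the treatment of invariants in $\DP$ (checked at both ends of every edge), $\TCa$ can only enter $\Fail$ by reading from $\Imp$ an output unspecified in $\DP$ (an $E_{\lfail}$-edge into $\lfail$) or by letting time elapse past an invariant of $\DP$. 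Second, for any TAIO $\B$, $out(\B\after\sigma)$ depends only on $\traces(\B)$ — concretely $a\in out(\B\after\sigma)$ iff $\sigma.a\in\traces(\B)$ for an output $a$, and $\delta\in out(\B\after\sigma)$ iff $\sigma$ with its last delay increased by $\delta$ lies in $\traces(\B)$ — so the exactness identity $\traces(\A)=\traces(\DP)$ yields $out(\A\after\sigma)=out(\DP\after\sigma)$ for every $\sigma$.

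\emph{Soundness.} By $\AxTP\preceq\DP$ and Corollary~\ref{cor} it is enough to show $\{\TC\}$ is sound for $\DP$, since it is then sound for $\AxTP$, hence for $\A$ (same conformant implementations). Assume $\sigma\in\traces(\Imp)\cap\traces(\TC)$ with ${\tt Verdict}(\sigma,\TC)=\Fail$. The selection step does not enlarge the trace set and does not change the verdict of a retained trace, so the run of $\TCa$ on $\sigma$ also enters $\Fail$; let $\sigma_1$ be a prefix of $\sigma$ with $\sigma_1\in\traces(\DP)$, $\DP\after\sigma_1=\{(\ell,v)\}$, such that $\sigma_1$ extended by one observation $o$ is a prefix of $\sigma$ reaching $\Fail$ in $\TCa$ (such a decomposition exists by the structural fact). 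By that fact $o$ is either an output $a$ with $a\notin out(\DP\after\sigma_1)$ (the $E_{\lfail}$-edge, whose guard is the complement of the disjunction of $a$-guards of $\DP$ leaving $\ell$) or a delay $\delta\notin elapse((\ell,v))=out(\DP\after\sigma_1)\cap\setRnn$. In either case $o$ is produced by $\Imp$ after $\sigma_1$, so $out(\Imp\after\sigma_1)\not\subseteq out(\DP\after\sigma_1)$ with $\sigma_1\in\traces(\DP)$, i.e.\ $\neg(\Imp\,\tioco\,\DP)$.

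\emph{Precision and strictness, assuming $\DP$ exact.} For precision, fix $\sigma\in\traces(\TC)$; its verdict equals its verdict in $\TCa$. If that verdict is $\Pass$ then $\TCa\after\sigma=(\ell,v)$ with $\ell\in\accept^\pdp$, $v\models\Inv^\pdp(\ell)$, and the run never visited the absorbing $\Fail$, so by the structural fact $\sigma\in\traces_{\accept^\pdp}(\DP)$; conversely if $\sigma\in\traces(\TC)$ and $\sigma\in\traces_{\accept^\pdp}(\DP)$, then $\DP$ reaches a $\accept^\pdp$-location after $\sigma$ and, by determinism and the lifting direction of the structural fact, so does $\TCa$, landing in $\Pass$. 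By exactness $\traces_{\accept^\pdp}(\DP)=\traces_{\accept^\axtp}(\AxTP)=\traces(\seq(\A\!\!\uparrow^{(X_p^\tp,X_o^\tp)})\cap\seq_{\accept^\tp}(\TP))$, which is Definition~\ref{def:precise}. For strictness, suppose $\neg(\Imp\|\TC\,\tioco\,\A)$: there are $\sigma\in\traces(\A)$ and $b\in out(\Imp\|\TC\after\sigma)\setminus out(\A\after\sigma)$. Since $\traces(\A)=\traces(\DP)$ and $\traces(\Imp\|\TC)=\traces(\Imp)\cap\traces(\TC)$, we get $\sigma\in\traces(\DP)\cap\traces(\TC)$, the extension $\sigma.b$ is a trace of $\Imp$ and of $\TC$, and (exactness, plus the $out$-from-traces fact) $b\notin out(\DP\after\sigma)$, so $\sigma.b\notin\traces(\DP)$. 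Hence in $\TCa$, reading $b$ from the state $(\ell,v)$ reached after $\sigma$ leaves the part mirroring $\DP$: if $b$ is an output of $\A$ (an input of $\TCa$) then either an enabled $E_\Inv^\pdp$-edge on $b$ has — by determinism of $\DP$, since otherwise $\sigma.b$ would be a $\DP$-trace — a target violating its invariant, or the matching $E_{\lfail}$-edge fires, both landing in $\Fail$; if $b$ is a delay then $v+b\not\models\Inv^\pdp(\ell)$, again in $\Fail$. The selection step does not remove the relevant step: an output $b$ in $out(\Imp\|\TC\after\sigma)$ forces $(\ell,v)$ to be a $\None$ state (otherwise the $b$-transition, leaving a verdict state, would have been deleted), and time elapsing is never removed. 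So $\sigma.b$ is a trace of $\TC$ reaching $\Fail$ and a trace of $\Imp$, i.e.\ $\Imp\,{\tt fails}\,\TC$.

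The main obstacle is the structural fact linking $\TCa$ and the selected $\TC$ back to $\DP$: one has to verify that folding invariants into guards, adjoining the $\lfail$ sink, and later refining guards in the selection step neither enlarge the non-$\Fail$ reachable behaviour beyond that of $\DP$ nor alter verdicts, and in particular that $\TCa$ cannot enter $\Fail$ otherwise than through an output or a delay offered by the implementation (this last point is where the treatment of invariants in the determinization — verification at both ends of each edge — is used). Once this correspondence is in place, the rest is bookkeeping with the already-proved trace and accepted-trace identities and the elementary characterisation of $out(\cdot\after\sigma)$ in terms of traces.
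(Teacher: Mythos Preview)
Your proof is correct and follows essentially the same approach as the paper's own proof: reduce soundness to soundness for $\DP$ via Corollary~\ref{cor} and $\traces(\AxTP)=\traces(\A)$, and for strictness and precision exploit the exactness identities $\traces(\DP)=\traces(\A)$ and $\traces_{\accept^\pdp}(\DP)=\traces_{\accept^\axtp}(\AxTP)$ together with the construction of $\Fail$ and $\Pass$. You are considerably more explicit than the paper about the structural correspondence between $\TC$, $\TCa$, and $\DP$ and about why the selection step is harmless (the paper leaves these ``clear from construction''); one minor wrinkle is that your structural fact asserts $E_\Inv^\pdp$-edges never land in $\Fail$ while your strictness argument still case-splits on a target-invariant violation---harmless since both branches end in $\Fail$, but you could drop the redundant case.
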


The proof is detailed below, but we first give some intuition.
As a preamble, notice that, as explained in the paragraph on test selection, 
traces of $\TC'$ are not affected by the construction of $\TC$.
In particular, the transitions considered in the proof are identical in $\TC$ and $\TC'$.
Soundness comes from 
the construction of $E_{\lfail}$ in $\TC$ and preservation of soundness by the approximate determinization $\DP$ of $\AxTP$
given by Corollary~\ref{cor}.
When $\DP$ is an exact determinization of $\AxTP$,
$\DP$ and $\AxTP$ have same traces, which also equal traces of $\A$ 
since $\TP$ is complete. 
Strictness then comes from the fact that $\DP$ and $\A$ 
 have the same non-conformant traces,
which are captured by the definition of $E_{\lfail}$ in $\TC$.
Precision comes from $\traces_{\accept^\pdp}(\DP)=\traces_{\accept^\axtp}(\AxTP)$ and from the definition of $\Pass$.

When $\DP$ is not exact however, there is a risk that some
behaviors allowed in $\DP$ are not in $\AxTP$, thus some
non-conformant behaviors are not detected, even if they are executed
by $\TC$.
Similarly, some $\Pass$ verdicts may be produced for non-accepted 
or even non-conformant behaviors.
However, if a trace in $\traces_{\accept^\axtp}(\AxTP)$ is present in $\TC$ and observed during testing, 
a $\Pass$ verdict will be delivered. In other words, precision 
is not  always satisfied, but the ``only if'' direction of 
precision (Definition~\ref{def:precise}) is satisfied.

%

\begin{proof}

{\bf Soundness:}
To prove soundness, we need to show that for any $\Imp \in \Imp(\A)$,
$\Imp \, {\tt fails} \, \TC$ implies $\neg(\Imp \, \tioco \, \A)$.

Assuming that $\Imp \, {\tt fails} \, \TC$, there exists a trace
$\sigma \in \traces(\Imp) \cap \traces_{\pFail}(\TC)$.  By the
construction of the set $\Fail$ in $\TC$, there are two cases: either
$\sigma$ leads to a location $(\ell,\neg(\Inv(\ell))$ in $\DP$, or
$\sigma$ leads to a state with location $\lfail$.  In the first case,
$\sigma=\sigma'.\delta$ where $\sigma' \in \traces(\DP)$ and $\delta >
0$ violates the invariant in the location of $\DP \after \sigma'$, and
in the second case, by the construction of $E_{\lfail}$,
$\sigma=\sigma'.a$ where $\sigma' \in \traces(\DP)$ and $a \in
\Sigma_!^\pdp$ is unspecified in $\DP \after \sigma'$.  In both cases,
by definition, this means that $\neg (\Imp \, \tioco\,\DP)$, which
proves that $\TC$ is sound for $\DP$.  Now, as $\DP$ is an
io-abstraction of $\AxTP$ (\ie~ $\AxTP \preceq \DP$), by
Corollary~\ref{cor} this entails that $\TC$ is sound for $\AxTP$.
Finally, we have $\traces(\AxTP)=\traces(\A)$, which trivially implies
that $\A \preceq \AxTP$, and thus that $\TC$ is also sound for $\A$.

{\bf Strictness:}
For strictness, in the case where $\DP$ is an exact approximation of $\AxTP$, 
we need to prove that for any $\Imp \in \Imp(\A)$,
$\neg (\Imp \| \TC \, \tioco \, \A)$ implies that $\Imp \, {\tt fails} \,\TC$.
Suppose that  $\neg (\Imp \| \TC \, \tioco \, \A)$. 
By definition, there exists
$\sigma \in \traces(\A)$ and $a \in out(\Imp \| \TC  \after 
\sigma)$ such that $a \notin out(\A \after  \sigma)$.  
Since $\DP$ is an exact approximation of $\AxTP$, we have 
the equalities $\traces(\DP)
=\traces(\AxTP)=\traces(\A)$, thus $\sigma \in \traces(\DP)$ and $a
\notin out(\DP \after  \sigma)$.  
By construction of $\Fail$ in $\TC$, it
follows that $\sigma.a \in \traces_{\pFail}(\TC)$ which, together with
$\sigma.a \in \traces(\Imp)$, implies that $\Imp \, {\tt fails} \,\TC$.  Thus
$\TC$ is strict.  

{\bf Precision:}
To prove precision, in the case of exact determinization, 
we have to show that 
for any trace $\sigma$,
${\tt Verdict}(\sigma, \TC) = \Pass
\iff \sigma \in \traces(\seq_{\accept^\tp}(\TP) \cap \seq(\A))$.
The definition of 
$\Pass = \bigcup_{\ell\in\accept^\pdp} (\{\ell\} \times\Inv^\pdp(\ell))$  in $\TC$ 
implies that a $\Pass$ verdict is produced for $\sigma$ 
exactly when $\sigma \in \traces_{\accept^\pdp}(\DP)$ which equals
$\traces_{\accept^\axtp}(\AxTP)=
\traces(\seq_{\accept^\tp}(\TP) \cap \seq(\A))$   when $\DP$ is exact.
\end{proof}

\begin{exa}
The test case $\TC$ of Figure~\ref{fig:tc} comes from an approximate determinization. 
However, the approximation comes after reaching $\Inconc$ states.
More precisely, in the gray state of the game in Figure~\ref{ExJeu},
the approximation starts in the time interval $(2,\infty)$.
This state corresponds to location $\ell"_1$ in $\TC$
where the verdict is $\Inconc$ as soon as a non null delay is observed.
The test case is thus strict and precise, despite the over-approximation
in the determinization phase.
\end{exa}

In the following, we prove an exhaustiveness property of our test generation method when determinization is exact.
For technical reasons, we need to restrict to a sub-class of TAIOs defined below.
We discuss this restriction later.

\begin{defi}
We say that an OTAIO $\A$ is {\em repeatedly observable}
if from any state of $\A$, there is a future observable transition,
\ie~ 
$\forall s \in S^\A$, there exists $\mu$ such that 
$s \xrightarrow{\mu}$ and  $Trace(\mu)\notin \setRnn$.  
\end{defi}

\begin{thm}[Exhaustiveness]
\label{th:exhaustiveness}
Let $\A$ be a repeatedly observable TAIO which can be exactly
determinized by our approach.  Then the set of test cases that can be
generated from $\A$ by our method is exhaustive.
\end{thm}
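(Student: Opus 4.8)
The plan is to show that for any implementation $\Imp \in \Imp(\A)$ with $\neg(\Imp \; \tioco \; \A)$, there exists a test purpose $\TP$ such that the test case $\TC$ generated from $\A$ and $\TP$ satisfies $\Imp \; {\tt fails} \; \TC$. By definition of non-conformance, there is a trace $\sigma \in \traces(\A)$ and an action or delay $a \in out(\Imp \after \sigma)$ with $a \notin out(\A \after \sigma)$. Since a trace always ends with a (possibly null) delay, and using that $\Imp$ is non-blocking, we may assume $a$ is an output action $b! \in \Sigma_!^\pA$ (a forbidden delay can be converted to a forbidden output-or-delay observation; one takes the largest delay still permitted by $\A$ after $\sigma$ and observes strictly beyond it). The witness of non-conformance is then the trace $\sigma.b!$, which is a trace of $\Imp$ but not of $\A$.

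First I would construct a test purpose $\TP$ that \emph{selects} exactly the (prefixes of) behaviors of $\A$ whose trace is a prefix of $\sigma$. Since $\A$ is repeatedly observable, after $\sigma$ some observable transition is always available, so $\sigma$ is a genuine trace ending with an observable prefix in $\A$; we need $\TP$ to accept the sequences of $\A$ that project onto (a trace-prefix of) $\sigma$. Concretely, $\TP$ is an OTAIO observing all proper clocks of $\A$; its accepting locations are reached precisely along those runs of $\A\times\TP$ whose trace is $\sigma$. Building such a $\TP$ requires an automaton recognizing the single timed trace $\sigma$ — this is routine since $\sigma$ is a finite timed word over $\Sigma_{obs}^\pA \sqcup \setRnn$ with rational (indeed, we can take integer after scaling, or just finitely many) delays; completeness of $\TP$ is obtained by adding an ``othw'' sink as in Figure~\ref{ExObj}. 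Then $\traces_{\accept^\axtp}(\AxTP)$ contains $\sigma$ (and its trace-prefixes), and $\sigma \in \pref(\traces_{\accept^\axtp}(\AxTP))$.

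Next I would track $\sigma$ through the generation pipeline. Since $\A$ (hence $\AxTP$, which has $\traces(\AxTP)=\traces(\A)$) is exactly determinizable by assumption, so is $\AxTP$: the same winning strategy works, giving $\DP$ with $\traces(\DP)=\traces(\AxTP)=\traces(\A)$ and $\traces_{\accept^\pdp}(\DP)=\traces_{\accept^\axtp}(\AxTP)$. Therefore $\sigma \in \traces(\DP)$ and $\sigma$ is a prefix of an accepted trace of $\DP$, so the whole run of $\TC'$ along $\sigma$ stays in $\coreach(\DP,\Pass)$, i.e. in $\None$ states — so the selection phase does \emph{not} prune any transition along $\sigma$, and $\sigma \in \traces(\TC)$. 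Moreover $b! \notin out(\A \after \sigma) = out(\DP \after \sigma)$, so by the construction of $E_{\lfail}$ in $\TC$, the transition labeled $b!$ (which for $\TC$ is an input $b?$) from $\TC \after \sigma$ leads to $\lfail$; hence $\sigma.b! \in \traces_{\pFail}(\TC)$. Since $\sigma.b! \in \traces(\Imp)$ as well, we get $\sigma.b! \in \traces_{\pFail}(\TC) \cap \traces(\Imp)$, i.e. $\Imp \; {\tt fails} \; \TC$.

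The main obstacle I anticipate is the first step: arguing that the chosen $\TC'$ does not prune $\sigma$ during selection, and that $\sigma$ is not already diverted to $\Fail$ or $\Inconc$ \emph{before} reaching length $\sigma$. The former is where \emph{repeated observability} is used — it guarantees that from the state $\DP \after \sigma$ an observable action (leading toward a second, continuing observation and ultimately toward $\accept^\pdp$ via the tail of $\TP$) is reachable, so no state along $\sigma$ becomes $\Inconc$ prematurely; without it, $\sigma$ might sit in an $\Inconc$ region and the $b!$-to-$\lfail$ transition could be unreachable. One must also check $\sigma$ itself respects all invariants of $\DP$ (so no intermediate $\Fail$ by invariant violation): this holds because $\sigma \in \traces(\DP)$ by exact determinization, and $\Fail$-by-invariant states are exactly the complements $\{(\ell,\neg\Inv^\pdp(\ell))\}$, which are off $\traces(\DP)$. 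Packaging $\TP$ so that $\accept^\pdp$ is genuinely reachable past $\sigma$ (so that $\sigma$ lands in $\None$, not $\Inconc$) is the delicate bookkeeping; using repeated observability to extend $\sigma$ by one more observable action inside $\A$ and tailoring $\TP$ to accept that extension handles it.
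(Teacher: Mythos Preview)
Your overall architecture matches the paper's: from a non-conformance witness $\sigma.a$, build a test purpose $\TP$ accepting an observable extension $\sigma.\delta.b$ of $\sigma$ (this is where repeated observability is used), and argue that the resulting $\TC$ has $\TC\after\sigma\subseteq\None$ so that $\sigma.a\in\traces_{\pFail}(\TC)$. However, two of your steps are genuine gaps.

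First, your claim that ``the same winning strategy works'' to exactly determinize $\AxTP$ is unjustified. Exact determinizability of $\A$ gives you resources $(k,M)$ and a winning strategy $\Pi$ in $\mathcal{G}_{\pA,(k,M)}$, but $\AxTP$ has additional proper clocks $X_p^\tp$ and a larger location set, so the game $\mathcal{G}_{\AxTP,(k,M)}$ is a different object, and you also need $\traces_{\accept^\pdp}(\DP)=\traces_{\accept^\axtp}(\AxTP)$, not just trace equality. The paper handles this explicitly: it takes resources $(k',M')$ with $k'=k+|X_p^\tp|$ and $M'=\max(M,M^\tp)$, and defines a combined strategy $\Pi'$ that follows $\Pi$ on the first $k$ clocks and mimics the (deterministic) resets of $\TP$ on the remaining ones. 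This is the key technical step you are missing.

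Second, your construction of $\TP$ as ``an automaton recognizing the single timed trace $\sigma$'' is problematic: delays in $\sigma$ are arbitrary reals, not rationals, so you cannot encode them directly with integer-constant guards; your parenthetical about scaling does not help. The paper avoids this by building $\TP'$ from the \emph{region graph} of $\Imp\|\mathsf{Aut}(\Pi)$ along the path of $\sigma$, then completing it; the resulting $\TP$ has proper clocks $X_p^\tp=X_p^\imp\sqcup X^{\mathsf{Aut}(\Pi)}$ and integer guards. Finally, your reduction of the forbidden-delay case to a forbidden output $b!$ is not valid (an implementation may simply over-delay without ever emitting an output); the paper keeps $a\in\Sigma_!^\pA\sqcup\setRnn$ throughout and relies on the $\Fail$ states $(\ell,\neg\Inv^\pdp(\ell))$ for the delay case.
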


\begin{proof}
  Let
  $\A=(L^\pA,\ell_0^\pA,\Sigma_?^\pA,\Sigma_!^\pA,\Sigma_\tau^\pA,X_p^\pA,\emptyset,M^\pA,\Inv^\pA,E^\pA)$
  be the TAIO specification, and
  $\Imp=(L^\imp,\ell_0^\imp,\Sigma_?^{\pA},\Sigma_!^{\pA},\Sigma_\tau^\imp,X_p^\imp,\emptyset,M^\imp,\Inv^\imp,E^\imp)$
  any non-conformant implementation in $\Imp(\A)$.  The idea is
  now to prove that from $\A$ and $\Imp$, one can build a test purpose
  $\TP$ such that the test case $\TC$ built from $\A$ and $\TP$ may
  detect this non-conformance, \ie~$\Imp\, {\tt fails} \, \TC $.

  By definition of $\neg(\Imp \;\tioco\; \A)$, there exists $\sigma
  \in \traces(\A)$ and $a \in \Sigma_!^\pA \sqcup \setRnn$ such that
  $a \in out(\Imp \after \sigma)$ but $a \notin out(\A \after
  \sigma)$.  Since $\A$ is repeatedly observable, there also exists
  $\delta \in \setRnn$ and $b \in \Sigma_{obs}^\pA$ such that
  $\sigma.\delta.b \in \traces(\A)$.

  As $\A$ can be determinized exactly by our approach, there must
  exist some resources $(k,M)$ and a strategy $\Pi$ for
  Determinizator in the game $\mathcal{G}_{\pA,(k,M)}$ such that
  $\traces(\mathsf{Aut}(\Pi)) = \traces(\A)$.

  From the non-conformant implementation $\Imp$, a test purpose
  $(\TP,\accept^\tp)$ can be built, with $\TP=
  (L^\tp,\ell_0^\tp,\Sigma_?^{\pA},\Sigma_!^{\pA},\Sigma_\tau ^{\pA},
  X_p^\tp,X_o^\tp,M^\tp,\Inv^\tp,E^\tp)$, $X_p^\tp= X_p^\imp\sqcup
  X^{\mathsf{Aut}(\Pi)}$ and $X_o^\tp=\emptyset$, and
  $\sigma.\delta.b \in \traces_{\accept^\tp}$ but none of its prefixes
  is in $\traces_{\accept^\tp}$.  The construction of $\TP$ relies on
  the region graph of $\Imp\|\mathsf{Aut}(\Pi)$.  First a TAIO $\TP'$
  is built which recognizes exactly the traces read along the path
  corresponding to $\sigma$ in the region graph of
  $\Imp\|\mathsf{Aut}(\Pi)$, followed by a transition $b$ with the
  guard corresponding to the one in $\mathsf{Aut}(\Pi)$.  In
  particular it recognizes the trace $\sigma.\delta.b$.  The test
  purpose $(\TP,\accept^\tp)$ is then built such that $\TP$ accepts in
  its states $\accept^\tp$ the traces of $\TP'$.  Note that $\TP$
  should be complete for $\Sigma$, thus locations of $\TP'$ should be
  completed by adding loops without resets for all actions in
  $\Sigma_\tau$, and adding, for all observable actions, transitions
  to a trap location guarded with negations of their guards in $\TP'$.

  Now consider our test generation method applied to $\TP$ and $\A$.
  First $\AxTP=\A \times \TP$ is built, and we consider the game
  $\mathcal{G}_{\pA,(k',M')}$ with $k'= k + |X_p^\tp|$ and
  $M'=\max(M,M^\tp)$.  One can then define a strategy $\Pi'$ composed
  of the strategy $\Pi$ for the $k$ first clocks, and following the
  resets of $\TP$ (which is deterministic) for the other clocks
  corresponding to those in $X_p^\tp$.  The construction of
  $(\DP,\accept^\pdp)$ following the strategy $\Pi'$ thus ensures that
  $\traces(\DP) = \traces(\AxTP)$ and $\traces_{\accept^{\pdp}}(\DP) =
  \traces_{\accept^{\axtp}}(\AxTP)$.

  Finally, let $\TC$ be the test case built from $\DP$.  Observe that
  $\TC  \after \sigma.\delta.b \subseteq \Pass$, but $\TC  \after
  \sigma.\delta \not\subseteq \Pass$. As a consequence, $\TC \after
  \sigma \subseteq \None$. 
  Moreover we have $a \notin out(\A \after \sigma)$, hence 
  $\sigma.a \in \traces_{\pFail}(\TC)$ and as
  $\sigma.a \in \traces(\Imp)$, we can conclude that  $\Imp \; {\tt fails} \;
  \TC$.
\end{proof}

\subsection*{Discussion:}
The hypothesis that $\A$ is repeatedly observable is in fact not
restrictive for a TAIO that is determinizable by our approach.
Indeed, such a TAIO can be transformed into a repeatedly observable
one with same conformant implementations, by first determinizing it,
and then completing it as follows.  In all locations, a transition
labeled by an input is added, which goes to a trap state looping for
all outputs, and is guarded by the negation of the union of guards of
transitions for this input in the deterministic automaton.

When $\A$ cannot be determinized exactly, the risk is that 
some non-conformance may be undetectable.
However, 
the theorem can be generalized to non-determinizable automata with no resets on internal action.
Indeed, in this case, in the game with resources $(k,M)$, 
where $k$ is the length of the finite non-conformant trace $\sigma.a$,
the strategy consisting in resetting a new clock at each observable
action allows to remain exact until the observation of non-conformance
(see remark after Theorem~\ref{th:game}).  The proof of
theorem~\ref{th:exhaustiveness} can be adapted using this strategy.


\section{Discussion and related work}
\label{sec-discuss}

\subsection*{Alternative definitions of test purposes}
The definition of test purposes depends on the semantic level at which
behaviors to be tested are described (\eg~sequences, traces).  This
induces a trade-off between the precision of the description of
behaviors, and the cost of producing test suites.  In this work, test
purposes recognize timed sequences of the specification $\A$, by a
synchronization with actions and observed clocks.  They also have
their own proper clocks for additional precision.  The advantage is a
fine tuning of selection.  The price to be paid is that, for each test
purpose, the whole sequence of operations, including determinization
which may be costly, must be done.  An alternative is to define test
purposes recognizing timed traces rather than timed sequences.  In
this case, selection should be performed on a deterministic io-abstraction $\B$ of
$\A$ obtained by an approximate determinization of $\A$.  
Then, test purposes should not refer to $\A$'s clocks as these
are lost by the approximate determinization.  Test purposes should
then either observe $\B$'s clocks, and thus be defined after
determinization, or use only proper clocks in order not to depend on
$\B$, at the price of further restricting the expressive power of test
purposes.  In both cases, test purposes should preferably be
deterministic in order to avoid a supplementary determinization after
the product with $\B$.  The main advantage of these approaches is that
the specification is determinized only once, which reduces the cost of
producing a test suite.  However, the expressive power of test
purposes is reduced.

\subsection*{Test execution}
Once test cases are selected, it remains to execute them on a real
implementation.  As a test case is a TAIO, and not a simple timed
trace, a number of decisions still need to be taken at each state of
the test case: (1) whether to wait for a certain delay, or to receive
an input or to send an output (2) which output to send, in case there
is a choice.  It is clear that different choices may lead to different
behaviors and verdicts.  Some of these choices can be made either
randomly (\eg~choosing a random time delay, choosing between outputs,
etc), or can be pre-established according to user-defined strategies.
One such policy is to apply a technique similar to the control
approach of~\cite{DavidLarsenLiNielsen-ICST09} whose goal is to avoid
$\rtraces(\A,\TP)$.

Moreover, the tester's time observation capabilities are limited in
practice: testers only dispose of a finite-precision digital clock (a
counter) and cannot distinguish among observations which elude their
clock precision.  Our framework may take this limitation into account.
In~\cite{KrichenTripakis09} assumptions on the tester's digital clock
are explicitly modeled as a special TAIO called $Tick$, synchronized
with the specification before test generation, then relying to the
untimed case.  We could imagine to use such a $Tick$ automaton
differently, by synchronizing it with the resulting test case after
generation.

\subsection*{Related work}
As mentioned in the introduction, off-line test selection is in
general restricted to deterministic automata or known classes of
determinizable timed automata.  An exception is the work
of~\cite{KrichenTripakis09} which relies on an over-approximate
determinization.  Compared to this work, our approximate
determinization is more precise (it is exact in more cases), it copes
with outputs and inputs using over- and under-approximations, and
preserves urgency in test cases as much as possible.  Another
exception is the work of~\cite{DavidLarsenLiNielsen-ICST09}, where the
authors propose a game approach whose effect can be understood as a
way to completely avoid $\rtraces(\A,\TP)$, with the possible risk of
missing some or even all traces in
$\pref(\traces_{\accept^\axtp}(\AxTP))$.  Our selection, which allows
to lose this game and produce an $\Inconc$ verdict when this happens,
is both more liberal and closer to usual practice.

In several related
works~\cite{KoneCastanetLaurencot-IC3N98,EnNouaryDssouli-Testcom2001},
test purposes are used for test case selection from TAIOs.  In all
these works, test purposes only have proper clocks, thus cannot
observe clocks of the
specification. 
\ignore{ The advantage of our definition is its generality and a fine
  tuning of selection.  One could argue that the cost of producing a
  test suite can be heavy, as for each test purpose, the whole
  sequence of operations, including determinization, must be done.  In
  order to avoid this, an alternative would be to define test purposes
  recognizing timed traces and perform selection on the approximate
  determinization $\B$ of $\A$.  But then, the test purpose should not
  use $\A$'s clocks as these are lost by determinization.  Then, test
  purposes are either defined after determinization and observe $\B$'s
  clocks, or their expressive power is further restricted by using
  only proper clocks in order not to depend on $\B$.  }

It should be noticed that selection by test purposes can be used for 
test selection with respect to coverage criteria~\cite{Zhu-Hall-May-97}.
Those coverage criteria define a set of  elements (generally syntactic ones) 
to be covered (\eg~locations, transitions, branches, etc). 
Each element can then be translated into a test purpose, 
the produced test suite covering the given criteria.

\section{Conclusion}
\label{sec-conclusion}

In this article, we presented a complete formalization and operations
for the automatic off-line generation of test cases from
non-deterministic timed automata with inputs and outputs (TAIOs).  The
model of TAIOs is general enough to take into account non-determinism,
partial observation and urgency.  One main contribution is the ability
to tackle any TAIO, thanks to an original approximate determinization
algorithm.  Another main contribution is the selection of test cases
with expressive test purposes described as OTAIOs having the ability
to precisely select behaviors to be tested based on clocks and actions
of the specification as well as proper clocks.  Test cases are
generated as TAIOs using a symbolic co-reachability analysis of the
observable behaviors of the specification guided by the test purpose.

A first perspective of this work is to implement the approach in a
test generation tool.  Currently, the approximate determinization has
been prototyped in Python thanks to a binding of the UPPAAL DBM
library~\cite{uppaal-dbm}.  Other perspectives could be to combine
this approach with the one of~\cite{jeannet05a} for models with data,
for the generation of test cases from models with both time and data
in the spirit of~\cite{amost}, but generalized to non-deterministic
models.

\paragraph{Acknowledgements:} we would like to thank the reviewers for their constructive comments 
that allowed us to improve this article.

\bibliographystyle{abbrv}
\bibliography{tacas}

\end{document}